\documentclass[11pt]{article}

\usepackage[dvipsnames]{xcolor}
\usepackage[margin=0.8in]{geometry}

\usepackage{amsmath,amssymb,amsfonts,amsthm,mathtools}
\usepackage{thmtools,thm-restate}
\makeatletter
\@ifundefined{newcounteralias}{}{%
  \renewcommand\thmt@autorefsetup{%
    \@xa\def\csname\thmt@envname autorefname\@xa\endcsname\@xa{\thmt@thmname}}%
}
\makeatother
\usepackage{braket}
\usepackage{xcolor}
\usepackage{tikz}
\usepackage{comment}
\allowdisplaybreaks[1]

\usepackage[english]{babel}
\usepackage[T1]{fontenc}
\AtBeginDocument{%
  \DeclareFontShape{T1}{cmr}{m}{scit}{<->ssub*cmr/m/sc}{}%
}
\usepackage{enumerate}
\usepackage{enumitem}
\usepackage{float}
\usepackage[title]{appendix}
\usepackage{adjustbox}
\usepackage{indentfirst}
\usepackage{booktabs}
\usepackage{array}
\usepackage{colortbl}
\usepackage{subfigure}
\usepackage{tikz}
\usetikzlibrary{arrows.meta}
\usepackage{caption}
\usepackage[algo2e,ruled,vlined,linesnumbered]{algorithm2e}
\DeclareCaptionLabelSeparator{dqibar}{\enspace|\enspace}

\usepackage[
  pagebackref,
  colorlinks,
  bookmarks=true,
  linkcolor=red,
  citecolor=blue,
  urlcolor=magenta
]{hyperref}
\hypersetup{
  breaklinks=true,
  hypertexnames=false,
  pdfauthor={Kewen Wu and Guangxu Yang},
  pdftitle={Exponential quantum advantages for decoded quantum interferometry in the streaming setting}
}

\usepackage[capitalize,noabbrev,nameinlink]{cleveref}
\usepackage{footnotehyper}
\crefname{claim}{Claim}{Claims}
\Crefname{claim}{Claim}{Claims}
\crefname{fact}{Fact}{Facts}
\Crefname{fact}{Fact}{Facts}
\crefname{problem}{Problem}{Problems}
\Crefname{problem}{Problem}{Problems}
\crefname{algorithm}{Algorithm}{Algorithms}
\Crefname{algorithm}{Algorithm}{Algorithms}
\crefname{algocf}{Algorithm}{Algorithms}
\Crefname{algocf}{Algorithm}{Algorithms}

\declaretheorem[name=Theorem,numberwithin=section,style=plain]{theorem}
\declaretheorem[name=Lemma,sibling=theorem,style=plain]{lemma}

\declaretheorem[name=Claim,sibling=theorem,style=plain]{claim}
\declaretheorem[name=Fact,sibling=theorem,style=plain]{fact}
\declaretheorem[name=Proposition,sibling=theorem,style=plain]{proposition}
\declaretheorem[name=Definition,sibling=theorem,style=plain]{definition}
\declaretheorem[name=Problem,sibling=theorem,style=definition]{problem}

\numberwithin{equation}{section}

\newcommand{\Hrm}{\mathrm{H}}

\newcommand{\Fbb}{\mathbb{F}}
\newcommand{\Zbb}{\mathbb{Z}}

\newcommand{\Fsf}{\mathsf{F}}

\newcommand{\Dcal}{\mathcal{D}}
\newcommand{\Ecal}{\mathcal{E}}

\newcommand{\Gcal}{\mathcal{G}}
\newcommand{\Hcal}{\mathcal{H}}

\newcommand{\Lcal}{\mathcal{L}}
\newcommand{\Pcal}{\mathcal{P}}
\newcommand{\Qcal}{\mathcal{Q}}

\newcommand{\Tcal}{\mathcal{T}}

\newcommand{\abs}[1]{\left\lvert #1\right\rvert}
\newcommand{\1}{\mathbf{1}}

\newcommand{\inc}{\operatorname{inc}}
\newcommand{\npass}{p}
\newcommand{\polylog}{\operatorname{polylog}}
\DeclareMathOperator*{\E}{\mathbb{E}}
\DeclareMathOperator{\im}{im}

\newcommand{\wt}{\operatorname{wt}}

\newcommand{\OPT}{\operatorname{OPT}}
\newcommand{\poly}{\operatorname{poly}}
\newcommand{\err}{\operatorname{err}}
\newcommand{\codim}{\operatorname{codim}}
\newcommand{\score}{\operatorname{score}}
\newcommand{\Dec}{\operatorname{Dec}}
\newcommand{\Var}{\operatorname{Var}}
\newcommand{\ent}{\mathrm{H}}
\newcommand{\rv}[1]{\mathbf{#1}}

\renewcommand{\epsilon}{\varepsilon}

\title{Exponential quantum advantages for decoded quantum interferometry in the streaming setting}

\author{
Kewen Wu\thanks{California Institute of Technology. Email: \texttt{shlw\_kevin@hotmail.com}.}
\and
Guangxu Yang\thanks{University of Southern California. Email: \texttt{guangxuy@usc.edu}. Research supported by NSF CAREER award 2141536.}
}

\date{}

\begin{document}

\maketitle

\begin{abstract}
Decoded quantum interferometry (DQI) is a polynomial-time quantum algorithm introduced by Jordan et al. (Nature 2025). For a natural optimization problem, known as optimal polynomial intersection (OPI), it achieves approximation guarantees in regimes where all known classical algorithms require exponential time.

Besides time, space is another central resource: storing and manipulating a massive input can be very challenging, especially when logical qubits carry substantial fault-tolerant implementation overhead.
This motivates the following question.
\begin{center}
\it
Does DQI yield quantum advantages in memory, and can we prove it unconditionally?
\end{center}
We give an affirmative answer to this question in the streaming setting.
In particular, we consider a natural generalization of OPI using Hermite interpolation and Hasse derivatives, which asks for a low-degree polynomial satisfying as many constraints on its values and derivatives as possible.
As a concrete example, we show
\begin{itemize}
\item \textbf{Quantum efficiency.}
An adaptation of the DQI algorithm produces a polynomial satisfying $93\%$ of the constraints; moreover, it only reads the input stream in one pass, uses polylogarithmic space, and has polylogarithmic computation time per stream entry.
\item \textbf{Classical hardness.}
Any classical algorithm that produces an answer satisfying just $76\%$ of the constraints requires polynomial space, even if it can read the input stream with polynomially many passes and can use unlimited time.
\end{itemize}
Our result provides a complete tradeoff curve for the tunable parameters, and implies that DQI has provable quantum advantages for the original OPI problem.
\end{abstract}

\vspace{10pt}
\par\begingroup
\noindent\begin{minipage}{\linewidth}
\centering
\resizebox{0.90\linewidth}{!}{
\begin{tikzpicture}[
  x={(5.6cm,-0.15cm)}, y={(4.6cm,1.8cm)}, z={(0cm,2.9cm)},
  font=\sffamily\small, line join=round, line cap=round]
\definecolor{titleQuantum}{HTML}{147DAD}
\definecolor{titleClassical}{HTML}{D76C27}
\definecolor{titleInk}{HTML}{384451}
\definecolor{titleMemoryAxis}{HTML}{A43555}
\definecolor{titlePassAxis}{HTML}{297346}
\definecolor{titleTimeAxis}{HTML}{6B3F98}

\def\quantumMemory{0.28}
\def\quantumTime{0.08}
\def\classicalTime#1#2{1.99-0.35*(#1-\quantumMemory)-0.70*(#2)}
\pgfmathsetmacro{\boundaryScale}{1.3/(1/\quantumMemory-1/1.45)}
\pgfmathsetmacro{\boundaryOffset}{\boundaryScale/1.45}
\def\passBoundary#1{\boundaryScale/(#1)-\boundaryOffset}

\fill[titleInk!3] (0,0,0) -- (1.55,0,0) -- (1.55,1.35,0)
  -- (0,1.35,0) -- cycle;
\foreach \v in {\quantumMemory,1,1.45} {
  \draw[white,line width=0.6pt] (\v,0,0) -- (\v,1.35,0);
}
\foreach \v in {0.6,1.3} {
  \draw[white,line width=0.6pt] (0,\v,0) -- (1.55,\v,0);
}

\foreach \i in {0,...,39} {
  \pgfmathsetmacro{\xa}{\quantumMemory+(1.45-\quantumMemory)*\i/40}
  \pgfmathsetmacro{\xb}{\quantumMemory+(1.45-\quantumMemory)*(\i+1)/40}
  \pgfmathsetmacro{\shade}{20+18*\i/39}
  \path[fill=titleClassical!\shade,draw=none]
    (\xa,{\passBoundary{\xa}},{\classicalTime{\xa}{\passBoundary{\xa}}})
    -- (\xb,{\passBoundary{\xb}},{\classicalTime{\xb}{\passBoundary{\xb}}})
    -- (\xb,1.35,{\classicalTime{\xb}{1.35}})
    -- (\xa,1.35,{\classicalTime{\xa}{1.35}}) -- cycle;
}
\fill[titleClassical!38] (1.45,0,{\classicalTime{1.45}{0}})
  -- (1.55,0,{\classicalTime{1.55}{0}})
  -- (1.55,1.35,{\classicalTime{1.55}{1.35}})
  -- (1.45,1.35,{\classicalTime{1.45}{1.35}}) -- cycle;
\foreach \v in {0.4,0.6,0.8,1,1.2,1.45} {
  \draw[titleClassical!45,line width=0.4pt]
    (\v,{\passBoundary{\v}},{\classicalTime{\v}{\passBoundary{\v}}})
    -- (\v,1.35,{\classicalTime{\v}{1.35}});
}
\foreach \v in {0.3,0.6,0.9,1.3} {
  \pgfmathsetmacro{\xm}{\boundaryScale/(\v+\boundaryOffset)}
  \draw[titleClassical!45,line width=0.4pt]
    (\xm,\v,{\classicalTime{\xm}{\v}})
    -- (1.55,\v,{\classicalTime{1.55}{\v}});
}
\draw[titleClassical,line width=1.2pt]
  plot[domain=\quantumMemory:1.45,samples=65]
    (\x,{\passBoundary{\x}},{\classicalTime{\x}{\passBoundary{\x}}});

\draw[titleInk,-{Stealth[length=4pt]},line width=0.7pt]
  (0,0,0) -- (1.63,0,0);
\draw[titleInk,-{Stealth[length=4pt]},line width=0.7pt]
  (0,0,0) -- (0,1.46,0);
\draw[titleInk,-{Stealth[length=4pt]},line width=0.7pt]
  (0,0,0) -- (0,0,2.07);
\coordinate (quantumPoint) at (\quantumMemory,0,\quantumTime);
\coordinate (classicalPoint) at (1.45,0,{\classicalTime{1.45}{0}});
\fill[titleQuantum] (quantumPoint) circle (3pt);

\foreach \v/\lab/\labelShift in {\quantumMemory/Polylogarithmic/-3pt,1.45/Polynomial/0pt} {
  \draw[titleInk!65] (\v,0,0) -- ++(0,-0.055,0);
  \node[text=titleMemoryAxis,anchor=north,xshift=\labelShift,yshift=-7pt,font=\sffamily\footnotesize]
    at (\v,0,0) {\lab};
}
\foreach \v/\lab in {0.6/Polylogarithmic,1.3/Polynomial} {
  \draw[titleInk!65] (0,\v,0) -- ++(-0.025,0,0);
  \node[text=titlePassAxis,anchor=south,rotate=21.4,font=\sffamily\footnotesize]
    at (0,\v,0.07) {\lab};
}
\node[text=titlePassAxis,anchor=south,rotate=21.4,font=\sffamily\footnotesize]
  at (0,0.19,0.07) {One};
\foreach \v/\lab in {0.08/Polylogarithmic,0.52/Polynomial,1.95/Quasipolynomial} {
  \draw[titleInk!65] (0,0,\v) -- ++(-0.025,0,0);
  \node[text=titleTimeAxis,anchor=east,font=\sffamily\footnotesize]
    at (-0.035,0,\v) {\lab};
}

\begin{scope}[x=1cm,y=1cm]
  \node[text=titleMemoryAxis,anchor=north,font=\sffamily\bfseries]
    at (10.2,0) {Memory};
  \node[text=titlePassAxis,anchor=south,rotate=21.4,font=\sffamily\bfseries]
    at (7,2) {Data passes};
  \node[text=titleTimeAxis,anchor=south,align=center] at (0.7,5.475)
    {\textbf{Time}\footnotemark};
  \node[text=titleQuantum,anchor=west,align=left,xshift=6pt,yshift=5pt]
    at (quantumPoint)
    {\textbf{Quantum}\enspace\textbf{>93\%}
     {\footnotesize constraints satisfied}};
  \node[text=titleClassical,anchor=west,align=left] at (8.5,3.725)
    {\textbf{Classical}\enspace\textbf{<76\%}
     {\footnotesize constraints satisfied}};
  \draw[titleClassical!65,line width=0.6pt]
    (8.35,3.895) -- (classicalPoint);
\end{scope}
\end{tikzpicture}
}\par
\end{minipage}
\footnotetext{Time denotes processing per entry, including preprocessing and final postprocessing. All known classical algorithms run
in quasipolynomial time to obtain $>75\%$ approximation.}
\par\endgroup

\newpage
\tableofcontents
\newpage

\section{Introduction}\label{sec:introduction}

Quantum computing offers substantial advantages over classical computing, and the prospect of such advantages for optimization has motivated a number of works~\cite{FarhiEtAl2025HighGirthMaxCut,
PirnayEtAl2024InPrincipleAdvantage,
GilyenHastingsVazirani2021AdiabaticAdvantage}.
One such approach is decoded quantum interferometry (DQI)~\cite{jordan2025dqi}. It applies to optimal polynomial intersection (OPI), a polynomial fitting problem over finite fields: each evaluation point comes with a set of desired values, and the goal is to find a polynomial of bounded degree whose evaluation is acceptable at as many points as possible. 
DQI is a polynomial-time quantum algorithm and achieves optimization guarantees for OPI in regimes where all known classical algorithms require exponential time. Recent improvements further support OPI as a candidate for practical, classically verifiable quantum advantage~\cite{khattar2025circuits,rosmanis2026}, though establishing this unconditionally remains a major open challenge in complexity theory.

Aside from time, space is another central resource, especially when logical qubits carry substantial fault-tolerant implementation overhead~\cite{FowlerMariantoniMartinisCleland2012}. This motivates our first question:
\begin{center}
\textit{Does DQI yield quantum space advantage, and can we prove it unconditionally?}
\end{center}
Streaming algorithms provide a natural framework for studying space-efficient computation~\cite{LeGall2006, NayakTouchette2017}. Motivated by the need to process massive datasets~\cite{BabcockBabuDatarMotwaniWidom2002, Muthukrishnan2005}, the streaming model presents input elements sequentially and has found many applications, including computing statistics over data streams~\cite{FlajoletMartin1985, AlonMatiasSzegedy1999} and estimating parameters of massive graphs~\cite{BarYossefKumarSivakumar2002, McGregor2014}.

The number of data passes is a third important resource: revisiting the input many times can reduce memory requirements. Indeed, one-pass quantum streaming advantages are known for estimating the
optimum values of Max-DiCut~\cite{kallaugher2024exponential} and Max-2SAT~\cite{wang2026exponential}, but
constant-pass classical algorithms can already match their performances~\cite{saxena2025streaming,fei2026dichotomy}, raising a second question:\footnote{Similar questions have been studied in different settings and for contrived problems \cite{ggjl2025,girish2022quantum}. We focus on more natural problems: OPI and its Hermite generalization.}
\begin{center}
\textit{Can one-pass quantum algorithms outperform many-pass classical algorithms for natural problems?}
\end{center}

We answer both questions affirmatively in one shot.\footnote{Obviously, this is not saying that the results are proved by an LLM in one shot. See \Cref{sec:open-problems} for AI disclosure.}
We adapt DQI to the streaming setting and prove rigorous classical lower bounds for OPI. Let $N$ be the stream length and let $\alpha\in(0,1]$ be the approximation guarantee: larger $\alpha$ is better and $\alpha=100\%$ finds the exact optimum.

\begin{theorem}[Quantum advantage for OPI]\label{thm:headline-opi}
For some constant $c>0$, the following holds for finding an $\alpha$-approximate solution for OPI in the streaming setting.
\begin{itemize}[labelsep=5pt,leftmargin=15pt,topsep=4pt,itemsep=0pt]
\item $\alpha=99.999\%$.
A one-pass $\widetilde O(\sqrt N)$-space $\exp\bigl(\widetilde O(\sqrt N)\bigr)$-time quantum algorithm suffices.
\item $\alpha=93.301\%$. 
A one-pass $\widetilde O(\sqrt N)$-space $\poly(N)$-time quantum algorithm suffices.
\item $\alpha=75.001\%$. 
Provably no $N^c$-pass $N^{1/2+c}$-space classical algorithm exists, even with unlimited time.
\end{itemize}
\end{theorem}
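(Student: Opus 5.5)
The plan is to derive all three items from the Hermite generalization (values plus Hasse derivatives up to some order $m$) by specializing parameters. Here $N$ is the stream length, $q$ the field size, $n$ the degree bound, and $\ell$ the decoding radius. The reduction back to plain OPI goes as follows. Each Hermite constraint at a point $x$ involves the first $m$ Hasse derivatives of $f$ at $x$. For $m$ about $\sqrt N$, such a constraint block can be presented as one OPI-type constraint over an extension or product structure; equivalently, we buffer $\widetilde O(\sqrt N)$ symbols, which is exactly the $\sqrt N$ space overhead in the statement. So I would first fix a streaming encoding of OPI in which the instance is split into about $\sqrt N$ blocks of about $\sqrt N$ entries. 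I would then check that both the quantum algorithm and the lower bound survive this encoding with at most polylogarithmic loss.

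\textbf{Quantum upper bounds.} The DQI state is $\sum_k w_k \binom{N}{k}^{-1/2}\sum_{|y|=k}\widetilde{f}(y)\ket{y}\ket{B^\top y}$, where $B$ is the (generalized Reed--Solomon, or its multiplicity-code dual) parity matrix. The key streaming observation is that $B^\top y$ is a linear sketch, so it can be accumulated entry by entry in one pass. After that, one runs the syndrome decoder (Berlekamp--Welch, or the multiplicity-code decoder) to uncompute $y$. I would prove that the decoder needs only the syndrome plus $O(\ell)$ field elements of workspace, which gives $\widetilde O(\sqrt N)$ qubits when $\ell=\Theta(\sqrt N)$. The approximation ratio then comes from the semicircle-law formula of Jordan et al., $\langle s\rangle/m=\bigl(\sqrt{(\ell/m)(1-r/p)}+\sqrt{(r/p)(1-\ell/m)}\bigr)^2$, evaluated at the relevant ratio. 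With a polynomial-time decoder correcting up to half the distance, the value at $r/p=1/2$ is $93.301\%=(2+\sqrt3)/4$. For $99.999\%$, I would replace the efficient decoder with a brute-force maximum-likelihood decoder that corrects nearly up to the distance. That decoder costs $\exp(\widetilde O(\sqrt N))$ time but only $\widetilde O(\sqrt N)$ space, obtained by enumerating candidate errors in place.

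\textbf{Classical lower bound.} I would reduce from a multi-pass communication or streaming lower bound for a hidden-structure problem, in the style of hidden matching or Boolean hidden hypermatching, extended to $p$ passes via the known round-elimination or information-complexity bounds. The instance distribution is a random OPI instance with acceptable sets of density $1/2$, so a random polynomial satisfies $75\%$ in expectation. A planted polynomial $f^*$ is hidden, with the blocks split among players, and any algorithm achieving $75\%+\epsilon$ must reveal $\Omega(\epsilon)$ correlation with $f^*$ on a fresh coordinate. Using a memory-to-communication simulation, $N^c$ passes with $s$ space give $N^c$ rounds of $s$-bit messages. The intended lower bound is $s\cdot N^{c}\ge \Omega(N^{1/2+2c})$, which would be proved through a Fourier or hypercontractivity argument on the degree-$\sqrt N$ structure.

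I expect this last step to be the main obstacle. Making the $75\%$ threshold tight requires showing that any $p$-pass small-space algorithm outputs a polynomial whose agreement is at most the random baseline plus $o(1)$. That needs a distributional statement about the output, which is stronger than a decision lower bound. I would first try a direct-sum or XOR-lemma argument over independent blocks: an algorithm beating $75\%$ must predict many blocks' hidden bits simultaneously, and the per-block multi-pass advantage is bounded by $\exp(-\Omega(N^{c}))$.
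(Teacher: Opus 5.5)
The main gap is the classical lower bound, and its starting premise is false. With density-$1/2$ lists, a uniformly random polynomial of degree less than $K=n/2$ has uniform evaluations, so it satisfies only about $n/2$ rows. The $3/4$ baseline comes from interpolating through about $n/2$ allowed values and completing the rest at random; this already runs in one pass with $O(q\log q)$ space. Planting $f^*$ creates no hardness either. By the second-moment count in \Cref{hb:random-sat}, a uniform instance already has about $q^K\rho^n=2^{\Omega(n\log q)}$ polynomials satisfying \emph{every} row, so an algorithm beating $75\%$ never needs to correlate with a planted polynomial. Hidden-matching, round-elimination and XOR-lemma arguments are tailored to Boolean outputs or relations with few solutions, and you do not say how they would handle a search problem with exponentially many solutions against $N^c$ passes. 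Your target $s\cdot N^c\ge N^{1/2+2c}$ is postulated, not derived.

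The paper proceeds differently. It simulates a $p$-pass, $s$-space algorithm by a $k$-party blackboard protocol on $k=\lceil\ln n\rceil$ row blocks with cost $(kp-1)s$, and makes it fully verified. It then refines the protocol by density-restoring partitions into a $\gamma$-subcube-like protocol, with $\gamma=1-1/\ln^2 n$ and expected codimension $O(\ln^2 n\cdot|\Pi|)$. The decisive ingredient, which you are missing, is list recovery of the Reed--Solomon code (\Cref{hendpoint:lr} at $\ell=1$), proved via hidden-derivative interpolation and Kopparty's differential-root bound. It says that address sets of total size $T=q^{1+\kappa(\alpha)}/\polylog(q)$ cover $\lceil(2\alpha-1-o(1))n\rceil$ coordinates of at most $2^{O(n/\log^3 n)}$ codewords. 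Each such ``dangerous'' codeword, when it first appears, still has more than $2(1-\alpha)n$ unexposed queries in other parties' rows. A union bound then shows that success forces $\widetilde\Omega(T)$ communication, giving $s\ge N^{1/2+2\epsilon/(1+4\epsilon)}/(p\,\polylog N)$ at $\alpha=3/4+\epsilon$. This is where both the exponent and the $3/4$ threshold come from, since $\kappa(\alpha)=(4\alpha-3)/(4\alpha-2)$ vanishes at $\alpha=3/4$.

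On the quantum side, there are four further problems.
\begin{enumerate}
\item OPI is literally the $\ell=1$ case. So $N=q(q-1)$, and the $\widetilde O(\sqrt N)$ space is simply $O(q\log q)$ for $n=q-1$ registers over $\Fbb_q$. Your reduction with $m\approx\sqrt N$ derivatives is unnecessary and ill-posed: with $m$ derivatives the stream has $nq^m$ entries.
\item Your ``key streaming observation'' misplaces where the input enters. The syndrome map is input-independent and is applied after the pass. What must be loaded, from membership bits arriving in arbitrary order, are the input-dependent amplitudes. The paper uses the product (binomial-weight) state because it loads row by row (\Cref{lem:quantum-loading}). Each row gets a control qubit and a uniform address register, and each entry $(i,z,0)$ triggers a phase flip on $\ket{1}\ket{z}$. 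After the pass one applies $\Fsf_\Sigma$, then a control rotation conditioned on the address being $0$, and discards the controls.
\item You also need concentration (\Cref{clm:truncation-score}), not just the expected semicircle value. Once the stream is gone, the algorithm cannot check its answer.
\item For $99.999\%$, the typical error weight is near $n/2$, far beyond the unique-decoding radius $n/4$. Distinct retained errors then share syndromes, so no decoder can coherently erase $e$ on the whole support. Maximum likelihood is not even available, because the error prior depends on the $S_i$, which the one-pass algorithm no longer has.
\end{enumerate}

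For the last point, the paper needs two ingredients.
\begin{itemize}
\item A $2^{o(n)}$ bound on every syndrome list at radius about $\tau n$ with $\tau<1/2$ (\Cref{lem:nearopt-dual-lists}). For $\ell=1$ this again comes from \Cref{hendpoint:lr} with singleton lists, i.e., Reed--Solomon list decoding far beyond the Johnson radius.
\item Augmenting the syndrome by $r=o(n)$ uniformly random error blocks. Competing errors differ on more than $n/2$ blocks, so this isolates each fixed error with probability $1-1/n$ (\Cref{lem:nearopt-isolation}), at a cost of at most $r$ constraints.
\end{itemize}
Only with both in place is exhaustive search a valid coherent decoder.
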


Our techniques extend to a natural generalization of OPI using Hermite interpolation. 
In this Hermite OPI, each evaluation point has a set of acceptable values and (high-order) derivatives, and the output polynomial tries to match both the values and the derivatives. 
For this problem, we obtain, qualitatively, the strongest possible quantum streaming advantages.

\begin{theorem}[Stronger advantage for Hermite OPI]\label{thm:headline-hopi}
For some constant $c>0$, the following holds to find an $\alpha$-approximate solution for a version of Hermite OPI in the streaming setting.
\begin{itemize}[labelsep=5pt,leftmargin=15pt,topsep=4pt,itemsep=0pt]
\item $\alpha=99.999\%$. 
A one-pass $\polylog(N)$-space $\exp\bigl(\polylog(N)\bigr)$-time quantum algorithm suffices.
\item $\alpha=93.301\%$. 
A one-pass $\polylog(N)$-space $\polylog(N)$-per-entry-time\footnote{This is the processing time per stream entry arrival, including the time to generate the answer at the end.} quantum algorithm suffices.
\item $\alpha=75.001\%$. 
Provably no $N^c$-pass $N^c$-space classical algorithm exists, even with unlimited time.
\end{itemize}
\end{theorem}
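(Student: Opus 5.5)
We will prove the three items of \Cref{thm:headline-hopi} separately, after fixing the problem concretely. Work over $\Fbb_q$ with $q=\polylog(N)$ or of characteristic chosen so Hasse derivatives behave well. Each stream entry is a pair $(x, S_x)$: a point $x\in\Fbb_q$ (possibly with multiplicity across many fields/blocks) and a set $S_x\subseteq\Fbb_q^{m}$ of acceptable vectors for $(f(x),f^{(1)}(x),\dots,f^{(m-1)}(x))$, where $f^{(j)}$ is the $j$-th Hasse derivative. Each set has $|S_x|\approx q^{m}/2$. The constraint map $f\mapsto (f^{(j)}(x))_{x,j}$ is the generator matrix of a multiplicity (derivative) Reed--Solomon code. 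Its dual is again a multiplicity-type code, with efficient list/unique decoding up to a constant fraction of errors (Guruswami--Wang / Kopparty). The parameters are chosen so that $N=q^{\Theta(m)}$, making $\polylog(N)$ space natural.

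\textbf{Quantum items.} We follow DQI. First prepare the weighted superposition over error patterns $y$ of Hamming weight $\le \ell$, via a Dicke-type state with optimal semicircle-law amplitudes. Then, in one pass, for each arriving constraint $i$ apply the phase/amplitude gadget encoding $\1_{S_i}$ Fourier-transformed on the register for that constraint. At the same time, accumulate the syndrome $B^\top y$ in a $\polylog$-size register. This accumulation is a running sum, so it needs neither storing $y$ nor revisiting the input. Here lies the main streaming subtlety. The $y$ register has length $N$, so it cannot be stored. Instead, we exploit that a weight-$\ell$ pattern with $\ell=\polylog$ (or $\ell$ proportional to the dual distance divided by polylog factors) can be stored sparsely as $\ell$ (position, value) pairs. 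The Dicke-state preparation is then done lazily: for each stream position, a controlled rotation decides whether to "insert" an error there, conditioned on the remaining count. Afterwards we uncompute $y$ from the syndrome by decoding the dual multiplicity code. The resulting semicircle-law guarantee $\tfrac12+\sqrt{\tfrac{\ell}{N'}(1-\tfrac{\ell}{N'})}$ gives $\approx 93.3\%=\tfrac12+\tfrac{\sqrt3}{4}$ at $\ell/N'=1/4$ in polynomial-time decoding. Here $N'$ counts symbol blocks, so derivatives give folding and let the decodable radius reach the required fraction. With exponential-in-polylog brute-force decoding up to near the distance we push $\ell/N'\to 1/2$, giving $\alpha\to 1$. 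The per-entry time is polylogarithmic because each gadget touches only an $m$-symbol block. I expect the hardest step here to be making the uncomputation of the sparse error register reversible and exact within polylog space, and I would first try encoding $y$ as a sorted list of positions and verifying via the decoder.

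\textbf{Classical item.} This is the main obstacle. We reduce from a communication/streaming lower bound. Constructing a random instance, with random $S_x$ of density $1/2$, a random polynomial achieves $\approx 1/2$ per constraint, and because the blocks couple $m$ derivative values, trivially satisfiable fractions approach $3/4$ by fixing half the derivative coordinates. We therefore plant a hidden structure so that beating $75\%$ requires "learning" information about the stream. The chosen hard problem is a multi-pass lower bound for a hidden-structure / coin-problem or the Raz-type learning-parities-from-samples: each stream entry is a random equation whose satisfaction beyond $3/4$ correlates with a hidden low-degree polynomial. Outputting a $75.001\%$ solution lets one recover information about the hidden object. Via Raz / Garg--Raz--Tal extractor-based time-space lower bounds (extended to multiple passes by Garg--Kothari--Liu--Raz), any $N^c$-pass algorithm then needs $N^c$ space. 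The delicate part is showing that a $(3/4+\epsilon)$-fraction solution is $\epsilon$-correlated with the planted polynomial. We would prove this by a counting/union bound over all degree-$<k$ polynomials. Random instances are concentrated, so no polynomial exceeds $3/4+\epsilon/2$ unless it is close to the planted one, given $q^k\ll e^{\epsilon^2 N}$.
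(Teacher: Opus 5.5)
\paragraph{Classical item.} This part fails at the step you flagged as delicate. Random HOPI instances have exponentially many perfect solutions. By \Cref{hb:random-sat}, the number of polynomials of degree $<K=\ell n/2$ satisfying all $n$ rows has mean $q^K\rho^n\approx q^{\ell n/2}2^{-n}$, and it is positive with probability $1-e^{-\Omega(n\ell\log q)}$. Planting one polynomial does not change this appreciably. So a 75.001\% solution, or even a 100\% solution, need not be correlated with the planted object, and your union bound cannot work. The score of a fixed polynomial is a sum over only $n$ rows, so its deviation probability is $e^{-\Theta(\epsilon^2 n)}$, vastly larger than $1/q^K=e^{-(\ell n/2)\ln q}$.

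Even a valid reduction to a Raz or Garg--Raz--Tal memory--sample bound would be too weak. The memory side of such bounds is polynomial in the bit-length of the secret, which here is $K\log q=\polylog(N)$, far below $N^c$. The missing idea is a lower bound for a \emph{search} problem with many solutions. The paper's route is as follows.
\begin{itemize}
\item Simulate a $p$-pass, $s$-space algorithm by a $k$-party blackboard protocol of cost $(kp-1)s$ with $k=\lceil\ln n\rceil$.
\item Refine the protocol into $\gamma$-subcube-like rectangles via density-restoring partitions (\Cref{cl:preprocessing}).
\item Apply Hermite list recovery (\Cref{hendpoint:lr}), with incidence budget $T=q^{1+\ell\kappa(\alpha)}(\ell\log q)^{-O_\alpha(\ell)}$ and $\log L=O(q/\log^3 q)$. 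Until more than $T$ addresses are exposed, at most $L$ codewords ever become dangerous.
\item Each dangerous codeword succeeds with probability $e^{-\Omega(n/\ln^2 n)}$, because its unexposed queries held by non-senders are nearly uniform.
\item Take $\ell=\lfloor q^{1/C}\rfloor$ for a large constant $C=C(\epsilon)$. This gives $ps\ge N^{\kappa/2}$ while keeping $q\ell\log q=\polylog(N)$.
\end{itemize}

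\paragraph{The 99.999\% quantum item.} This part has a second gap. With $K=\ell n/2$, the dual code has block distance only $n/2+1$ (\Cref{thm:hermite-compact-decoder}). Once the weight cutoff exceeds about $n/4$, distinct errors in the cutoff ball share syndromes. The map $\ket{e}\mapsto\ket{G_\Hrm e}$ is then not injective, and no decoder, however slow, can coherently erase $e$. So ``brute-force decoding up to near the distance'' does not produce the DQI state. The paper's fix has three parts.
\begin{itemize}
\item It bounds every syndrome list in $B_{d+2}$ by $2^{o(n)}$, where $d=\lfloor\tau n\rfloor$ with $\tau<1/2$. This uses duality and multiplicity-code list decoding (\Cref{lem:nearopt-dual-lists}, via \Cref{thm:krsw-list-recovery} for large $\ell$).
\item It appends the error blocks on a random $o(n)$-subset $R$ to the syndrome. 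Since any competitor differs on more than $n/2$ blocks, each fixed error is isolated with probability $1-1/n$ (\Cref{lem:nearopt-isolation}).
\item It notes that the extra Fourier label alters only coordinates in $R$, so it costs at most $|R|=o(n)$ in score. The second-moment argument is then redone.
\end{itemize}

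\paragraph{Streaming implementation.} Your streaming ``subtlety'' is spurious. The error register has one $\Sigma$-block per evaluation point, so it has $n=q-1$ blocks and $O(q\ell\log q)=\polylog(N)$ qubits, and can be stored in full. Conversely, if it really had $N$ positions, a polylog-weight sparse encoding could never reach the weight $\approx n/4$ that the semicircle law needs. The syndrome also cannot be accumulated during the pass. The errors live in the Fourier basis of phase states that are completed only at the end. Instead, load the product state by controlled phase flips on individual membership bits $(i,z,0)$, which is also what makes per-entry time polylogarithmic. Then do the syndrome computation, decoding, and Fourier sampling in postprocessing.
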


For the rest of the section, we give the formal definition of Hermite OPI, present our formal theorems, and discuss related works.

\subsection{Problem setup}\label{sec:intro-setup}

Throughout, $\log$ denotes the base-two logarithm,
$[n]=\{1,2,\ldots,n\}$ for each positive integer $n$, $\1[E]$ for the indicator of event $E$, $\Fbb_q$ for the field of size $q$, and
$\Fbb_q^\times=\Fbb_q\setminus\{0\}$.
Big-O notations hide absolute constants unless a subscript is used; for example, $O_\kappa(\cdot)$ allows the hidden constant to depend on $\kappa$.
The notation $\widetilde O(\cdot)$ suppresses polylogarithmic factors and $\poly(\cdot)$ hides a fixed polynomial bound.

\paragraph{Hermite interpolation and Hasse derivatives.}
Polynomial interpolation prescribes a polynomial's values at distinct
points. Hermite interpolation also prescribes derivatives.
Over finite fields, these constraints are expressed using Hasse derivatives:
for $f\in\Fbb_q[X]$, its $j$th Hasse derivative $f^{(j)}$ is defined by\footnote{Consider $f(X)=X^3$ over $\Fbb_3$. Ordinary derivatives show that every derivative vanishes, which is not very informative. Hasse derivatives detect the difference as $f(X+T)=X^3+T^3$ and $f^{(3)}(X)=1$.}
\[
 f(X+T)=\sum_{j\ge0}f^{(j)}(X)T^j
 \quad\text{where $T$ is a formal variable.}
\]
Note that $f^{(0)}=f$. For $\ell\ge1$, the length-$\ell$ jet at $a\in\Fbb_q$ records the first $\ell$ Hasse derivatives:
\[
 J_\ell f(a)=\left(f^{(0)}(a),\ldots,f^{(\ell-1)}(a)\right)\in\Fbb_q^\ell.
\]

Similar to polynomial interpolation, any prescribed length-$\ell$ jets at any $n$ distinct points are realized by a unique polynomial of degree less than $\ell n$.
Our Hermite OPI problem is given a set of allowed jets and asks for a polynomial of smaller degree whose jets belong to as many of these sets as possible.

\begin{problem}[Hermite optimal polynomial intersection (HOPI)]\label{prob:hopi}
Let prime $q\ge3$, integer $\ell\ge1$, and $\alpha\in(0,1]$.
Write $n=q-1$, $\Sigma=\Fbb_q^\ell$, and $K=\ell n/2$,
and identify $\Fbb_q^\times$ with $[n]$.
An instance of $(\alpha,q,\ell)$-HOPI consists of
$S_i\subseteq\Sigma$ of size $(|\Sigma|-1)/2$ for each
$i\in[n]$.
The goal is to output a polynomial
$f\in\Fbb_q[X]$ of degree less than $K$ that $\alpha$-approximately maximizes
\[
 \score(f)
 =\abs{\left\{i\in[n]:J_\ell f(i)\in S_i\right\}}.
\]
In other words, $f$ should satisfy $\score(f)\ge\alpha\cdot\score(g)$ for every $g$ of degree less than $K$.
\end{problem}

The $\ell=1$ version of \Cref{prob:hopi} is the OPI problem~\cite{jordan2025dqi} that only concerns polynomial interpolation.
Here is the HOPI problem in the streaming setting.

\begin{problem}[Streaming HOPI]
\label{prob:streaming-hopi}
In the streaming setting, an $(\alpha,q,\ell)$-HOPI instance is given by $N=n\cdot|\Sigma|$ classical entries
\[
 (i,z,\1[z\in S_i])
 \quad\text{for each}\quad
 i\in[n]
 \quad\text{and}\quad 
 z\in\Sigma,
\] 
which arrive one at a time in some order.
A $\npass$-pass streaming algorithm solves $(\alpha,q,\ell)$-HOPI if it reads the stream $\npass$ times in the same order and, with probability at least $0.99$, outputs an $\alpha$-approximate maximizer $f$ at the end.
\end{problem}

We also remark that a number of parameters in \Cref{prob:hopi} and \Cref{prob:streaming-hopi} can be made tunable\footnote{For example, prime powers, fewer evaluation points, different list sizes, degree bounds, and success rates.} and our results generalize; for simplicity, we work with the current setup.
Streaming algorithms and hardness in this paper adopt the following conventions which we will not repeat afterwards.
\begin{itemize}[labelsep=5pt,leftmargin=15pt,topsep=4pt,itemsep=0pt]
\item \textbf{Space complexity} is the maximum number of (qu)bits used at any time in the computation.
\item \textbf{Preprocessing-update-postprocessing time} measures the runtime to prepare before any stream entry, to update per entry arrival, and to produce the answer after the stream, respectively.
\item \textbf{Input instance and streaming order.} 
All our algorithms work for any instance and any streaming order.
All our lower bounds hold for uniformly random instances in a fixed grouped order: each $S_i$ is independently uniform and the entries $(i,z,\1[z\in S_i])$ arrive consecutively as groups of $i$. The group order and order within groups can be arbitrarily fixed in advance.
\end{itemize}

\subsection{Our results}\label{sec:intro-results}

Define
$$
\alpha_\star=\frac{2+\sqrt3}{4}\approx0.93301
\quad\text{and}\quad
\kappa(\alpha)=\frac{4\alpha-3}{4\alpha-2}
\quad\text{for every $\alpha\in[3/4,1]$.}
$$
Our first algorithm adapts DQI to streaming HOPI and achieves every fixed approximation ratio below $\alpha_\star$ with triple efficiency: small space, fast updates, and a single pass over the input.
We prove \Cref{thm:quantum-poly} in \Cref{sec:quantum-streaming}.

\begin{restatable}{theorem}{QuantumStreamingUpper}
\label{thm:quantum-poly}
Let $\alpha\in(0,\alpha_\star)$ be arbitrary.
For every prime $q\ge\Omega_\alpha(1)$ and every integer $\ell\ge1$, there is a one-pass quantum streaming algorithm solving $(\alpha,q,\ell)$-HOPI with
$O(q\ell\log q)$ space complexity and $\poly(q,\ell)$ preprocessing-update-postprocessing time.
\end{restatable}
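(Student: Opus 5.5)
We will implement the standard DQI pipeline for the code underlying HOPI, and arrange every step so that it touches the input only through one sequential pass over the entries $(i,z,\1[z\in S_i])$.

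\textbf{Setting up the linear structure.} Identify $f$ of degree $<K$ with its coefficient vector in $\Fbb_q^K$. The map $f\mapsto (J_\ell f(i))_{i\in[n]}\in\Sigma^n$ is linear; call it $B$. Its dual code is again a Hermite/Reed--Solomon-type (derivative) code of dimension $\ell n-K=\ell n/2$. That dual code is MDS-like over the alphabet $\Sigma$: any nonzero dual codeword is nonzero on more than $K/\ell=n/2$ blocks. In Hasse-derivative language, a Hermite interpolation argument shows it can be decoded from up to $\approx n/4$ block errors, e.g. by a Guruswami--Sudan/Berlekamp--Welch style decoder for multiplicity codes in $\poly(q,\ell)$ time.

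DQI prepares
\[
\sum_{f} P(\tilde g_1(J_\ell f(1)),\dots)\ket f,
\]
where $\tilde g_i$ is the centered $\pm$ indicator of $S_i$ and $P$ is the optimal degree-$m$ symmetric polynomial with $m\approx n/4$. We then apply the semicircle-law analysis of Jordan et al.\ with density $|S_i|/|\Sigma|\to 1/2$. The expected fraction satisfied is
\[
\bigl(\sqrt{(m/n)(1-1/2)}+\sqrt{(1-m/n)/2}\bigr)^2 .
\]
At $m/n=1/4$ this gives
\[
\bigl(\sqrt{1/8}+\sqrt{3/8}\bigr)^2=\tfrac{1}{2}+\tfrac{\sqrt3}{4}=\alpha_\star .
\]
Since $\mathrm{OPT}\le n$, this yields every $\alpha<\alpha_\star$ for $q\ge\Omega_\alpha(1)$, with Markov/concentration over measurement outcomes and constant repetition in parallel. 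The slight density deficit $(|\Sigma|-1)/(2|\Sigma|)$ is absorbed into the slack.

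\textbf{Streaming order.} This is the key issue. In DQI the input enters only in the step that prepares, for each block $i$ with excitation, the state $\sum_{z}\tilde g_i(z)\ket{z}$ on the $i$th register (for $y_i\neq 0$). Everything else is input-independent: the Dicke-like weight superposition, the syndrome computation $B^\top y$, decoding, uncomputing $y$, and the final Fourier transform.

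We keep the $n$ registers of $\ell\log q$ qubits each, for $O(q\ell\log q)$ qubits total. We first prepare the input-independent weight-$m$ superposition over supports $\sum_{|T|=m}\ket{T}$. Then, while block $i$ streams by, we coherently build the function-dependent amplitude vector on register $i$ from the $|\Sigma|$ bits $\1[z\in S_i]$.

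The difficulty is that these bits arrive one by one and cannot all be stored. Instead, we rotate the register into a uniform superposition over nonzero $z$ and, upon seeing entry $(i,z,b)$, apply a controlled phase $(-1)^{b}$ on $\ket z$. This realizes $\sum_{z\ne0}\tilde g_i(z)\ket z$ up to the centering. Centering is handled because $\sum_z\tilde g_i=$ const, so projecting out the uniform component can be done input-independently via the Fourier transform on $\Sigma$. Each phase gate costs $\poly(\ell\log q)$ time, and since the grouped order is not assumed, the controlled phase is applied to whichever register $i$ is indicated, in any order. Phase gates commute, so order does not matter.

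Afterwards we compute the syndrome $B^\top y$ and decode $y$ from it to uncompute, which is possible since $|y|\le m<$ the decoding radius. We then apply the quantum Fourier transform over $\Fbb_q^K$ and measure $f$.

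\textbf{Main obstacle.} The step I expect to be hardest is the normalization and postselection. The phase-oracle approach yields amplitudes $\tilde g_i(z)$ only after a nonunitary projection onto the orthogonal complement of the uniform vector. This projection succeeds with probability about $1-1/|\Sigma|$ per excited register, so over $m$ registers the success probability is $(1-1/|\Sigma|)^{m}$. That is constant only if $|\Sigma|\gtrsim n$; otherwise it needs amplitude amplification, which would require reprocessing the stream.

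I would first try to avoid this by choosing the measurement basis cleverly. Measure the Fourier-transformed register $i$ and accept on the nonzero-frequency outcome, noting that the $\pm1$ function has zero mean up to one element. Failing that, absorb the deviation into the approximation slack by treating a failed block as an error handled by the decoder. Separately, the decoding radius for multiplicity codes and the $\poly(q,\ell)$ decoder must be verified against $m=n/4$ block errors.
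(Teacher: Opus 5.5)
There is a genuine gap: the expected score you get from DQI does not yield the required $0.99$ success probability in the streaming model. The rest of your pipeline is sound: order-independent phase kicks per entry, every other step input-independent, and unique syndrome decoding below $n/4$ from dual block distance $n/2+1$. The semicircle law only gives $\E[\score(u)]\approx\beta n$ with $\beta<\alpha_\star$.

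Neither of your fixes works here. Markov applied to $n-\score(u)$ bounds the failure probability only by $(1-\beta)/(1-\alpha)\ge1-\beta>1-\alpha_\star\approx0.067$, so it can never certify $0.99$. Parallel repetition is unavailable in the one-pass model: after the pass you no longer have the lists $S_i$, so you cannot compute $\score(u_j)$ to choose among samples.

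What is missing is a variance bound for the output score. You assert concentration without argument. The paper proves it through the identity
\[
\sum_u(\score(u)-\mu)^2\,|\braket{u|V|\Phi}|^2=\|(M-\mu I)\ket{\Phi}\|^2 ,
\]
where $M$ is the score operator conjugated into the error basis by $\Fsf_\Sigma^{\otimes n}$. Since $M$ is a sum of single-block operators, it raises block weight by at most one. The syndrome map is an isometry on weight at most $d+1$ because $2(d+1)<n/2+1$, which is why $d=\lfloor n/4\rfloor-1$. For the paper's product state, the right-hand side is $n\tau(1-\tau)$ up to an exponentially small truncation error, and Chebyshev then gives failure probability $O_\alpha(1/n)$. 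In your Dicke-state route the same identity holds. You would additionally have to bound $\|(M-\mu I)\ket{\tilde P}\|^2$ for the optimal weight vector $w$, which amounts to controlling the leakage term of size $\Theta(n)\,w_m$ into weight $m+1$. None of this appears in your proposal.

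Some further corrections.

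\textbf{The obstacle you single out is not real.} After the phase kicks and $\Fsf_\Sigma$, the leftover amplitude on $\ket{0}$ in an excited block is $2\rho-1=-|\Sigma|^{-1}$. So each block fails with probability $|\Sigma|^{-2}\le q^{-2}$, not about $1/|\Sigma|$. All at most $n/4$ excited blocks pass with probability $1-O(1/q)$, and $|\Sigma|=q^\ell>n$ always holds. These failures are flagged, so parallel copies would be legitimate for this particular purpose.

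\textbf{The paper avoids postselection altogether.} Each row gets an independent control qubit $\sqrt{1-\lambda}\ket{0}+\sqrt\lambda\ket{1}$, and phases are kicked only on the $\ket{1}$ branch of a uniform address register. After $\Fsf_\Sigma$, the control is rotated conditioned on the address being $\ket{0}$. This folds the leaked amplitude into the unexcited branch and gives exactly $\bigotimes_i(\sqrt{1-\theta}\ket{0}+\sqrt\theta\ket{\widehat h_i})$ with $\theta=4\lambda\rho(1-\rho)$. This binomial-weight state replaces the Dicke superposition and the optimal $P$, at no loss in the limit since $\tfrac12+\sqrt{\lambda(1-\lambda)}\to\alpha_\star$ as $\lambda\to1/4$.

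\textbf{A single weight gives no bias.} As written, $\sum_{|T|=m}\ket{T}$ yields no score bias at $\rho\approx1/2$, because the bias comes from interference between adjacent weights. You need $\sum_{k\le m}w_k\ket{D_{n,k}}$.

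\textbf{Decoder space.} The $O(q\ell\log q)$ space bound requires a coherent syndrome decoder with $O(n\ell\log q)$ workspace, which a generic reversible simulation does not provide. The paper uses the Euclidean-algorithm decoder for multiplicity codes and cites a coherent implementation of it.
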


Allowing exponential-time postprocessing lets us push the approximation ratio arbitrarily close to $1$ with the same low space complexity and a single pass.
Preprocessing and update time remain polynomial.
We prove \Cref{thm:quantum-exp-summary} in \Cref{sec:quantum-near-optimal}.

\begin{restatable}{theorem}{QuantumNearOptimalUpper}
\label{thm:quantum-exp-summary}
Let $\alpha\in(0,1)$ and integer $\ell\ge1$ be arbitrary.
For every prime $q\ge\Omega_\alpha(\ell\log(\ell+1))$, there is a one-pass quantum streaming
algorithm solving $(\alpha,q,\ell)$-HOPI with $O(q\ell\log q)$ space
complexity, $\poly(q,\ell)$ preprocessing-update time, and
$q^{O(q\ell)}$ postprocessing time.
\end{restatable}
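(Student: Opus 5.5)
We plan to reuse the streaming DQI state-preparation pipeline behind \Cref{thm:quantum-poly} unchanged, and replace only the final decoding step with brute-force maximum-likelihood decoding. First we set up the code. The map $f\mapsto (J_\ell f(i))_{i\in[n]}$ from polynomials of degree less than $K=\ell n/2$ is a folded Reed--Solomon-type (multiplicity) code over alphabet $\Sigma=\Fbb_q^\ell$, of length $n$ and rate $1/2$. Its dual, under the pairing used by DQI, is again a multiplicity code of the same form, with distance about $n/2$ symbols.

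The streaming algorithm then runs as follows. For each $i$, the entries $(i,z,\1[z\in S_i])$ define a $\pm1$-valued constraint function $g_i$ on $\Sigma$. We prepare the DQI state $\sum_{\mathbf y}\bigl(\prod_i \hat g_i(y_i)\bigr)\ldots$ in a register of $n$ symbols, which takes $O(n\ell\log q)=O(q\ell\log q)$ qubits, in one pass. This is done exactly as in \Cref{thm:quantum-poly}:
\begin{itemize}
\item a Dicke-like superposition over error weight $k$ with amplitudes $w_k$, given by the principal eigenvector of the tridiagonal matrix from the DQI semicircle analysis, is prepared during preprocessing;
\item upon the arrival of group $i$, the local transform on symbol $i$ is applied using the entries as they come, and the syndrome $B^\top\mathbf y$ is accumulated into an $O(K\ell\log q)$-qubit register, since each symbol's contribution is linear and local.
\end{itemize}
We take the preprocessing and update costs, which are $\poly(q,\ell)$, as established for \Cref{thm:quantum-poly}.

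The new ingredient is postprocessing. To uncompute $\mathbf y$ from its syndrome, it suffices to decode errors of weight up to $k$, where $k/n$ is the chosen error fraction. DQI's semicircle law says that if $k$ errors can be uncomputed, the expected fraction of satisfied constraints is
\[
\Bigl(\sqrt{\tfrac{k}{n}\cdot\tfrac12}+\sqrt{\bigl(1-\tfrac{k}{n}\bigr)\tfrac12}\Bigr)^2
\]
(with $p/q$ replaced by the list-size ratio $\approx 1/2$), up to $o(1)$ terms. This tends to $1$ as $k/n\to1/2$. Efficient unique decoding only reaches $k/n<1/4$, which yields $\alpha_\star$. Brute-force decoding instead works up to $k<d/2$, and more importantly it can be done coherently by enumerating all $q^{O(q\ell)}$ syndrome preimages or codewords. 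To push $\alpha$ toward $1$, we must decode beyond half the minimum distance. We would take $k$ to be a $(1/2-\delta)$ fraction and use the fact that a random-like error pattern is uniquely decodable with high probability up to near capacity. Concretely, we bound the fraction of weight-$k$ patterns $\mathbf y$ that share a syndrome with another pattern of weight at most $k$. A counting argument gives this: the number of weight-$\le k$ patterns is $\binom{n}{k}q^{\ell k}$ versus $q^{\ell(n-K)}=q^{\ell n/2}$ syndromes. When $\ell\log q$ dominates the entropy term $n\,\Hrm(k/n)$ per symbol, collisions form an $o(1)$ fraction. The counting uses the structured weight distribution of MDS-like multiplicity codes, and this is where the condition $q\ge\Omega_\alpha(\ell\log(\ell+1))$ enters. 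Maximum-likelihood (minimum-weight) decoding then succeeds on all but an $o(1)$ fraction of the amplitude. Exhaustive coherent search over all $q^{\ell n}$ candidates, or over codewords, costs $q^{O(q\ell)}$ time, and the space can be kept at $O(q\ell\log q)$ by a reversible loop with a counter.

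Finally, we measure to obtain $f$. Decoding failure perturbs the state by $o(1)$ in norm, so the expected score is at least $(1-o(1))$ times the semicircle value. Repeating $O(1)$ times is not possible within one pass, so we instead boost via parallel copies, using a constant factor more space and Markov's inequality. Since $\score(g)\le n$, this gives an $\alpha$-approximation with probability $0.99$. The main obstacle we expect is the beyond-half-distance decoding analysis, namely showing that the imperfect-decoding error of DQI stays small in amplitude-weighted terms. We would first try the DQI paper's imperfect-decoding lemma combined with the above collision count.
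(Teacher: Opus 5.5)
Your overall plan matches the paper: reuse the one-pass state preparation, then decode beyond the unique-decoding radius by a reversible exhaustive search in $O(q\ell\log q)$ space and $q^{O(q\ell)}$ time. The gap is in the step that makes decoding beyond $n/4$ valid. Comparing $|B_k|\approx\binom nk q^{\ell k}$ with the $q^K$ syndromes only controls the \emph{average} fiber size. Even combined with the MDS weight distribution, it bounds the collision rate only for a \emph{uniformly random} weight-$k$ pattern. That is not the distribution that matters. Given its support, block $i$ of the error carried by $\ket\Psi$ is distributed as $|\braket{y|\widehat h_i}|^2$. This is determined by the adversarial set $S_i$ and can be highly concentrated: for $\ell=1$ and $S_i=\{z:a_iz\in\{1,\ldots,(q-1)/2\}\}$, most of the mass sits on $\pm a_i$, with $a_i$ chosen by the adversary. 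Nothing in your argument rules out an instance whose typical errors all have same-syndrome competitors of no larger weight. In that case a fixed minimum-weight decoder loses most of the amplitude. What you need is a failure bound that holds for every fixed error. For a deterministic plain-syndrome decoder, that would mean unique decodability of all of $B_d$, which is false once $d$ exceeds about $n/4$. So instance-independent randomness must be introduced somewhere.

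The paper introduces that randomness by randomizing the syndrome, not by averaging over errors. \Cref{lem:nearopt-dual-lists} shows every fiber $\Ecal_{d+2}(s)$ has size $2^{o(n)}$. Via \Cref{fct:hermite-duality} the fiber is a list-decoding ball of $C_\Hrm$, bounded by \Cref{thm:krsw-list-recovery} for large $\ell$ and by \Cref{hendpoint:lr} for small $\ell$. Two errors in one fiber differ on at least $n/2+1$ blocks (\Cref{clm:dualwt}). Hence appending $e_R$ for a uniformly random $r$-subset $R$ isolates each \emph{fixed} $e\in B_d$ except with probability at most $2^{o(n)}2^{-r}\le1/n$. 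By \eqref{eq:nearopt-isolation-mass}, the rejected mass is then small for every state concentrated on $B_d$, whatever its amplitudes. The extra coordinates cost only $r\le\epsilon n/4$ constraints by \eqref{eq:nearopt-score-loss}. Isolation within $B_{d+2}$ rather than $B_d$ keeps the extension of $V_R$ an isometry on one-block neighbourhoods of $\Gcal_R$, which the second-moment argument of \Cref{clm:nearopt-score} needs. Your ``$o(1)$ perturbation in norm'' step would need a substitute for this. This mechanism is also where $q\ge\Omega_\alpha(\ell\log(\ell+1))$ really comes from: $r$ must exceed the logarithm of the $(\ell/\gamma)^{O(\ell/\gamma^2)}$ list bound while staying below $\epsilon n/4$. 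Your entropy count, by contrast, only gets easier as $\ell$ grows.

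Three smaller points:
\begin{itemize}
\item Boosting with parallel copies cannot select the best output, since scores are not computable after the pass. It is also unnecessary: $\beta$ can be taken close enough to $1$.
\item The syndrome needs $K\lceil\log q\rceil$ qubits, not $O(K\ell\log q)$, which would exceed the space budget.
\item \Cref{thm:quantum-poly} prepares a binomial-weight product state by phase flips that work in any stream order, not a Dicke state processed group by group.
\end{itemize}
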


Classical algorithms require substantially more space or passes to achieve comparable approximation, even with unlimited computation time.
\Cref{thm:classical-lower-summary}, proved in \Cref{sec:classical}, quantifies this tradeoff for every approximation ratio above $3/4$.

\begin{restatable}{theorem}{ClassicalStreamingLower}
\label{thm:classical-lower-summary}
Let $\alpha\in(3/4,1]$, integers $\ell,\npass\ge1$, and prime $q\ge3$ be arbitrary.
Any randomized $\npass$-pass classical streaming algorithm solving $(\alpha,q,\ell)$-HOPI has space complexity at least
\[
 \frac{q^{1+\ell\cdot\kappa(\alpha)}}
 {\npass}\cdot(\ell\log q)^{-C_\alpha\cdot\ell},
\]
where $C_\alpha>0$ is a constant depending only on $\alpha$.
\end{restatable}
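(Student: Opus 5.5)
Our plan is to reduce a multi-pass streaming lower bound to a communication or information-complexity statement about a random instance, and then to a combinatorial statement about low-degree polynomials. For a uniformly random instance in grouped order, the stream is the concatenation of $n$ independent uniform half-sets $S_1,\dots,S_n$. A $\npass$-pass algorithm with space $s$ yields a protocol in which the groups are split among $t$ consecutive players, with total communication about $\npass t s$ bits. In particular, the algorithm's state after each pass reveals at most $\npass s$ bits about any block of groups.

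The first step is to show that a typical random instance has optimum value close to $n$. A polynomial of degree less than $K=\ell n/2$ can match a prescribed jet in $\Sigma$ at $n/2$ points by Hermite interpolation. With list decoding, or a counting or first-moment argument, one can then show $\OPT\ge(1-o(1))n$ after correcting the degree budget. This makes an $\alpha$-approximation require $\score(f)\ge(\alpha-o(1))n$. By contrast, a uniformly random $f$, or any $f$ chosen independently of the $S_i$, satisfies each constraint with probability about $1/2$.

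The heart of the proof is an ``information versus score'' lemma. Let $T\subseteq[n]$ be the set of indices at which the output $f$ satisfies its constraint. Given a polynomial $f$ of degree less than $K$, the jets at any $K/\ell=n/2$ points determine $f$. So if $f$ scores at least $\alpha n$, it is determined by $n/2$ of its satisfied jets. Hence for the remaining $(\alpha-1/2)n$ satisfied points, the value $J_\ell f(i)\in S_i$ is a deterministic function of data at other coordinates. Each such forced hit is a membership event of probability $1/2$ unless the algorithm has information about $S_i$.

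We will quantify this by counting. Fix the transcript or memory states, which give at most $2^{\npass s\cdot t}$ possibilities. For each, the posterior on $(S_i)$ stays nearly a product distribution on most coordinates. The probability that some degree-$<K$ polynomial scores $\alpha n$ is then bounded by a union bound over subsets and interpolating data. The exponent $\kappa(\alpha)=\frac{4\alpha-3}{4\alpha-2}$ should arise from optimizing this union bound. The fraction of coordinates on which the algorithm must ``know'' the set, measured against the $(1/2+\epsilon)$ advantage available elsewhere, should give a threshold at $\alpha=3/4$: there half the points interpolate and a quarter are matched by chance. We would then choose the number of players $t$ (equivalently, group sizes of roughly $q^{1-\ell\kappa}$ indices) so that each block's information content $|\Sigma|=q^\ell$ bits per index forces $s\npass\gtrsim q\cdot q^{\ell\kappa}$, losing $(\ell\log q)^{O(\ell)}$ factors from enumerating candidate jets.

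The main obstacle will be the middle step. We must show that bounded memory, carried across $\npass$ passes, cannot let the algorithm steer the interpolation so that the forced jets fall into $S_i$ more often than chance on a $(4\alpha-3)$-fraction of coordinates. We would first try a direct-sum/round-elimination argument via chain rule on mutual information. If that fails, the fallback is a reduction to a multi-party pointer-chasing or set-disjointness-like problem on jets, whose known multi-pass lower bounds deliver the $1/\npass$ dependence.
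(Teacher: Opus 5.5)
\textbf{There is a genuine gap.} Your reduction to multiparty communication and your intuition for the $3/4$ threshold are sound, but the step that carries the theorem is missing, and the version you sketch would fail.

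\textbf{Why the sketched step fails.} Under the uniform hard distribution, $\OPT=n$ with high probability, and exponentially many polynomials score at least $\alpha n$. So ``the probability that some degree-$<K$ polynomial scores $\alpha n$'' is $1-o(1)$, and no union bound over subsets and interpolating data can make it small. (Relatedly, a first-moment argument cannot lower-bound $\OPT$; the paper uses a second-moment computation.) If you instead union bound over the $2^{|\Pi|}$ transcripts, each of which fixes an output, each fixed output succeeds with probability $2^{-\Theta(n)}$. That only yields $|\Pi|=\Omega(n)$, far from $q^{1+\ell\kappa(\alpha)}$. Nor is the posterior given a transcript near-product on the coordinates queried by the output: a successful protocol must essentially know those membership bits. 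Consequently $\kappa(\alpha)$ cannot emerge from optimizing such a union bound or the block sizes. The paper uses only $k=\lceil\ln n\rceil$ parties, and blocks of $q^{1-\ell\kappa}$ rows are meaningless once $\ell\kappa>1$. Your fallbacks do not obviously help either. Round elimination and reductions to pointer chasing or disjointness are built for Boolean functions or relations with few solutions, whereas HOPI has exponentially many solutions.

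\textbf{The missing idea.} You need to bound the number of candidate outputs the protocol can have learned about, not the number of good polynomials. The paper's argument runs as follows.
\begin{itemize}
\item A density-restoring refinement makes every rectangle subcube-like, at a cost of expected codimension $O(\log^2 n\cdot|\Pi|)$.
\item One then tracks the exposed addresses, i.e.\ the membership bits fixed so far. A codeword is dangerous once at least $\zeta n$ of its addresses are exposed, where $\zeta\approx 2\alpha-1$.
\item At the moment a codeword first becomes dangerous, it still has more than $2(1-\alpha)n$ unexposed, dense queries held by non-senders. Hence it ends up a solution with probability $e^{-\Omega(n/\log^2 n)}$.
\item The crucial input is a list-recovery theorem for the Hermite code. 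While total exposure stays below $T=q^{1+\ell\kappa(\alpha)}(\ell\log q)^{-O(\ell)}$, at most $L=2^{O(n/\log^3 n)}$ codewords can ever become dangerous. The paper proves this via hidden-derivative interpolation on jets plus Kopparty's differential-root bound.
\item Markov's inequality bounds the probability of exceeding $T$ by $O(\log^2 n\cdot|\Pi|/T)$, which gives $|\Pi|=\widetilde\Omega(T)$.
\end{itemize}
This is exactly where $\kappa(\alpha)=1-1/(4\alpha-2)$ and the condition $\alpha>3/4$ come from. The threshold $\zeta$ must exceed the rate $1/2$ for list recovery with incidence $T\gg n$, yet stay below $2\alpha-1$ for the binomial tail bound. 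Without such a list-recovery statement, your plan has no mechanism that yields anything beyond $\widetilde\Omega(n)$.
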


Combining this tradeoff with our quantum upper bounds and choosing concrete parameters gives the headline results \Cref{thm:headline-opi,thm:headline-hopi}.
We present these calculations in \Cref{sec:headline_calc}.

Finally, our classical lower bounds are essentially optimal.
\Cref{thm:classical-upper-summary}, proved in \Cref{sec:classical-upper}, gives a one-pass classical algorithm with nearly matching space complexity for approximation ratios above $3/4$.

\begin{restatable}{theorem}{ClassicalStreamingUpper}
\label{thm:classical-upper-summary}
Let $\alpha\in(3/4,1)$, integer $\ell\ge1$, and prime $q\ge3$ be arbitrary. There is a randomized one-pass classical streaming algorithm solving $(\alpha,q,\ell)$-HOPI with $O_{\ell,\alpha}(q^{1+\ell\kappa(\alpha)}\log q)$ space complexity, $\poly(q,\ell)$ preprocessing-update time, and $q^{O(q\ell)}$ postprocessing time.
\end{restatable}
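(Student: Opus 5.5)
We need a one-pass classical algorithm for Theorem~\ref{thm:classical-upper-summary} using about $q^{1+\ell\kappa(\alpha)}$ space. The idea is to sample a small random subset of the allowed jets at each point, store only that, and at the end brute-force search over all degree-$<K$ polynomials. Set $t=\ell\kappa(\alpha)$ and $K=\ell n/2$. A polynomial of degree less than $K$ is determined by its jets at any $n/2$ points. So for every choice of a set $I\subseteq[n]$ with $|I|=n/2$ and jets $z_i\in S_i$ for $i\in I$, there is a polynomial $g$ with $\score(g)\ge n/2$. Moreover, a random extension already gives about $3n/4$: the jets at the other $n/2$ points are roughly uniform, so each lands in its $S_i$ with probability about $1/2$. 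This is why the $3/4$ threshold appears.

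\textbf{Storage.} For each $i$, while group $i$ streams by, we keep a random subset $R_i\subseteq S_i$. Each allowed jet is included independently with probability $\rho=q^{-\ell(1-\kappa)}\cdot C_{\ell,\alpha}$, so $|R_i|\approx C\,q^{\ell\kappa}$ with high probability. We cap the size so the total space is $O_{\ell,\alpha}(q^{1+\ell\kappa}\log q)$. Entries arrive in arbitrary order, so this per-entry Bernoulli sampling is exactly what we need: each entry with $\1[z\in S_i]=1$ is kept with probability $\rho$, independently and in one pass. Update time is $\poly(q,\ell)$.

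\textbf{Postprocessing.} We enumerate all $q^{K}\le q^{O(q\ell)}$ candidate polynomials $f$ and compute
\[
 \widehat{\score}(f)=\abs{\{i:J_\ell f(i)\in R_i\}}.
\]
Let $\OPT=\score(g^\ast)$. Then $\widehat\score(g^\ast)\approx\rho\,\OPT$. Any $f$ with $\score(f)<\alpha\OPT$ should have $\widehat\score(f)$ concentrated near $\rho\,\score(f)$. A direct union bound over $q^K$ polynomials fails, because $\rho n$ is too small to give $e^{-\Omega(K\log q)}$ tails.

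So instead we output the polynomial maximizing a more robust statistic. Restrict attention to polynomials interpolated from stored jets: choose a set $I$ of size $n/2$ and $z_i\in R_i$ for $i\in I$, which determines $f$, and then evaluate the true-ish score. The true score can't be computed from $R$, so we use the $\widehat\score$ on the complementary points and rescale by $\rho$. The calibration is to choose $\kappa$ so that two things hold:
\begin{enumerate}
\item Among the $\binom{\OPT}{n/2}$ interpolation subsets of $g^\ast$'s satisfied points, many survive sampling. This requires $\rho^{n/2}\binom{\OPT}{n/2}\gg1$.
\item The number of spurious candidates $f$ that look good but have $\score(f)<\alpha\OPT$ has vanishing expectation.
\end{enumerate}

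\textbf{Main obstacle.} The hard step is this counting balance, which should produce exactly $\kappa(\alpha)=\frac{4\alpha-3}{4\alpha-2}$. Consider a bad $f$ of score $s$. For it to appear, it needs about $sn$-many hits of its satisfied jets in the $R_i$ beyond the interpolation set. Compare the exponents $\rho^{\#\text{hits}}\cdot q^{K}$ for bad $f$ against the survival count for $g^\ast$, using $\rho=q^{-\ell(1-\kappa)}$. I expect the comparison $(1-\kappa)(2\alpha-1)\cdot\ell n/2$ versus $\ell n/4$-type terms to yield the threshold $4\alpha-3$ over $4\alpha-2$.

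I would first try to certify candidates by requiring $\widehat\score(f)\ge\rho'\cdot$ threshold, with a first-moment bound over all $q^K$ polynomials grouped by score. I would then handle concentration for $g^\ast$ with a second-moment or Chernoff argument per support set. The $O_{\ell,\alpha}$ constants absorb the $\binom{n}{\cdot}$ entropy factors. The success probability of $0.99$ comes from Markov's inequality on the number of bad candidates plus a Chernoff bound on the space cap.
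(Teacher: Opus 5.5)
Your storage step is right and matches the paper. Since $1-\kappa(\alpha)=1/(4\alpha-2)$, your rate $\rho=q^{-\ell(1-\kappa)}$ is the paper's sampling rate $\eta=\Theta_{\ell,\alpha}(n^{-\ell/(4\alpha-2)})$. Also, if you maximize your complement count over polynomials interpolated from stored jets, that amounts to maximizing the total sampled score $v_W(P)$, which is close to the paper's test.

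The gap is in the analysis, which is anchored to the optimum $g^\ast$ and cannot work. A fixed polynomial has $\operatorname{Bin}(\score,\rho)$ sampled hits, with mean at most $\rho n=o(n)$. So your condition (1) is false: $\rho^{n/2}\binom{\OPT}{n/2}\le(4\rho)^{n/2}\to0$, and with overwhelming probability no interpolation set of $g^\ast$ is stored at all. Meanwhile, polynomials interpolated from $n/2$ stored jets are plentiful (about $q^{K}\binom{n}{n/2}(\rho/2)^{n/2}\gg1$ of them) but typically score only about $3n/4$. The winners of any such maximization have $\Theta(n)$ sampled hits, a factor $\Theta(1/\rho)$ above expectation. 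So ``rescale by $\rho$'' is the wrong calibration: what links sampled and true score among winners is a selection effect over exponentially many candidates, and that is exactly what must be controlled.

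The paper accepts $P$ once $v_W(P)\ge h_0\approx(2\alpha-1)n>n/2$ and counts over the whole population of polynomials grouped by score. Suppose the number of polynomials of score $w$ were about $q^K\Pr[\operatorname{Bin}(n,1/2)=w]$. Then the tilt identity \eqref{cu:binomialtilt} shows that polynomials with $h$ sampled hits have score about $h+\operatorname{Bin}(n-h,1/2)$, concentrated at $(n+h)/2$, which forces $h\approx(2\alpha-1)n$. Their expected number is about $q^K\binom nh(\eta/2)^h$, and tuning this to $e^{\Theta(n/\ln^2 n)}$ fixes $\eta$ and yields $\kappa(\alpha)$.

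The missing ingredient is making ``about'' rigorous uniformly over worst-case instances, which is \Cref{cu:score-distribution}. It has two parts:
\begin{itemize}
\item An upper envelope $\pi(w)\le F_n\mu_N(w)$ with $\ln F_n=o(n/\ln^2 n)$. With the trivial count $q^K$ in its place, your first-moment bound over bad $f$ is off by a factor $2^{\Theta(n)}$ and fails.
\item A lower bound on the mass near score $\alpha n+n/(2\ln n)$. This guarantees that enough good polynomials exist for one of them to get lucky, and it is why the paper targets the absolute score $\alpha n\ge\alpha\OPT$ rather than $g^\ast$.
\end{itemize}
Both come from Hermite list recovery applied to dual cosets (with a strengthened list size), plus the isolation and moment-transfer arguments of the DQI section.

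Even granting these estimates, the good side needs more than Markov and Chernoff. A second-moment bound over heavily overlapping $h$-certificates gives survival probability only $e^{-O(r)}$. Talagrand's convex-distance inequality is then needed to boost this to high probability at the lower threshold $h_0=h-d$. None of these steps appears in your proposal, and the exponent calculation is left as an expectation.
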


For approximation ratios at most $3/4$, classical algorithms can match the space and time bounds of our efficient one-pass quantum algorithm.
We prove \Cref{thm:classical-three-quarters-summary} in \Cref{sec:c34}.

\begin{restatable}{theorem}{ClassicalThreeQuartersUpper}
\label{thm:classical-three-quarters-summary}
For every sufficiently large prime $q$, every $\alpha\in(0,3/4]$, and every integer $\ell\ge1$, there is a randomized one-pass classical streaming algorithm solving $(\alpha,q,\ell)$-HOPI with $O(q\ell\log q)$ space complexity and $\poly(q,\ell)$ preprocessing-update-postprocessing time.
\end{restatable}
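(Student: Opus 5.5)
The plan is to give a very simple classical algorithm that aims for a random-like guarantee and then compare it against the trivial upper bound $\score(g)\le n$. The first step is to fix the target: since $\alpha\le 3/4$, it suffices to output, with probability at least $0.99$, a polynomial $f$ of degree less than $K$ with $\score(f)\ge \tfrac34 n$. No streaming is needed to estimate $\OPT$.

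A uniformly random $f$ satisfies each constraint with probability $|S_i|/|\Sigma|\approx 1/2$, which is too weak, so the algorithm has to exploit the degree budget $K=\ell n/2$. The degree-$<K$ polynomials have enough freedom to interpolate arbitrary jets at $n/2$ of the points. The algorithm therefore does the following.
\begin{itemize}
\item Fix any set $A\subseteq[n]$ with $|A|=\lfloor n/2\rfloor-1$, chosen so that the jet constraints at $A$ use fewer than $K$ degrees of freedom. For each $i\in A$, it remembers a single jet $z_i\in S_i$ found while reading the stream, say the first $z$ whose entry $(i,z,1)$ arrives. This costs $O(\ell\log q)$ bits per point, for $O(q\ell\log q)$ space in total, and one pass.
\item At the end, it picks a random polynomial $f$ of degree less than $K$ subject to $J_\ell f(i)=z_i$ for all $i\in A$. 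This is an affine subspace of dimension $\ell$ or so. It is nonempty by Hermite interpolation with slack, because the $\ell|A|<K$ linear constraints are independent: jets at distinct points of a polynomial of degree less than $K$ are linearly independent whenever their total count is at most $K$.
\end{itemize}
This guarantees $|A|\approx n/2$ satisfied constraints.

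For $i\notin A$, I would argue that $J_\ell f(i)$ is close to uniform over $\Sigma$ given the constraints on $A$. The affine space contains the multiples of $\prod_{a\in A}(X-a)^\ell$ by polynomials of degree less than $K-\ell|A|\ge \ell$. A free polynomial $h$ of degree less than $\ell$ makes $J_\ell(h\cdot P)(i)$ uniform on $\Fbb_q^\ell$, because $P(i)\ne0$ and multiplication by $P$ is invertible on jets at $i$. Hence each $i\notin A$ is satisfied with probability exactly $|S_i|/|\Sigma|=\tfrac12-\tfrac1{2|\Sigma|}$. The expected score is then about $n/2+n/4=\tfrac34 n$, minus $O(1)$ from the rounding and the $-1/(2|\Sigma|)$ term.

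The \emph{main obstacle} is exactly this lower-order loss, together with turning the bound into a statement that holds with probability $0.99$. I would handle it in two ways.
\begin{itemize}
\item \emph{Recovering the loss.} The bound $\score(g)\le n$ is loose by at least a constant, and one can compare with the true optimum instead. Alternatively, one can enlarge $A$ by exploiting the $-1$ in $|A|$ more carefully. If needed, one can reserve only $\ell$ degrees of freedom for randomness and also use the leftover jets to fit a few more points.
\item \emph{Concentration.} Run $O(1)$ independent random completions and keep the best. Scores can be evaluated exactly in postprocessing only if we know the $S_i$, which we do not store. So instead I would bound the failure probability using the variance of independent-ish indicators for $i\notin A$. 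These are pairwise independent when the free part is larger, which I would arrange by reserving $2\ell$ degrees of freedom. The cost is reducing $|A|$ by one, which changes the score by only $O(1)$.
\end{itemize}
For large $q$, Chebyshev then shows $\score(f)\ge \tfrac34 n-o(n)$ with high probability. The statement's hypothesis that $q$ is sufficiently large is what absorbs the remaining additive $O(1)$ and $o(n)$ slack relative to $\alpha\le3/4$. If exact $3/4$ turns out to be too tight at $\alpha=3/4$, I would refine the choice of stored $z_i$, for instance by picking $z_i$ via preferences over jets that correlate with membership at other points, to gain the constant back.
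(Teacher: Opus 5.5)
There is a genuine gap at the endpoint $\alpha=3/4$. The statement includes this value, and its threshold on $q$ does not depend on $\alpha$. Your argument does prove the result for each fixed $\alpha<3/4$ once $q\ge q_0(\alpha)$, but not as stated. On instances with $\OPT=n$, which are typical (e.g.\ uniformly random ones), you must output $f$ with $\score(f)\ge 3n/4$ with probability $0.99$.

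Any random completion of your type has mean $|A|+(n-|A|)\rho<3n/4$, because $|A|\le n/2$ and $\rho<1/2$. So no adjustment of $|A|$ helps. Its standard deviation is $\Theta(\sqrt n)$, so Chebyshev can only certify $\score(f)\ge 3n/4-O(\sqrt n)$. Large $q$ cannot absorb this deficit, because it is measured against $\alpha n=3n/4$ itself. Comparing with $\OPT$ gains nothing, since $\OPT=n$ is possible, and the ``preferences over jets'' idea is not an argument. Heuristically, a single completion has a roughly binomial score centered just below $3n/4$, so it succeeds only with probability about $1/2$. Concentration is the wrong tool here. You need \emph{anti}-concentration, i.e.\ a constant probability of exceeding the mean by $\Omega(\sqrt n)$, followed by amplification: keep the best of several independent candidates.

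That amplification is the step you dismissed. The missing idea is how to score candidates exactly within one pass. With $A$ fixed in advance this indeed fails, since positive records of rows outside $A$ may arrive before the $z_i$ are known. The paper instead selects interpolation rows adaptively: row $i$ is selected when its first positive record arrives, and selection stops after $k_0=n/2-4$ rows. Every positive record of an unselected row then arrives after selection ends. So the $R=512$ candidates $P_0+VQ_j$ with $\deg Q_j<4\ell$ can be generated at that moment and scored exactly on the remaining stream, using $O(Rn\ell\log q)$ space.

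With four free blocks, the unselected jets are four-wise independent. For the centered score $Y$ this gives $\E[Y^4]\le 4\sigma^4$, hence $\E|Y|\ge\sigma/2$, and Paley--Zygmund gives $\Pr[Y\ge\sigma/8]\ge 1/64$. Since $3n/4-\mu\le\sigma/8$, each candidate reaches $3n/4$ with probability at least $1/64$, and the best of $512$ fails with probability at most $e^{-8}$. The pairwise independence you planned to arrange does not by itself give such anti-concentration; the fourth-moment bound is what makes the argument work.
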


Our results fit into a broader coding-theoretic picture. At a high level, DQI and related quantum advantage proposals come from Regev's reduction~\cite{Reg09} applied to different code families.
In \Cref{tab:code-advantages}, we summarize them based on the codes, the associated problems, and their corresponding quantum advantages.

\begin{savenotes}
\begin{table}[H]
\centering
\small
\renewcommand{\arraystretch}{1.2}
\begin{tabular}{@{}p{0.4\linewidth}
                  p{0.27\linewidth}
                  p{0.28\linewidth}@{}}
\toprule
Codes (associated problem)
& Time advantages
& Space advantages \\
\midrule

Reed--Solomon (OPI)
& Exponential \cite{jordan2025dqi}
& Polynomial (\textbf{this work}) \\

Multiplicity (Hermite OPI)\footnote{The exponential time advantage uses constant $\ell$, whose space advantage is polynomial. Setting $\ell$ to grow polynomially in $q$ gives exponential space advantage, whose time advantage is quasipolynomial.}
& Exponential (\textbf{this work})
& Exponential (\textbf{this work}) \\

Folded Reed--Solomon (codeword search)
& Exponential \cite{yamakawaZhandry2024}
& Exponential \cite{ggjl2025} \\

Random ternary ($\Fbb_3^n$-Subset-Sum)
& Exponential~\cite{KothariEtAlQuantumSubsetSum}
& Unknown \\

Random $q$-ary ($\ell_\infty$-Short-Integer-Solution)
& None~\cite{ChenLiuZhandry2022Filtering,KothariODonnellWu2025}
& Unknown \\

LDPC (Max-Cut, Max-XORSAT)
& \raggedright None~\cite{parekh2025maxcut,anschuetz2026spin}
& None~\cite{kallaugher2022quantumB}\\

Hermitian (Hermitian OPI)\footnote{Their Hermitian OPI uses Hermitian algebraic-geometry codes, whereas our Hermite OPI uses univariate multiplicity codes. These are different code families and different problems.}
& Exponential~\cite{GuJordan2025Hermitian}
& Unknown \\
\bottomrule
\end{tabular}
\caption{Quantum advantages for code-based problems via Regev's reduction. Space advantages concern the streaming setting; time advantages are relative to known classical algorithms.}
\label{tab:code-advantages}
\end{table}
\end{savenotes}

\vspace{-10pt}
\subsection{Related work}\label{sec:related-works}

\paragraph{Quantum algorithms based on Regev's reduction.}
Regev's quantum reduction~\cite{Reg09}, originally used to establish
hardness of Learning with Errors, combines coherent decoding with
Fourier sampling. Chen, Liu, and
Zhandry~\cite{ChenLiuZhandry2022Filtering} turn this framework into
polynomial-time quantum algorithms for certain average-case
$\mathrm{SIS}^{\infty}$ and related constrained-integer-solution
problems. Recently, Kothari, O'Donnell, and Wu~\cite{KothariODonnellWu2025}
give classical algorithms matching or improving these guarantees,
building in part on earlier classical zero-sum algorithms of Imran and
Ivanyos~\cite{ImranIvanyos2024ZeroSum}.
Very recently, Kothari, Metger, O'Donnell, Shutty, and
Wu~\cite{KothariEtAlQuantumSubsetSum} re-establishes a potential exponential
quantum speedup for one of their proposals.
Coding-theoretic adaptations and extensions are developed
in~\cite{DebrisAlazardRemaudTillich2024,ChaillouxTillich2024QuantumDecoding,BlanvillainChaillouxTillich2025}.
Yamakawa and Zhandry~\cite{yamakawaZhandry2024} use Fourier interference and decoding to obtain an exponential quantum query advantage for codeword search relative to random oracles. Coherent decoding and Fourier sampling also underlie DQI and
our streaming implementation.

\paragraph{DQI and OPI.}
Jordan et al.~\cite{jordan2025dqi} introduce DQI for optimization
and obtain OPI guarantees beyond known efficient classical algorithms.
Subsequent work improves decoding circuits~\cite{khattar2025circuits}
and gives a nearly linear-time OPI implementation in a quantum
random-access model~\cite{rosmanis2026}.
Soft-decoding methods improve guarantees in certain average-case
regimes~\cite{chaillouxTillich2025soft,chailloux2025opi},
while recent work gives existential and algorithmic improvements
beyond the DQI semicircle law for worst-case
OPI~\cite{sunWootters2026,horinagaYamakawa2026,jo2026sampling}. Broader extensions of DQI are studied in~\cite{GuJordan2025Hermitian,bu2026weighted,KrajenbrinkEtAl2026BeyondHamming,SchmidhuberEtAl2025HamiltonianDQI,BuGuLi2026GeneralPauli}, and its computational power and limitations
in~\cite{marwaha2025complexity,parekh2025maxcut,anschuetz2026spin,KramerSchubertEisert2026BoundedDegree}. List decoding already appears in quantum reductions~\cite{chaillouxTillich2025soft,horinagaYamakawa2026,jo2026sampling}.
For fixed $\ell$, our near-optimal algorithm combines sampled-coordinate isolation with exhaustive reversible decoding to implement coherent syndrome compression within the streaming space bound, without further input access; see \Cref{sec:quantum-near-optimal}. Our results establish unconditional space--pass advantages for OPI and Hermite OPI.

\paragraph{Quantum streaming.}
Early quantum space separations were obtained by Le Gall ~\cite{LeGall2006} and Gavinsky et al.~\cite{GavinskyKempeKerenidisRazDeWolf2008}. Subsequent work studied natural problems, including $\mathrm{Dyck}(2)$ and triangle counting~\cite{JainNayak2014,NayakTouchette2017,kallaugher2022quantum}.  Multipass quantum tradeoffs have also been studied for frequency-moment estimation~\cite{Montanaro2016, HamoudiMagniez2019} and Shannon entropy estimation~\cite{FengEtAl2026}.
Zhao et al.~\cite{ZhaoEtAl2026MassiveData} develop quantum oracle sketching,
obtaining exponential quantum space advantages for classification and
dimension reduction from random classical samples.
For optimization, Kallaugher, Parekh, and
Voronova~\cite{kallaugher2024exponential} and Wang and Yang~\cite{wang2026exponential} obtain one-pass quantum space
advantages for estimating the optimum values of Max-DiCut and
Max-$k$SAT, respectively.
Constant-pass logarithmic-space classical algorithms already surpass the corresponding quantum approximation guarantees for Max-DiCut and Max-$k$SAT~\cite{fei2026dichotomy}.
Our algorithms output an explicit polynomial solution, and our lower bounds
persist against many-pass classical algorithms, even with unlimited
computation. The quantum algorithm only uses one pass.

\paragraph{Communication and coding techniques.}
G{\"o}{\"o}s et al.~\cite{ggjl2025} obtain an exponential separation
between quantum simultaneous messages and unrestricted randomized
interaction for a total, classically verifiable search relation,
providing a direct antecedent to our work.
We observe that the Forrelation-based construction
of~\cite{girish2022quantum} also yields a streaming space--pass
separation for indexed-bit inputs.
Our classical lower bound uses structure-versus-randomness
and gadgetless-lifting
techniques~\cite{gpw2020,yangzhang2024,maoyangzhang2024,ggjl2025},
including multiparty min-entropy methods~\cite{huang2025minentropy}.
Our Hermite list-recovery bound extends hidden-derivative
interpolation~\cite{bcpzz2026,jeronimo2026} to jet constraints,
using total-incidence accounting~\cite{jeronimo2026} and
Kopparty's differential-root bound~\cite[Theorem~4.3]{kopparty2015}.
Together with multiplicity-code list
decoding~\cite{KoppartyRonZewiSarafWootters2023}, this yields
the list bound for our near-optimal quantum algorithm.
Sampled-coordinate isolation has a classical antecedent in
the \texttt{PRUNE} procedure
of~\cite[Section~3.1]{KoppartyRonZewiSarafWootters2023}.

\paragraph{Other quantum optimization algorithms.}
Quantum optimization advantages have been studied under computational assumptions~\cite{PirnayEtAl2024InPrincipleAdvantage,Szegedy2022QuantumAdvantageSimplified} and in oracle models~\cite{GilyenHastingsVazirani2021AdiabaticAdvantage,LengEtAl2025OptimizationSpeedup}. Related work studies nonconvex
optimization~\cite{LengZhengWu2023NonconvexSeparation} and constant-depth QAOA for Max-Cut on high-girth 3-regular graphs~\cite{FarhiEtAl2025HighGirthMaxCut}. These works concern time, query complexity, or solution quality.
Our results establish unconditional streaming space--pass separations
for explicitly presented optimization instances, even with unlimited classical computation.

\subsection{Open problems and AI disclosure}
\label{sec:open-problems}

Our quantum upper bounds leave open the cost of obtaining stronger
approximation guarantees.
\Cref{thm:quantum-poly} achieves every approximation ratio below
$\alpha_\star$ with one pass, $O(q\ell\log q)$ space, and polynomial
preprocessing, update, and postprocessing time.
For fixed $\ell$, the near-optimal guarantee in
\Cref{thm:quantum-exp-summary} reaches every ratio below $1$,
but uses exponential-time postprocessing.
Can polynomial-time postprocessing suffice with similar space and
pass bounds?
Alternatively, can one establish computational hardness?
Exact optimization raises a separate question: does finding an optimal
polynomial require more quantum memory, or can a one-pass algorithm
achieve it in small space with unrestricted postprocessing?

On the classical side, the one-pass bounds already match up to low-order terms. The remaining tradeoffs concern additional passes and
computational efficiency.
Can algorithms match the inverse dependence on the number of passes
in \Cref{thm:classical-lower-summary}, particularly when their memory
is comparable to the quantum budget?
Can they achieve the approximation--space frontier without
exponential-time postprocessing?

A broader question is which other code families support quantum
streaming advantages through Regev's reduction.
Can we obtain a coding-theoretic classification? A first step would be completing the missing entries in \Cref{tab:code-advantages}.

\paragraph{AI disclosure.}
The authors formulated the questions in \Cref{sec:introduction} and observed the streaming DQI algorithm (\Cref{thm:quantum-poly}); GPT-6 Astra was used to check and refine the proof. 
The authors subsequently proposed the main idea for \Cref{thm:classical-lower-summary}, whose overall proof builds on the framework developed by one of the authors \cite{yangzhang2024} before the LLM age.  GPT-6 Astra was used to help improve and extend bounds from \cite{bcpzz2026,jeronimo2026} to prove \Cref{hendpoint:lr}. After completing \Cref{thm:quantum-poly} and \Cref{thm:classical-lower-summary}, the authors aimed to complete the tradeoff curve and prompted GPT-6 Astra for classical upper bounds, leading to \Cref{thm:classical-upper-summary} and \Cref{thm:classical-three-quarters-summary}. \Cref{thm:quantum-exp-summary} was the last derived: the authors observed that the streaming DQI algorithm is a soft state identification problem whose information-theoretic answer should be better than what efficient unique decoding can provide; guided by this, GPT-6 Astra completed the proof.

GPT-6 was also used to draft the paper, which was fully rewritten by the authors.
The authors take full responsibility for the correctness, exposition, and attribution in the final manuscript.

\section{Technical overview}\label{sec:overview}

We sketch the main ideas behind our quantum algorithms and classical bounds.
For convenience, let $n=q-1$ and we identify $\Fbb_q^\times$ by $[n]$.
Set $\Sigma=\Fbb_q^\ell$ and $K=\ell n/2$.
The Hermite code\footnote{It is the univariate multiplicity code in \cite{kopparty2015}. We call it Hermite code to be consistent with Hermite OPI.} $C_\Hrm\subseteq\Sigma^n$ consists of the jet vectors
$(J_\ell f(i))_{i\in[n]}$ of polynomials $f\in\Fbb_q[X]$ with $\deg f<K$.
Let $G_\Hrm$ be its generator matrix such that $G_\Hrm^\top u$ is the jets of
$f_u(X)=\sum_{r=0}^{K-1}u_rX^r$.
The goal of HOPI can be equivalently phrased as finding $u\in\Fbb_q^K$ with large 
$$
\score(u):=\score(f_u)=\sum_{i\in[n]}\1[(G_\Hrm^\top u)_i\in S_i],
$$
given allowed jets $S_i\subseteq\Sigma$ for each $i\in[n]$.

\subsection{Triply efficient quantum streaming at \texorpdfstring{$93\%$}{93\%} approximation}\label{subsec:tri}

We adapt decoded quantum interferometry
(DQI)~\cite{jordan2025dqi} to the one-pass streaming model.
The algorithm loads the streaming input into a quantum state, after which it projects onto errors that admit unique syndrome decoding and uses Fourier interference to sample a high-scoring polynomial. 
All stages use small space and are also efficient in time. 
\Cref{fig:streaming-hermite-dqi} depicts the full algorithm.



\begin{figure}[htbp]
  \centering
  \resizebox{0.95\linewidth}{!}{\definecolor{dqiOnline}{HTML}{F7F6FF}
\definecolor{dqiGate}{HTML}{EAE9FF}
\definecolor{dqiDecode}{HTML}{FFFBD4}
\definecolor{dqiInk}{HTML}{414141}
\definecolor{dqiNote}{HTML}{5B5B5B}
\definecolor{dqiFilter}{HTML}{1936B5}
\begin{tikzpicture}[
  x=1cm,y=-1cm,
  font=\sffamily\fontsize{10.95}{12.2}\selectfont,
  line width=.65pt,
  dqi gate/.style={
    draw=dqiInk,fill=dqiGate,align=center,
    inner xsep=5pt,inner ysep=5pt},
  dqi state/.style={anchor=west,inner sep=0pt},
  dqi note/.style={
    text=dqiNote,font=\sffamily\fontsize{9}{10.8}\selectfont},
  dqi wire/.style={draw=dqiInk,line width=.75pt},
  dqi arrow/.style={
    draw=dqiInk,-{Stealth[length=1.8mm,width=1.3mm]}}
]
\path[use as bounding box] (0,0) rectangle (18.8,13.9);

\fill[dqiOnline] (0,0) rectangle (18.8,2.72);
\node[anchor=west,font=\sffamily\bfseries] at (.42,.40)
  {a\quad One-pass state preparation};
\node[anchor=east,text=dqiNote] at (18.30,.40)
  {One pass through the stream};
\node[draw=dqiInk!65,fill=white,inner xsep=6pt,inner ysep=4pt]
  (records) at (5.30,1.17)
  {$(i,z,\1[z\in S_i])$ in any order};
\node[dqi gate,minimum width=5.30cm,minimum height=.78cm]
  (load) at (5.30,2.20) {Controlled membership phases};
\draw[dqi arrow] (records.south)--(load.north);
\node[dqi state,align=left,
      font=\sffamily\fontsize{9}{11.5}\selectfont] at (9.10,1.77)
  {Per row: control and address registers\\[4pt]
   $\begin{aligned}
     &\sqrt{1-\lambda}\ket{0}\ket{\phi_0}
     +\sqrt\lambda\ket{1}
       \ket{h_i}
   \end{aligned}$};

\draw[dashed,draw=dqiNote!65,line width=.4pt]
  (.18,2.92)--(18.55,2.92);
\node[anchor=east,fill=white,inner xsep=3pt,
      font=\sffamily\fontsize{8.8}{10.5}\selectfont]
  at (18.48,2.92)
  {End of stream};
\node[anchor=west,font=\sffamily\bfseries] at (.42,3.25)
  {b\quad Postprocessing};

\draw[dqi wire] (load.south)--(5.30,10.02);
\draw[dqi wire] (8.42,7.04)--(8.42,12.62);
\node[dqi gate,minimum width=5.65cm,minimum height=1.30cm]
  at (5.30,4.4)
  {Apply $\Fsf_\Sigma$ to each row\\[3pt]
   Apply $U$ if address $=0$\\[3pt]
   Discard controls};
\node[dqi state] at (9.45,4.4)
  {$\displaystyle \ket{\Psi}=
    \bigotimes_{i=1}^{n}
    \left(\sqrt{1-\lambda}\ket{0}+
          \sqrt\lambda\ket{\widehat h_i}\right)$};

\node[dqi gate,minimum width=4.70cm,minimum height=.82cm]
  at (5.30,5.82) {Weight projection $\Pi_d$};
\node[dqi state] at (9.45,5.82)
  {$\displaystyle\frac{\ket{\Phi}}{\|\Phi\|}
     =\frac{\Pi_d\ket{\Psi}}{\|\Phi\|}$\qquad$d\approx n/4$ for unique decoding};

\node at (8.42,6.67) {$\ket{0}$};
\draw[draw=dqiInk,fill=dqiDecode]
  (4.90,7.00)--(7.55,7.00)--(7.55,6.76)
  --(8.42,7.46)--(7.55,8.16)--(7.55,7.92)
  --(4.90,7.92)--cycle;
\node[align=center] at (6.46,7.46)
  {Compute\\[2pt]$G_{\Hrm}e$};
\node[dqi state] at (9.45,7.46)
  {$\displaystyle\frac1{\|\Phi\|}
    \sum_{\wt(e)\le d}a_e\ket{e}\ket{G_{\Hrm}e}$};

\draw[draw=dqiInk,fill=dqiDecode]
  (8.88,8.68)--(6.22,8.68)--(6.22,8.44)
  --(5.30,9.14)--(6.22,9.84)--(6.22,9.60)
  --(8.88,9.60)--cycle;
\node[align=center] at (7.45,9.14)
  {Decode and\\[2pt]erase $e$};
\node[dqi state] at (9.45,9.14)
  {$\displaystyle \ket{0}\otimes\frac1{\|\Phi\|}
    \sum_{\wt(e)\le d}a_e\ket{G_{\Hrm}e}$};
\node at (5.30,10.25) {$\ket{0}$};

\node[dqi gate,minimum width=2.12cm,minimum height=.82cm]
  at (8.42,11.12) {$(\Fsf_q^{-1})^{\otimes K}$};
\node[dqi state] at (9.95,11.10)
  {$\displaystyle
    \begin{aligned}
      \ket{\Dcal_P}
        &=\frac{V\ket{\Phi}}{\|\Phi\|}
        \propto\sum_{u\in\Fbb_q^K}
        P\!\left(
          \score(u)\right)\ket{u}
    \end{aligned}$};
\node[draw=dqiInk,fill=white,minimum width=2.12cm,
      minimum height=.80cm] at (8.42,12.62) {Measure $u$};
\node[dqi state] at (9.95,12.62)
  {Output $\displaystyle f_u(X)=\sum_{r=0}^{K-1}u_rX^r$};
\end{tikzpicture}}%
  \captionsetup{
    name=Fig.,font=small,labelfont=bf,labelsep=dqibar,
    justification=justified,singlelinecheck=false,skip=7pt
  }
  \caption{Streaming DQI for Hermite OPI.}
  \label{fig:streaming-hermite-dqi}
\end{figure}

Fix $\alpha$ near $\alpha_\star\approx0.93301$ and choose
$\lambda\in(0,1/4)$ such that $\frac12+\sqrt{\lambda(1-\lambda)}=\alpha$.
The choice of $\lambda$ leaves the typical error weight a constant
fraction below the decoding cutoff near $n/4$, while placing the
typical score around $\alpha n$. 
Also for intuition, we assume each $|S_i|=|\Sigma|/2$, which only hides a $1/|\Sigma|$ deviation that goes to zero as $q$ grows.

\paragraph{One-pass state preparation
(\Cref{fig:streaming-hermite-dqi}a and the top of b).}
Our target product state follows the binomial-weight DQI construction in \cite{rosmanis2026}. Define for each $i\in[n]$
\[
 \ket{h_i}=|\Sigma|^{-1/2}\sum_{z\in\Sigma}(-1)^{\1[z\notin S_i]}\ket{z}.
\]
Since we assumed $|S_i|=|\Sigma|/2$ for intuition, $\ket{h_i}$ is a unit vector orthogonal to the uniform state $\ket{\phi_0}=|\Sigma|^{-1/2}\sum_{z\in\Sigma}\ket{z}$.
Define $\ket{\widehat h_i}=\Fsf_\Sigma\ket{h_i}$, where $\Fsf_\Sigma=\Fsf_q^{\otimes\ell}$ is the Fourier transform.
Our goal is to prepare the product error state
\[
 \ket{\Psi}
 =\bigotimes_{i=1}^n
   \left(\sqrt{1-\lambda}\ket{0}
         +\sqrt\lambda\ket{\widehat h_i}\right).
\]
To see why this can be done with one-pass streaming, we initialize an address register to $\ket{\phi_0}$
and a control qubit to
$\sqrt{1-\lambda}\ket{0}+\sqrt\lambda\ket{1}$ for each $i\in[n]$.
Then each stream entry $(i,z,0)$ indicates $z\notin S_i$ and we flip the phase of
$\ket{1}\ket{z}$. After the pass, applying $\Fsf_\Sigma$ to the address register gives $\sqrt{1-\lambda}\ket{0}\ket{0}
 +\sqrt\lambda\ket{1}\ket{\widehat h_i}$.
Since $\braket{0|\widehat h_i}=0$, a controlled-flip produces the desired $\ket\Psi$.

\paragraph{Computing the syndrome and erasing the error
(yellow arrows in \Cref{fig:streaming-hermite-dqi}b).}
Before decoding, set $d\approx n/4$ and project onto errors
of weight at most $d$. Write
\[
 \Pi_d=\sum_{\wt(e)\le d}\ket{e}\bra{e},
 \qquad
 \ket{\Phi}:=\Pi_d\ket{\Psi}
 =\sum_{\wt(e)\le d}a_e\ket{e}.
\]
We measure only the projection's acceptance flag, preserving coherence
among the remaining errors. On acceptance, the state is
$\ket{\Phi}/\|\Phi\|$. Then, following DQI~\cite{jordan2025dqi}, we compute the syndrome and
coherently erase the error register:
\[
 \begin{aligned}
 \frac1{\|\Phi\|}\sum_{\wt(e)\le d}a_e\ket{e}\ket{0}
 &\longmapsto
 \frac1{\|\Phi\|}\sum_{\wt(e)\le d}
   a_e\ket{e}\ket{G_\Hrm e}
 \longmapsto
 \ket{0}\otimes\frac1{\|\Phi\|}
   \sum_{\wt(e)\le d}a_e\ket{G_\Hrm e}.
 \end{aligned}
\]
The erasure is possible by efficient unique decoding (\Cref{thm:hermite-compact-decoder}).

\paragraph{Interference and sampling
(bottom of \Cref{fig:streaming-hermite-dqi}b).}
Applying $(\Fsf_q^{-1})^{\otimes K}$ to the syndrome register
makes the Fourier terms interfere. Let $V$ denote the map that
converts $\ket{e}$ to $\ket{G_\Hrm e}$ and then applies
this inverse Fourier transform. With
$\omega=e^{2\pi\mathrm i/q}$, it satisfies
\[
 V\ket{e}=q^{-K/2}
 \sum_{u\in\Fbb_q^K}\omega^{\langle G_\Hrm^\top u,e\rangle}\ket{u}.
\]
On the projected errors, syndrome compression is an isometry.
Fourier inversion shows that the amplitude depends
on $u$ through a polynomial $P$ of
$\score(u)=\sum_i\1[(G_\Hrm^\top u)_i\in S_i]$ with $\deg P\le d$.
The normalized output state is
\[
 \ket{\Dcal_P}
 =\frac{V\ket{\Phi}}{\|\Phi\|}
 =\frac{q^{-K/2}}{\|\Phi\|}
   \sum_{u\in\Fbb_q^K}P(\score(u))\ket{u}.
\]
Measuring the output register samples $f_u$ with probability
proportional to $|P(\score(u))|^2$.

\paragraph{Why the sampled polynomial is a good approximation.}
The score bias comes from the inverse transform of the product state.
Write $\Fsf_\Hrm=\Fsf_\Sigma^{\otimes n}$.
Measuring $\Fsf_\Hrm^{-1}\ket{\Psi}$ gives independent row
assignments, which need not be the jets of a single polynomial.
Each row satisfies its constraint with probability
\[
\left(\sqrt{\lambda/2}
          +\sqrt{(1-\lambda)/2}\right)^2=\alpha.
\]
Hence their total score concentrates around $\alpha n$.
Since the truncation happens at $d\approx n/4\ge\lambda n$, $\ket\Phi$ is close to $\ket\Psi$. This intuitively shows that the concentration from $\ket\Psi$ transfers to $\ket\Phi$. To make this rigorous, a second-moment calculation similar to \cite{jordan2025dqi} is needed.

Finally to see why $\alpha_\star$ is the right threshold, the decoding cutoff allows any fixed $\lambda<1/4$, and
\[
 \alpha=\frac12+\sqrt{\lambda(1-\lambda)}
 \longrightarrow\alpha_\star=\frac12+\frac{\sqrt3}{4}\approx0.93301
 \qquad\text{as }\lambda\to\frac14.
\]
Hence every fixed approximation ratio $\alpha<\alpha_\star$ is
achievable by choosing $\lambda$ sufficiently close to $1/4$.

\subsection{Doubly efficient quantum streaming at \texorpdfstring{$1-\epsilon$}{1-epsilon} approximation}

Following \Cref{subsec:tri}, we still prepare in one pass the product state
\[
 \ket{\Psi}
 =\bigotimes_{i=1}^n
   \left(\sqrt{1-\lambda}\ket{0}
         +\sqrt\lambda\ket{\widehat h_i}\right).
\]
For approximation ratio $\alpha$ approaching $1$, we need to choose $\lambda\approx1/2$ such that $\frac12+\sqrt{\lambda(1-\lambda)}=\alpha$. Set $d\approx\lambda n$ as the projection cutoff.
The Fourier weight remains concentrated below $d$, and the
$\ket\Psi$ score is still concentrated around $\alpha\cdot n$.
The difficulty is that errors in this larger weight range may
share a syndrome, preventing unique decoding.

\paragraph{Resolving ambiguity with a few coordinates.}
Write $B_t=\{e\in\Sigma^n:\wt(e)\le t\}$.
Our list-decoding analysis bounds every syndrome list in
$B_{d+2}$ by $2^{o(n)}$ (\Cref{lem:nearopt-dual-lists}).
Let $r=o(n)$. Choose a uniform $r$-subset $R\subseteq[n]$ and append the error blocks on $R$:
\[
 H_Re=(G_\Hrm e,e_R),
 \qquad
 H_R=\begin{pmatrix}G_\Hrm\\E_R\end{pmatrix}.
\]
Two distinct errors with the same original syndrome differ on
more than half the blocks, since their difference is a nonzero
dual codeword.
Consequently, for fixed $e\in B_d$, the probability that
some competitor in $B_{d+2}$ also agrees with $e$ on $R$ is at most
$2^{o(n)}\cdot2^{-r}$.
Taking $r=o(n)$ makes this probability negligible.

We accept only errors in $B_d$ that are uniquely identified by their augmented syndrome within $B_{d+2}$. Averaging over $R$, this filter discards only small probability mass.
To decode $\ket e$ from $\ket{H_Re}$ for accepted errors $e$, we do not have an efficient algorithm and  have to use exhaustive search to enumerate all possible $e$'s. This is time inefficient but the space bound remains the same.

\paragraph{Extra coordinates cost little.}
After inverse Fourier transformation, each $(u,v)$
corresponds to
\[
 H_R^\top(u,v)
 =G_\Hrm^\top u+E_R^\top v.
\]
The additional term changes only the coordinates in $R$.
Thus, writing $S(z)=\sum_i\1[z_i\in S_i]$, we can discard
$v$ and lose at most $r$ constraints. Since $r=o(n)$, we have
\[
 \score(u)\ge S\left(H_R^\top(u,v)\right)-r\approx S(H_R^\top(u,v)).
\]
The sampled coordinates therefore resolve decoding ambiguity
and ensure the final score is roughly $\alpha\cdot n$.

\subsection{Classical lower bounds}

We prove the streaming lower bound using multi-party communication complexity. This turns out to be a challenging task. First, most of the current communication lower bound techniques only work for boolean functions (or relations with few solutions), but HOPI has exponentially many solutions for a typical random input. 
To handle this, we use the gadgetless lifting framework developed by \cite{yangzhang2024,maoyangzhang2024}. 
Furthermore, in the gadgetless lifting framework, the choice of the error-correcting code $C_\Hrm$ for HOPI is a Hermite code, whose coding-theoretic properties (in particular, list recovery bounds) are not well established; and we extend the hidden-derivative interpolation of~\cite{bcpzz2026,jeronimo2026} for the analysis.

Below, we give an overview of our proof with presentation adapted from~\cite{ggjl2025}. In the $k$-party communication problem, we partition complete lists $S_1,...,S_n$ among $k$ parties, each holding $n/k$ input lists. For intuition, first consider a simplified input distribution in which the membership bits $\1[z\in S_i]$, indexed by $(i,z)\in[n]\times\Sigma$, are independent and uniform. For $S=(S_1,\ldots,S_n)$ and $c\in C_\Hrm$, write $\err_S(c)=|\{i\in[n]:c_i\notin S_i\}|$. We sketch a lower bound for finding a codeword $c$ with $\err_S(c)\le(1-\alpha)n$. This completes the proof as it is easy to show that the optimum is $n$ for a random $S$ (\Cref{app:classical-random-sat}).

For a communication protocol succeeding with probability at least $2/3$, averaging lets us fix its random coins while maintaining success probability at
least $1/2$ over the input distribution. We require the protocol to announce its output and the membership bits queried by that output, and to return $\bot$ if its score is below $\alpha n$. This verification costs $O_\ell(n\log n)$ additional bits and is performed before the protocol refinement below.
Recall that the stream consists of $N$ entries $(i,z,\1[z\in S_i])$.
For $X\subseteq\{0,1\}^N$ and $I\subseteq[N]$, we use $X_I:=\{x_I\in\{0,1\}^I:x\in X\}$ to denote the set of strings in $X$ restricted to the coordinates in $I$. 

\paragraph{Special case: subcube protocols.}
We start with highly structured protocols. 
For each $a\in[k]$, let $N_a\subset[N]$ be the set of stream bits held by party $a$.
A rectangle $R=X_1\times\cdots\times X_k$, where
$X_a\subseteq\{0,1\}^{N_a}$, is a subcube rectangle if each $X_a$ fixes some coordinates $I_a\subseteq N_a$ and leaves all other coordinates unrestricted. Its codimension is
$\codim(R)=\sum_a|I_a|$.
A subcube protocol is such that every complete-message node of the protocol tree corresponds to a subcube rectangle. This model is closely related to decision trees, but is strictly more general. The following property of the code $C_\Hrm$ will be central to the lower bound.

\begin{definition}[List recovery]
\label{cl:overview-lr-definition}
For $\zeta\in(1/2,1)$ and $T>n$, the code $C_\Hrm\subseteq\Sigma^n$ is
$(\zeta,T,L)$-list-recoverable if,
for every collection $F_1,\ldots,F_n\subseteq\Sigma$ with
$\sum_i|F_i|\le T$,
$$
 \abs{
 \left\{
 x\in C_\Hrm:
 |\{i\in[n]:x_i\in F_i\}|\ge\lceil\zeta n\rceil
 \right\}
 }
 \le L.
$$
\end{definition}

Consider a deterministic subcube protocol $\Pi$ of communication cost $|\Pi|$, run on uniform input distribution. For a rectangle $R$, Let $F_i(R)$ be the set of all $z\in\Sigma$ for which the membership bit $\1[z\in S_i]$ is fixed on every input in $R$. Then
$\sum_i|F_i(R)|=\codim(R)$.
Thus list recovery applies as long as
$\codim(R)\le T$.

The communication bound $|\Pi|$ controls the probability of exceeding this budget $T$. Let $R_{\mathrm{leaf}}$ be the random leaf reached by
$\Pi$. Under uniform input distribution, a leaf $R$ has probability $2^{-\codim(R)}$, and hence $\E[\codim(R_{\mathrm{leaf}})]=\ent(R_{\mathrm{leaf}})\le|\Pi|$, where $\ent$ denotes binary Shannon entropy. Since the exposed sets only grow along an execution, the exposure budget is exceeded if and only if $\codim(R_{\mathrm{leaf}})>T$. By Markov's inequality,
\begin{equation}\label{eq:overview-codim}
 \Pr[\codim(R_{\mathrm{leaf}})>T]
 \le \frac{\E[\codim(R_{\mathrm{leaf}})]}{T}
 \le \frac{|\Pi|}{T}.
\end{equation}

We analyze each execution up to, but excluding, its first node that exceeds the budget $T$. A codeword $c\in C_\Hrm$ is \emph{dangerous} at $R$
if at least $\lceil\zeta n\rceil$ of its
queried addresses $(i,c_i)$ have their membership bits fixed by $R$. We use two key facts.
\begin{itemize}
\item \emph{Few dangerous codewords.}
Fix an execution and let $R_*$ be its last rectangle before the budget $T$ is exceeded, or its leaf if the budget is never exceeded.
Since the exposed sets only grow, every codeword that becomes dangerous up to this point remains dangerous at $R_*$. Moreover, $\sum_i|F_i(R_*)|\le T$. \Cref{cl:overview-lr-definition} therefore implies that at most $L$ distinct codewords become dangerous along the entire
execution before the budget is exceeded.

\item \emph{A newly dangerous codeword is unlikely to be a solution.}
Suppose a message first makes a codeword $c$ dangerous, and let $R$ be the rectangle reached after this message, still within the budget.
Immediately before the message, fewer than $\lceil\zeta n\rceil$ of the membership bits queried by $c$ were fixed. The message changes only the sender's factor. Since the sender holds $n/k$ input lists, at most $n/k$ of the bits queried by $c$ belong to the sender. Hence at least 
\begin{equation}\label{eq:overview-tau}
\tau:=n-\frac nk-\lceil\zeta n\rceil\approx\left(1-\frac1k-\zeta\right)n
\end{equation}
queried membership bits in the other parties' inputs remain unfixed at $R$. Conditional on reaching $R$, these bits are independent and uniform.
Fix any $\tau$ of them. The number of zeros among them has distribution $\operatorname{Bin}(\tau,1/2)$, and each zero contributes one to
$\err_S(c)$. Therefore
\begin{equation}\label{eq:overview-eta}
 \Pr[\err_S(c)\le(1-\alpha)n\mid R]
 \le\eta
 \quad\text{where}\quad
 \eta
 :=\Pr[\operatorname{Bin}(\tau,1/2)\le(1-\alpha)n].
\end{equation}
\end{itemize}

Along each execution, order the codewords that become dangerous, before the budget exceeded, by their first appearance.
For each $j\in[L]$, let $E_j$ be the event that the $j$th codeword in the list recovery bound exists and
has $\err_S$ value at most $(1-\alpha)n$.
By the law of total expectation and \eqref{eq:overview-eta}, we have $\Pr[E_j]\le\eta$.
A union bound therefore gives $\Pr\!\left[\exists j\in[L],~E_j\right] \le L\eta$. In words, the probability that any solution codeword becoming dangerous before exceeding the budget is at most $L\eta$.

At a successful leaf, verification exposes every query of the output codeword. Thus, if the budget remains, its successful output is among the dangerous candidates. Combining this with \eqref{eq:overview-codim}, we obtain
\begin{equation}
 \Pr[\Pi\text{ succeeds}]
 \le \frac{|\Pi|}{T}+L\eta.
 \label{cl:overview-subcube-tradeoff}
\end{equation}
Thus, constant success probability and $L\eta=o(1)$ would imply $|\Pi|=\Omega(T)$.

\paragraph{Structure-vs-randomness dichotomy.}
To prove a lower bound for general protocols, we use the
structure-vs-randomness framework developed by Yang and
Zhang \cite{yangzhang2024}, who build upon query-to-communication lifting techniques~\cite{gpw2020}.
We use this framework to refine any general communication protocol
into a subcube-like execution. The nodes of a subcube-like execution
correspond to subcube-like rectangles
$R=X_1\times\cdots\times X_k$ defined such that
\begin{itemize}[itemsep=0pt,topsep=4pt]
 \item $(X_a)_{I_a}$ are fixed strings for some index sets
 $I_a\subseteq N_a$. \hfill (Structure)

 \item $(X_a)_{N_a\setminus I_a}$ are pseudorandom (they ``look'' like
 the uniform distribution). \hfill (Randomness)
 
 More precisely, they are $\gamma$-dense:
every assignment to any $J\subseteq N_a\setminus I_a$ has mass at most $2^{-\gamma|J|}$. Hence the probability that the $\tau$ unfixed nonsender queries have at most $(1-\alpha)n$ failures is at
most 
\begin{equation}\label{eq:overview-eps-gamma}
\epsilon_\gamma=2^{-\gamma\tau}\binom{\tau}{\le(1-\alpha)n}=2^{(1-\gamma)\tau}\Pr[\operatorname{Bin}(\tau,1/2)\le(1-\alpha)n]=2^{(1-\gamma)\tau}\eta.
\end{equation}
\end{itemize}

This shows that subcube-like protocols behave similarly enough to subcube protocols that we may simply re-do the proof sketched above for subcube protocols, replacing $\eta$ by $\epsilon_\gamma$.
We will also prove that the refinement to obtain the subcube-like protocols incur an $O(1/(1-\gamma))$ blowup in expectation, independently of the number of parties (\Cref{cl:preprocessing}).
By \eqref{eq:overview-tau}, \eqref{eq:overview-eps-gamma}, and \eqref{eq:overview-eta}, we therefore obtain
\begin{equation}\label{eq:after-refine}
 \Pr[\Pi\text{ succeeds}]
 \le
 O\left(\frac{|\Pi|}{(1-\gamma)T}\right)
 +2^{(1-\gamma)\tau}\cdot L\cdot\Pr[\operatorname{Bin}(\tau,1/2)\le(1-\alpha)n],
\end{equation}
where
\begin{equation}\label{eq:overview-tau-2}
\tau\approx\left(1-\frac1k-\zeta\right)n.
\end{equation}

\paragraph{List recovery of the Hermite code.}
We have been ignoring a non-trivial issue---the list recovery bound required by our argument must be proved for the Hermite code $C_\Hrm$.
We obtain it by extending the hidden-derivative interpolation of~\cite{bcpzz2026,jeronimo2026}. The change is that each stream entry now specifies a length-$\ell$ jet, rather than one polynomial value.
We impose these jet constraints on an auxiliary polynomial $Q(X,Y_0,\ldots,Y_r)$, so that every high-score candidate $P$ satisfies $Q(X,P,P^{(1)},\ldots,P^{(r)})=0$. The differential-root bound of~\cite[Theorem~4.3]{kopparty2015} then bounds the number of candidates. 
For \Cref{cl:overview-lr-definition}, our Hermite extension (\Cref{hendpoint:lr}) gives
\begin{equation}\label{eq:overview-lr}
\zeta=2\alpha-1-\Theta\left(\frac1{\log q}\right),
\qquad
 T=
 \widetilde{\Omega}_{\ell,\alpha}
 \left(q^{1+\ell\cdot\kappa(\alpha)} \right),
 \qquad
 L=2^{o_{\ell,\alpha}(n)},
\end{equation}
where $\kappa(\alpha)=(4\alpha-3)/(4\alpha-2)$.

Our discussion so far has been about a classical communication lower bound. To obtain a streaming lower bound, take $k=\Theta(\log q)$ parties, which ensures $\tau>2\cdot(1-\alpha)n$ by \eqref{eq:overview-tau-2}.
Then choosing $\gamma=1-o_{\ell,\alpha}(1)$ appropriately ensures the second term in \eqref{eq:after-refine} is $L\cdot2^{-o_{\ell,\alpha}(n)}=o(1)$ by \eqref{eq:overview-lr}.
Since the verified protocol succeeds with probability at least $1/2$, \eqref{eq:after-refine} requires $|\Pi|=\Omega((1-\gamma)\cdot T)=\widetilde\Omega(T)$ bits. Finally, a $\npass$-pass streaming algorithm using $S$ bits gives a protocol with $(k\npass-1)S$ communicated bits. 
Thus $S=\widetilde{\Omega}_{\ell,\alpha}\left( q^{1+\ell\cdot\kappa(\alpha)}/\npass\right)$ as claimed.

\subsection{Classical upper bounds}

Finally we give a sketch of our classical streaming algorithms.

Our algorithm for \Cref{thm:classical-upper-summary}  independently keeps each positive address $(i,z,1)$ with rate $\eta$ during a single pass over the stream. After the pass, it exhaustively searches for a polynomial matching at least the acceptance threshold $h_0$ sampled ones.  
The analysis has two parts: showing that every accepted polynomial has high score, and showing that some polynomial is accepted. For the first, Hermite list recovery bounds the number of polynomials at each score. Combining this bound with the sampling rate shows that, with high probability, no polynomial of score below $\alpha n$ matches enough sampled ones.  For the second, we consider sets of $h$ positive addresses belonging to high-score polynomials, which we call certificates. There are many such certificates, and a second-moment argument controls their overlap to show that a complete certificate survives with a small but sufficient probability. A concentration inequality then shows that an almost complete certificate survives with high probability. We therefore set the acceptance threshold $h_0$ slightly below $h$. Together, these two claims ensure that the algorithm returns a polynomial of score at least $\alpha n\ge\alpha\OPT$.  Balancing the sampling rate $\eta$ and threshold $h_0$ yields an $\alpha$-approximation claimed in \Cref{thm:classical-upper-summary}.

Our algorithm for \Cref{thm:classical-three-quarters-summary} combines interpolation with random completion in a single pass. First, it collects one allowed jet at each of $k=n/2-4$ distinct evaluation points. Whenever a positive entry $(i,z,1)$ arrives at a point $i$ not yet selected, it selects $i$ and records $y_i=z\in S_i$. Once $k$ points have been selected, it stops collecting constraints; let $I$ be the set of selected points.
Next, it constructs a constant number $R$ of independent random candidate polynomials of degree less than $K$, each satisfying $J_\ell P(i)=y_i$ for $i\in I$. Hermite interpolation imposes $\ell k$ linear constraints on the $K$ coefficients, leaving $K-\ell k=4\ell$ free parameters. Sampling these parameters uniformly gives a uniformly random polynomial satisfying all the selected constraints.
The candidates can now be scored exactly using the rest of the stream. Each candidate already satisfies the $k$ selected constraints. For each remaining positive entry $(i,z,1)$ with $i\notin I$, the algorithm scores if $J_\ell P(i)=z$. The algorithm returns the candidate with the largest counter.
At each remaining point $i$, a random completion has a uniform jet in $\Sigma$, which belongs to $S_i$ with probability roughly $1/2$. The expected score is therefore roughly $k+n/4=3n/4$.

\section{Preliminaries}\label{sec:preliminaries}

Throughout the paper, we reserve $q$ and $\ell$ for the field size and the jet length in \Cref{prob:hopi} respectively. We also reserve $\Sigma=\Fbb_q^\ell$ as the alphabet of the length-$\ell$ jet. We will use $0$ to denote the zero element and its meaning may vary in different settings; for example, $0\in\Sigma$ means the length-$\ell$ vector where each entry is $0\in\Fbb_q$. We use $\propto$ to denote that the LHS quantity is proportional to the RHS. We use $\|\cdot\|$ to denote the $\ell_2$ norm of a vector and operator norm of a matrix.

\paragraph{Fourier transform.}
We identify $\Fbb_q$ as $\Zbb/q\Zbb$ and use $\Fsf_q$ to denote the Fourier transform over $\Fbb_q$, with a negative-sign convention: for each $c\in\Fbb_q$, we have $\Fsf_q\ket{c} =q^{-1/2}\sum_{e\in\Fbb_q}\omega^{-e\cdot c}\ket{e}$ where $\omega=e^{2\pi\mathrm i/q}$.
We write $\Fsf_\Sigma=\Fsf_q^{\otimes\ell}$.

\paragraph{Block weight, distance, and inner product.}
Fix $x\in\Sigma^n$.
For $i\in[n]$, we use $x_i$ to denote its block symbol in $\Sigma=\Fbb_q^\ell$; in addition for $j\in[\ell]$, we use $x_{i,j}\in\Fbb_q$ to denote the $j$th coordinate symbol of $x_i$.
We define $\wt(x)=|\{i\in[n]\colon x_i\ne0\}|$ as the block weight of $x$.

Fix another $y\in\Sigma^n$. We use $d(x,y)=\wt(x-y)$ to denote the block distance between $x$ and $y$.
We use $\langle x,y\rangle=\sum_{i\in[n]}\sum_{j\in[\ell]}x_{i,j}y_{i,j}\in\Fbb_q$ as their inner product.

Fix a matrix $A\in\Sigma^{k\times n}$. For $u\in\Fbb_q^k$, we use $A^\top u\in\Sigma^n$ to denote the linear combination of rows of $A$ using coefficients of $u$: $(A^\top u)_i=\sum_{j\in[k]}u_jA_{j,i}$.
For $x\in\Sigma^n$, we use $Ax\in\Fbb_q^k$ to denote the inner product of rows of $A$ with $x$: $(Ax)_i=\sum_{j\in[n]}\langle A_{i,j},x_j\rangle$.
Define $\im(A)=\left\{A^\top u\colon u\in\Fbb_q^k\right\}$ and $\ker(A)=\left\{x\in\Sigma^n\colon Ax=0\right\}$.

\paragraph{Error-correcting codes.}
All codes in this paper are linear.
A (linear) code of length $n$ over $\Sigma$ is a nonzero $\Fbb_q$-linear subspace $C\subseteq\Sigma^n$, where every $x\in C$ is a codeword. The block distance of $C$ is defined by $d(C)=
\min_{0\ne x\in C}\wt(x)$.
We say $G\in\Sigma^{k\times n}$ is a generator matrix of $C$ if $C=\im(G)$; and say $H\in\Sigma^{k\times n}$ is a parity-check matrix of $C$ if $C=\ker(H)$.
The dual code of $C$ is $C^\perp=\{y\in\Sigma^n\colon\langle x,y\rangle=0\text{ for all }x\in C\}$.
It is standard that if $G$ is the generator matrix of $C$, then $G$ is the parity-check matrix of $C^\perp$.

A radius-$t$ unique decoder for $C$, given input $y=c+e$ with $\wt(e)\le t$, outputs $c\in C$, which is unique whenever $t<d(C)/2$. 
We say $C$ is $(t,L)$-list-decodable if, for every $y\in\Sigma^n$, $|\{c\in C\colon d(c,y)\le t\}|\le L$; and a radius-$t$ list decoder outputs all codewords in this set.

For $\zeta\in(0,1]$ and sets $F=(F_1,\ldots,F_n)$ where $F_1,\ldots,F_n\subseteq\Sigma$, define
\[
 \Lcal_\zeta(F)=\{c\in C:|\{i\in[n]:c_i\in F_i\}|\ge\lceil\zeta n\rceil\}.
\]
We say $C$ is $(\zeta,T,L)$-list-recoverable if $|\Lcal_\zeta(F)|\le L$ holds for every family with $\sum_{i\in[n]}|F_i|\le T$.

\paragraph{Syndrome decoding.}
For each $e\in\Sigma^n$, the syndrome of $e$ given a parity-check matrix $H\in\Sigma^{k\times n}$ is $He\in\Fbb_q^k$ where the $i$th coordinate of $He$ is the inner product of $e$ and the $i$th row of $H$.
It is also immediate that if $H$ is the parity-check matrix of $C$, then $Hx=0$ for every $x\in C$.
At radius $t$ and syndrome $s\in\Fbb_q^k$, the syndrome decoding finds errors in
\[
 \Ecal_t(s)=\{e\in\Sigma^n\colon He=s,\ \wt(e)\le t\}.
\]
Observe that every $e'\in\Sigma^n$ and every $e\in\Ecal_t(He')$ give a codeword $e'-e\in C$. 

\paragraph{Hermite interpolation.}
For each integer $K\ge0$, let
$\Pcal_K=\{f\in\Fbb_q[X]:\deg(f)<K\}$ with
$\Pcal_0=\{0\}$.
The jet $J_\ell f(a)$ defined in \Cref{sec:intro-setup} specifies $f$ modulo $(X-a)^\ell$, which leads to the following simple fact.

\begin{fact}[See e.g., {\cite[Theorem~3.1]{CampsEtAl26}}]\label{fact:hermite-interpolation}
Let $K\ge1$. For any $s$ distinct points $a_1,\ldots,a_s\in\Fbb_q$
with $\ell s\le K$ and any assignment of their length-$\ell$ jets, there is a unique $f_0\in\Pcal_{\ell s}$ such that every $f\in\Pcal_K$ satisfying these jets can be written as 
\[
 f=f_0+v\cdot g
 \quad\text{where}\quad
 v(X)=\prod_{i=1}^s(X-a_i)^\ell
 \quad\text{and}\quad
 g\in\Pcal_{K-\ell s}.
\]
Consequently, the jets of a uniform $f\in\Pcal_K$ at any $\lfloor K/\ell\rfloor$ distinct points are independently uniform in $\Sigma$.
\end{fact}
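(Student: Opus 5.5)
The statement to prove is \Cref{fact:hermite-interpolation}. Both parts follow from the Chinese remainder theorem in $\Fbb_q[X]$.

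\textbf{Reformulating the jet condition.} First I will show that a length-$\ell$ jet at $a$ determines $f$ modulo $(X-a)^\ell$. Substituting $X=a$ in the defining identity $f(X+T)=\sum_j f^{(j)}(X)T^j$ gives
\[
f(a+T)=\sum_j f^{(j)}(a)T^j .
\]
Put $T=X-a$. Then $f(X)=\sum_{j<\ell}f^{(j)}(a)(X-a)^j+(X-a)^\ell h(X)$ for some polynomial $h$. So $J_\ell f(a)$ is exactly the set of coefficients of the residue of $f$ modulo $(X-a)^\ell$, written in the basis $\{(X-a)^j\}_{j<\ell}$. Hence the map $f\mapsto J_\ell f(a)$ is an $\Fbb_q$-linear bijection
\[
\Fbb_q[X]/(X-a)^\ell\to\Sigma .
\]

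\textbf{Existence and form of the solutions.} The moduli $(X-a_i)^\ell$ are pairwise coprime because the points $a_i$ are distinct. By CRT, the map
\[
\Fbb_q[X]/(v)\to\prod_i\Fbb_q[X]/(X-a_i)^\ell
\]
is an isomorphism, where $\deg v=\ell s$. The remainders modulo $v$ are exactly $\Pcal_{\ell s}$. So for any prescribed jets there is a unique $f_0\in\Pcal_{\ell s}$ realizing them. (Alternatively, the evaluation map $\Pcal_{\ell s}\to\Sigma^s$ is linear between spaces of equal dimension $\ell s$, and it is injective because a polynomial in its kernel is divisible by $v$ and has degree below $\deg v$.)

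Now take any $f\in\Pcal_K$ with the same jets. Then $f-f_0$ vanishes modulo each $(X-a_i)^\ell$, so by coprimality $v$ divides $f-f_0$. Write $f=f_0+vg$. Comparing degrees, and using $\ell s\le K$, gives $\deg g<K-\ell s$. Conversely, every such $f_0+vg$ lies in $\Pcal_K$ and has the prescribed jets, since $vg\equiv0$ modulo each $(X-a_i)^\ell$.

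\textbf{Uniformity consequence.} Set $s=\lfloor K/\ell\rfloor$, so $\ell s\le K$. Consider the linear jet map $\Pcal_K\to\Sigma^s$. By the above it is surjective, since $f_0$ exists for every target. Every fiber is a coset of $v\cdot\Pcal_{K-\ell s}$, so all fibers have the same size $q^{K-\ell s}$. Therefore the pushforward of the uniform distribution on $\Pcal_K$ is uniform on $\Sigma^s$, which means the $s$ jets are independent and each is uniform in $\Sigma$.

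There is no serious obstacle here. The only step that needs care is the first one, the identification of the Hasse-derivative jet with the residue modulo $(X-a)^\ell$, which follows directly from the Taylor-type identity defining Hasse derivatives.
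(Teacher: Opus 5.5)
Your proof is correct and takes the same route the paper indicates. The paper only remarks that $J_\ell f(a)$ specifies $f$ modulo $(X-a)^\ell$ and then cites the fact, and your argument (the Taylor-identity reformulation, the Chinese remainder theorem, and the equal-fiber count for the jet map) supplies exactly the details it omits.
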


\paragraph{Hermite code.}
Recall Hermite OPI from \Cref{prob:hopi}. It naturally gives the following Hermite code.

\begin{definition}[Hermite code]\label{def:hermite-code}
For distinct $a_1,\ldots,a_n\in\Fbb_q$ and $K<\ell n$,
define the Hermite code $C_\Hrm$ by
\[
 C_\Hrm
 =\{(J_{\ell}f(a_1),\ldots,J_{\ell}f(a_n)):f\in\Pcal_K\}
 \subseteq\Sigma^n.
\]
We use $G_\Hrm\in\Sigma^{K\times n}$ to denote its generator matrix in the monomial basis: the $r$th\footnote{We conveniently index rows of $G_\Hrm$ by $0,1,\ldots,K-1$.} row of $G_\Hrm$ is the length-$\ell$ jets of $X^r$ evaluated at $a_1,\ldots,a_n$.
Using $G_\Hrm$, the codeword of $f_u(X)=\sum_{r=0}^{K-1}u_rX^r$ is simply $G_\Hrm^\top u=(J_\ell f_u(a_1),\ldots,J_\ell f_u(a_n))$.
\end{definition}

For $\ell=1$, \Cref{def:hermite-code} is the Reed--Solomon code. For general $\ell$, it is the univariate multiplicity code~\cite{kopparty2015}; we call it Hermite code to be consistent with the Hermite OPI problem.
This coding-theoretic view will be very helpful, and we give the following formulation of streaming HOPI, equivalent to \Cref{prob:streaming-hopi}.

\begin{problem}[Streaming HOPI by Hermite code]\label{prob:stream-hopi_code}
Let prime $q\ge3$, integer $\ell\ge1$, and $\alpha\in(0,1]$. Define $n=q-1$.
A streaming $(\alpha,q,\ell)$-HOPI instance is given by $N=n\cdot q^\ell$ classical entries $(i,z,\1[z\in S_i])$ for each $i\in\Fbb_q^\times=[n]$ and $z\in\Fbb_q^\ell$, which arrive one at a time and each $S_i$ has size $(q^\ell-1)/2$.

Let $C_\Hrm$ be the Hermite code with $a_i=i\in\Fbb_q^\times$ and $K=\ell n/2$. Recall its generator matrix $G_\Hrm$ in \Cref{def:hermite-code}.
For each $u\in\Fbb_q^K$, define its score as the number of coordinates $i\in[n]$ of $G_\Hrm^\top u$ that lies in its prescribed $S_i$:
$$
\score(u)=\sum_{i=1}^n\1\left[(G_\Hrm^\top u)_i\in S_i\right].
$$
The goal is to output some $u$ such that $\score(u)\ge\alpha\cdot\OPT$ where $\OPT=\max_{u'}\score(u')$.
\end{problem}

\begin{fact}\label{fct:hopi-equal}
\Cref{prob:streaming-hopi} is equivalent to \Cref{prob:stream-hopi_code}.
\end{fact}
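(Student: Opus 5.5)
The plan is to exhibit the obvious bijection between the two formulations and check that it preserves the input stream, the feasible set, and the objective, so that approximate maximizers correspond. There are three components to match: the input, the candidate outputs, and the score.

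\emph{Input.} The input in both problems is literally the same stream. In \Cref{prob:streaming-hopi} the entries $(i,z,\1[z\in S_i])$ range over $i\in[n]$ and $z\in\Sigma=\Fbb_q^\ell$. Here $[n]$ is identified with $\Fbb_q^\times$ and $|S_i|=(|\Sigma|-1)/2=(q^\ell-1)/2$. This is exactly the data of \Cref{prob:stream-hopi_code}, with $N=n\cdot q^\ell$.

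\emph{Candidate outputs.} The map $u\mapsto f_u(X)=\sum_{r=0}^{K-1}u_rX^r$ is a bijection from $\Fbb_q^K$ to $\Pcal_K$, the polynomials of degree less than $K=\ell n/2$. It can be computed in either direction in $O(K\log q)$ space and $\poly(q,\ell)$ time, so converting outputs costs nothing in any of the resource measures.

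\emph{Score.} By \Cref{def:hermite-code} with $a_i=i$, we have $(G_\Hrm^\top u)_i=J_\ell f_u(i)$. This holds because the $r$th row of $G_\Hrm$ lists the jets of $X^r$, and the map $f\mapsto J_\ell f(a)$ is $\Fbb_q$-linear, since Hasse derivatives are linear. Hence
\[
\score(u)=\sum_{i}\1\bigl[J_\ell f_u(i)\in S_i\bigr]=\score(f_u).
\]

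Given the score identity, the two maxima agree, $\OPT=\max_{g\in\Pcal_K}\score(g)$. So $u$ is an $\alpha$-approximate maximizer in \Cref{prob:stream-hopi_code} if and only if $f_u$ is one in \Cref{prob:streaming-hopi}. Any streaming algorithm for one problem therefore yields an algorithm for the other with the same number of passes, the same success probability, and the same space up to the trivial conversion.

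The only step that needs any care is the linearity identity $J_\ell(\sum_r u_rX^r)(a)=\sum_r u_r J_\ell X^r(a)$. It follows by comparing coefficients of $T^j$ in $f(X+T)=\sum_r u_r(X+T)^r$.
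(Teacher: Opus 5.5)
Your proposal is correct and matches the paper's reasoning. The paper gives no separate proof: it treats the fact as immediate from the bijection $u\mapsto f_u$ and the identity $G_\Hrm^\top u=(J_\ell f_u(a_1),\ldots,J_\ell f_u(a_n))$ recorded in \Cref{def:hermite-code}, with the same stream and notion of $\alpha$-approximation on both sides, and your comparison of $T^j$-coefficients in $f(X+T)=\sum_r u_r(X+T)^r$ is exactly the linearity check that justifies this identity.
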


\paragraph{Properties of Hermite codes.}

Unique decoding of univariate multiplicity codes goes back to~\cite{Nie01}; see~\cite[Sections~3.1 and~3.1.1]{Kop14} for its rational-reconstruction formulation. We use the corresponding dual-syndrome formulation, whose implementation uses the extended Euclidean algorithm~\cite{sugiyama1975euclidean} and has the desired time and space bounds.

\begin{theorem}[\cite{Kop14,sugiyama1975euclidean}]\label{thm:hermite-compact-decoder}
In $O(n\ell\log q)$ space and $\poly(n,\ell,\log q)$ time, there is a classical deterministic algorithm for $\Dec_\Hrm\colon\Fbb_q^K\to\Sigma^n$ such that $\Dec_\Hrm(G_\Hrm e)=e$ whenever $e\in\Sigma^n$ satisfies $\wt(e)<d(C_\Hrm^\perp)/2$, where $d(C_\Hrm^{\perp})=\lfloor K/\ell\rfloor+1$ is the block distance of $C_\Hrm^\perp$.
\end{theorem}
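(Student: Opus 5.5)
The proof reduces syndrome decoding for $C_\Hrm^\perp$ to rational reconstruction, i.e.\ Hermite/Padé interpolation, and solves that with the extended Euclidean algorithm.

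\emph{Step 1: the generating function of the error.} Write $A(X)=\prod_{i\in[n]}(X-a_i)^\ell$, so $\deg A=\ell n$. The syndrome of $e$ is
\[
 s_r=(G_\Hrm e)_r=\sum_i\sum_{j<\ell}\binom{r}{j}a_i^{r-j}e_{i,j},
\]
since the $j$th Hasse derivative of $X^r$ is $\binom rj X^{r-j}$. Form the power series $\sum_{r\ge0}s_rX^{-r-1}$. Summing geometric series termwise gives
\[
 \sum_r\binom rj a^{r-j}X^{-r-1}=\frac1{(X-a)^{j+1}}.
\]
Hence the syndrome $(s_0,\dots,s_{K-1})$ equals the first $K$ coefficients of the expansion at infinity of the rational function
\[
 \frac{N_e(X)}{D_e(X)},\qquad D_e(X)=\prod_{i\in\mathrm{supp}(e)}(X-a_i)^\ell,
\]
with $\deg N_e<\deg D_e=\ell\,\wt(e)$. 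Moreover $N_e/D_e$ is in lowest terms at each support point: its top-order pole coefficient at $a_i$ comes from $e_{i,\ell-1}$, and more generally the partial-fraction coefficients $e_{i,j}$ are recovered from the principal part at $a_i$.

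\emph{Step 2: uniqueness of the reconstruction.} Suppose $w=\wt(e)<(\lfloor K/\ell\rfloor+1)/2$, so $2\ell w\le K$. If another pair $(N',D')$ with $\deg D'\le\ell w$ agrees with the syndrome on $K$ coefficients, then $N_eD'-N'D_e$ has degree below $\deg D_e+\deg D'-K\le0$, hence vanishes. So the reduced fraction is unique.

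\emph{Step 3: the algorithm.} Run the extended Euclidean algorithm (Sugiyama) on the pair
\[
 X^{K}\quad\text{and}\quad S(X)=\sum_{r<K}s_rX^{K-1-r},
\]
stopping at the first remainder of degree below $K/2$. This yields $D$ up to a scalar. Factor $D$ by testing each $a_i$, which takes $n$ evaluations and repeated division. Finally read off each $e_i$ from the partial-fraction decomposition: the coefficients of $N/D$ at $(X-a_i)^{-(j+1)}$ give $e_{i,j}$. Each is obtained by computing $N\cdot\bigl(D/(X-a_i)^\ell\bigr)^{-1}\bmod (X-a_i)^\ell$, i.e.\ the Hasse-derivative/jet computation.

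\emph{Resource bounds.} All polynomials involved have degree $O(K)=O(n\ell)$ and the Euclidean algorithm keeps $O(1)$ of them, giving $O(n\ell\log q)$ space. Time is $\poly(n,\ell,\log q)$.

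\emph{Main obstacle and the distance claim.} The main obstacle is justifying the equivalence in Step 1 carefully. This needs the identity relating $G_\Hrm$'s rows to Hasse derivatives, and the check that the principal parts determine $e$ bijectively; the latter holds since the $(X-a_i)^{-(j+1)}$ for $j<\ell$ are independent. The distance claim $d(C_\Hrm^\perp)=\lfloor K/\ell\rfloor+1$ follows similarly. A dual codeword is exactly an $e$ with $N_e/D_e=O(X^{-K-1})$. This forces $\deg D_e\ge K+1$, since otherwise $N_e=0$; hence $\ell\wt(e)>K$. Conversely, any $\lfloor K/\ell\rfloor+1$ points admit $N$ with $\deg N\le \ell w-K-1$, which attains the bound.
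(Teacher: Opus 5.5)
Your route is the one the paper relies on by citation. You read the syndrome $G_\Hrm e$ as the truncated expansion at infinity of $\sum_{i,j}e_{i,j}(X-a_i)^{-(j+1)}$, recover that rational function by Sugiyama-style reconstruction with the extended Euclidean algorithm, and then take partial fractions. Your uniqueness degree count and your derivation of $d(C_\Hrm^\perp)=\lfloor K/\ell\rfloor+1$ are correct.

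One claim is false, though: $N_e/D_e$ need not be in lowest terms. The pole order at $a_i$ is $m_i=1+\max\{j:e_{i,j}\ne0\}$. This is less than $\ell$ whenever $\ell\ge2$ and $e_{i,\ell-1}=0$; for instance, $e_i=(1,0,\dots,0)$ gives a simple pole.

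As a result, the Euclidean algorithm returns the reduced denominator $\prod_i(X-a_i)^{m_i}$ up to a scalar, not $D_e$. Your formula $N\cdot(D/(X-a_i)^\ell)^{-1}\bmod(X-a_i)^\ell$ then stops making sense, since $D/(X-a_i)^\ell$ is no longer a polynomial. The repair is immediate. The roots of the returned $D$ among the $a_i$ are exactly $\mathrm{supp}(e)$, because $m_i\ge1$ if and only if $e_i\ne0$. So either use $(X-a_i)^{m_i}$ in the partial-fraction step, or first multiply $N$ and $D$ by $\prod_i(X-a_i)^{\ell-m_i}$. Uniqueness of partial fractions still yields every $e_{i,j}$.

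Two smaller points:
\begin{itemize}
\item Say how $N$ is obtained. It is the polynomial part of $D\sum_{r<K}s_rX^{-r-1}$, which involves only $s_r$ with $r<\deg D\le K$. Alternatively, it is the Bézout coefficient of $X^K$, up to sign.
\item State the uniqueness in Step~2 for $\deg D'\le\lfloor K/2\rfloor$, which is the bound the Euclidean algorithm actually guarantees, rather than $\deg D'\le\ell w$, since the decoder does not know $w$. Your computation goes through unchanged because $\ell w\le\lfloor K/2\rfloor$.
\end{itemize}
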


We also note a duality between $C_\Hrm$ and $C_\Hrm^\perp$ and a list decoding bound.

\begin{fact}[{\cite[Theorem~5.6]{CampsEtAl26}}]
\label{fct:hermite-duality}
For $1\le K<\ell n$, there are invertible $\Fbb_q$-linear maps $T_i\colon\Sigma\to\Sigma$
such that $C_\Hrm^\perp
 =\left\{\bigl(T_iJ_\ell g(a_i)\bigr)_{i=1}^n:
             g\in\Pcal_{\ell n-K}\right\}$.
The map $T=\bigoplus_iT_i$ preserves block distance and list-decoding bounds. In particular, if $K=\ell n/2$,
then $C_\Hrm^\perp=TC_\Hrm$.
\end{fact}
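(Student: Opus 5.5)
The plan is to prove the duality with the residue theorem for the rational function $f g/v$, where $v(X)=\prod_{i=1}^n(X-a_i)^\ell$ has degree $\ell n$. Fix $f\in\Pcal_K$ and $g\in\Pcal_{\ell n-K}$.

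\textbf{Step 1: the residues sum to zero.} Since $\deg(fg)\le \ell n-2$, the Laurent expansion of $fg/v$ at infinity has no $X^{-1}$ term. So the residue at infinity vanishes, and the residues of $fg/v$ at the finite poles $a_1,\ldots,a_n$ sum to zero. Over $\Fbb_q$ I will use this in the purely algebraic partial-fraction form: the sum of the coefficients of $(X-a_i)^{-1}$ in the partial-fraction decomposition of a rational function with denominator degree at least numerator degree plus $2$ is zero.

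\textbf{Step 2: each residue is a twisted jet pairing.} Write $v=(X-a_i)^\ell w_i$ with $w_i(a_i)\neq0$, and expand $1/w_i(a_i+T)=\sum_{m\ge0}c_{i,m}T^m$, so that $c_{i,0}=1/w_i(a_i)\ne0$. By the defining identity $f(a_i+T)=\sum_j f^{(j)}(a_i)T^j$ of Hasse derivatives, the residue at $a_i$ is the coefficient of $T^{\ell-1}$ in $f(a_i+T)\,g(a_i+T)/w_i(a_i+T)$, namely
\[
 \sum_{j+k+m=\ell-1}f^{(j)}(a_i)\,g^{(k)}(a_i)\,c_{i,m}
 =\bigl\langle J_\ell f(a_i),\,T_iJ_\ell g(a_i)\bigr\rangle .
\]
Here $T_i$ is defined by $(T_iy)_j=\sum_{k}c_{i,\ell-1-j-k}\,y_k$, with indices starting at $0$. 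This is an anti-triangular Hankel matrix whose anti-diagonal entries all equal $c_{i,0}\neq0$, so $T_i$ is invertible. Combining with Step 1, every vector $\bigl(T_iJ_\ell g(a_i)\bigr)_i$ is orthogonal to every codeword of $C_\Hrm$. So the set $D$ of such vectors satisfies $D\subseteq C_\Hrm^\perp$.

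\textbf{Step 3: equality by dimension counting.} By \Cref{fact:hermite-interpolation}, a polynomial of degree $<\ell n$ is determined by its length-$\ell$ jets at $n$ distinct points. Hence both maps $f\mapsto (J_\ell f(a_i))_i$ on $\Pcal_K$ and $g\mapsto (T_iJ_\ell g(a_i))_i$ on $\Pcal_{\ell n-K}$ are injective, using $K<\ell n$ and invertibility of the $T_i$. Therefore $\dim C_\Hrm=K$ and $\dim D=\ell n-K=\dim C_\Hrm^\perp$, which forces $D=C_\Hrm^\perp$.

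\textbf{Step 4: the remaining claims.} Since each $T_i$ is invertible, $T_ix_i=0$ if and only if $x_i=0$. So $T=\bigoplus_iT_i$ preserves block weight, hence block distance. It is also a bijection of $\Sigma^n$ mapping codewords within block distance $t$ of $y$ onto codewords of the image code within distance $t$ of $Ty$, so $(t,L)$-list-decodability transfers. When $K=\ell n/2$ we have $\ell n-K=K$, and therefore $C_\Hrm^\perp=TC_\Hrm$.

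The main obstacle is Step 2: matching the residue expression exactly to the paper's inner product $\langle\cdot,\cdot\rangle$ on $\Sigma$, with the right index conventions, and checking that $T_i$ is invertible. Everything else is routine linear algebra.
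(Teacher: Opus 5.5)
Your proof is correct. The paper gives no argument for this fact; it imports it verbatim from \cite[Theorem~5.6]{CampsEtAl26}. So your residue computation is an independent, self-contained route rather than a reconstruction of anything in the text.

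The steps all check out:
\begin{itemize}
\item The vanishing of the residue sum follows by multiplying the partial-fraction decomposition of $fg/v$ by $v$ and comparing coefficients of $X^{\ell n-1}$. Each $s=1$ term contributes its coefficient with leading coefficient $1$, while the left side has no such term because $\deg(fg)\le\ell n-2$.
\item The coefficient of $T^{\ell-1}$ involves only $f^{(j)}$ and $g^{(k)}$ with $j,k\le\ell-1$. This is exactly why the pairing factors through length-$\ell$ jets.
\item The matrix $\bigl(c_{i,\ell-1-j-k}\bigr)_{j,k}$ has determinant $\pm c_{i,0}^{\ell}\ne0$, so each $T_i$ is invertible.
\item Injectivity of jet evaluation on $\Pcal_{\ell n}$ is the case $s=n$ of \Cref{fact:hermite-interpolation}, applied with $\ell n$ in place of $K$.
\end{itemize}

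Compared with the citation, your route gives an explicit description of the twists. Each $T_i$ is the anti-triangular Hankel matrix built from the Taylor coefficients of $1/w_i$ at $a_i$, where $w_i=v/(X-a_i)^\ell$. It depends only on $\ell$ and the evaluation points, not on $K$. For $\ell=1$ it reduces to the classical generalized Reed--Solomon dual multipliers $1/v'(a_i)$.

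The paper only ever uses the consequences you derive in Step~4: block-weight preservation and $C_\Hrm^\perp=TC_\Hrm$ in \Cref{lem:nearopt-dual-lists}, and the same duality for fewer rows in the proof of \Cref{cu:score-distribution}. Your argument would therefore make that dependence self-contained at the cost of about half a page.
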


\begin{theorem}[{\cite[Theorem~3.8]{KoppartyRonZewiSarafWootters2023}}]
\label{thm:krsw-list-recovery}
Let $0<\gamma<1/2$, prime $q\ge4/\gamma$, and integer
$\ell\ge16/\gamma^2$. Set $n=q-1$ and $K=\ell n/2$.
For the Hermite code $C_\Hrm$ evaluated at $\Fbb_q^\times$, every
$y\in\Sigma^n$ satisfies
\[
 \abs{\{c\in C_\Hrm:d(c,y)\le(1/2-\gamma)n\}}
 \le(\ell/\gamma)^{O(\ell/\gamma^2)}.
\]
\end{theorem}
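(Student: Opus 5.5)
The plan is the linear-algebraic list-decoding argument for univariate multiplicity codes (Guruswami--Wang style), followed by a pruning step in the spirit of the \texttt{PRUNE} procedure of~\cite{KoppartyRonZewiSarafWootters2023}. Since this theorem is quoted from~\cite{KoppartyRonZewiSarafWootters2023}, the goal is only to reconstruct its structure. Fix $y\in\Sigma^n$ and an integer $s\approx 2/\gamma$. First find a nonzero auxiliary polynomial
\[
 Q(X,Y_0,\ldots,Y_s)=A(X)+\sum_{j=0}^{s}A_j(X)Y_j,
 \qquad \deg A_j\le D,\quad \deg A\le D+K.
\]
For each $i\in[n]$, require the following vanishing condition: whenever $f$ has jet $y_i$ at $a_i$, the polynomial $Q(X,f,f^{(1)},\ldots,f^{(s)})$ vanishes to order $\ell-s$ at $a_i$. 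This is a linear condition on the coefficients of $Q$ depending only on $y_i$, because Hasse derivatives of order up to $\ell-s-1$ of $f^{(j)}$ are determined by the first $\ell$ jet entries. There are $n(\ell-s)$ homogeneous constraints against about $(s+2)D+K$ unknowns. Choosing $D\approx (n(\ell-s)-K)/(s+1)$ therefore guarantees a nonzero $Q$, and we may assume some $A_j\ne0$ (otherwise $A$ itself would vanish to too high total order).

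Next, show that every $c=J_\ell f\in C_\Hrm$ with $d(c,y)\le(1/2-\gamma)n$ satisfies the differential equation $A+\sum_jA_jf^{(j)}=0$. The polynomial $R=Q(X,f,\ldots,f^{(s)})$ has degree at most $D+K$. It vanishes to order $\ell-s$ at each of at least $(1/2+\gamma)n$ agreement points. So $R=0$ as soon as $(\ell-s)(1/2+\gamma)n>D+K$. With $K=\ell n/2$ and $D\approx n(\ell/2-s)/(s+1)$, this reduces roughly to $\gamma\ell\gtrsim \ell/(s+1)+s$. That inequality holds for $s\approx 2/\gamma$ and $\ell\ge16/\gamma^2$.

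The first real obstacle is bounding the solution set of this order-$s$ Hasse-differential equation inside $\Pcal_K$. Over characteristic $q$ with $K$ possibly larger than $q$, the usual "affine space of dimension $\le s$" argument fails, since $X^q$ has vanishing low-order Hasse derivatives. I would first try Kopparty's differential-root bound \cite[Theorem~4.3]{kopparty2015}. Alternatively, expand $f=\sum_{t}f_t(X)\,X^{qt}$-style in blocks and use that the equation determines the coefficients of $f$ degree by degree except at at most $s$ pivotal positions per residue class. The hypothesis $q\ge4/\gamma>s$ is what keeps the leading nonzero $A_j$ invertible in this recursion. Either way, the aim is that all candidates lie in an affine subspace $W\subseteq\Pcal_K$ of dimension $O(\ell s)$ at worst, and ideally $O(s)$.

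Finally, prune $W$: count the codewords of $W$ within distance $(1/2-\gamma)n$ of $y$. Pick $O(s/\gamma)$ random coordinates and consider candidates agreeing with $y$ there. Two distinct candidates in $W$ differ as codewords of relative distance at least $1/2$, so their jets cannot both agree with $y$ on many random coordinates. A dimension-reduction argument then shows that each round of sampling cuts the dimension of the surviving affine space. Iterating over the $O(\ell s)$ dimensions, with a branching factor of $q^{\ell}$-free counts of size $\mathrm{poly}(\ell/\gamma)$ per round, gives $(\ell/\gamma)^{O(\ell s/\gamma)}=(\ell/\gamma)^{O(\ell/\gamma^2)}$. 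I expect the solution-space dimension bound for the Hasse ODE in large degree to be the main difficulty; the interpolation step and the pruning step are routine parameter bookkeeping.
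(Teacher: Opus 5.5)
The paper gives no proof of this statement; it quotes it from \cite{KoppartyRonZewiSarafWootters2023}. Your architecture (linear interpolation, then an affine solution space $W$, then pruning) is a sound route. Your interpolation and vanishing steps are correct for $s\approx2/\gamma$ and $\ell\ge16/\gamma^2$. The gaps are in the two steps that carry the content.

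First, $\dim W$ is never actually bounded. Because your $Q$ is linear in the $Y_j$, the solution set in $\Pcal_K$ is automatically an affine $\Fbb_q$-subspace, so a cardinality bound would suffice. However, \Cref{hfixed:differential-roots}, in the form used in this paper, requires the $(1,d,d-1,\ldots,d-r)$-weighted degree of $Q$ to be below $q^2$. For your $Q$ that weighted degree is at least $K-1-s\approx\ell q/2$, so this route needs roughly $\ell<2q$. The statement allows $q$ near $4/\gamma$ with $\ell\ge16/\gamma^2\approx q^2$, so the route fails even at the minimal $\ell$.

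The block-expansion route does work, but not for the reason you give: $q>s$ is not what makes $A_j$ invertible. Let $j^*\le s$ be the largest index with $A_{j^*}\ne0$, and pick $a$ in an extension of $\Fbb_q$ with $A_{j^*}(a)\ne0$. An extension may be needed, since $D$ can exceed $q$ and $A_{j^*}$ can then vanish on all of $\Fbb_q$; a linear system over $\Fbb_q$ has the same solution dimension over any extension. In the $m$th Hasse derivative at $a$ of the homogeneous equation $\sum_jA_jf^{(j)}=0$, the only term containing $f^{(j^*+m)}(a)$ is $A_{j^*}(a)\binom{j^*+m}{j^*}f^{(j^*+m)}(a)$. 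Since $j^*<q$, Lucas' theorem gives $\binom{p}{j^*}\equiv\binom{p\bmod q}{j^*}$, which vanishes mod $q$ only when $p\bmod q<j^*$. By induction on $p$, a solution is therefore determined by its coefficients $f^{(p)}(a)$ with $p<K$ and $p\bmod q<j^*$. There are at most $j^*\lceil K/q\rceil$ of these, so $\dim W\le s\lceil K/q\rceil\le s\lceil\ell/2\rceil$. This is where $q>s$ actually enters.

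Second, the pruning step is not an argument as written. The pairwise fact that two candidates jointly agree with $y$ on at most half the coordinates controls collisions, not the list size. The accounting with $O(s/\gamma)$ random coordinates per round and a branching factor $\poly(\ell/\gamma)$ is unsupported. A one-coordinate averaging induction is what works:
\begin{itemize}
\item Let $\mathcal H$ be an affine space of codewords of dimension $d\ge1$, and let $S$ be the set of coordinates on which $\mathcal H$ is not constant. $S$ contains the support of a nonzero codeword, so $|S|>n/2$.
\item Every $c\in\mathcal H$ with $d(c,y)\le(1/2-\gamma)n$ agrees with $y$ on at least $|S|-(1/2-\gamma)n\ge2\gamma|S|$ coordinates of $S$.
\item By averaging, some $i\in S$ is an agreement coordinate for at least a $2\gamma$ fraction of these codewords. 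Meanwhile $\{c\in\mathcal H:c_i=y_i\}$ is empty or has dimension below $d$.
\end{itemize}
Induction on $d$ bounds the list by $(2\gamma)^{-\dim W}$. With $s=\lceil2/\gamma\rceil$ this is $(1/\gamma)^{O(\ell/\gamma)}$, which implies the stated $(\ell/\gamma)^{O(\ell/\gamma^2)}$.
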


The following list recovery bound will be crucial for both our algorithms and lower bounds. The proof is given in \Cref{app:classical-list-recovery}, extending the hidden-derivative interpolation of~\cite{bcpzz2026,jeronimo2026}
to length-$\ell$ jets.

\begin{restatable}{theorem}{hendpointlr}\label{hendpoint:lr}
Assume $K=\ell(q-1)/2$ in $C_\Hrm$.
For every integer $\ell\ge1$, $\alpha\in(3/4,1]$, and prime $q\ge\Omega_\alpha(1)$, $C_\Hrm$ evaluated at $\Fbb_q^\times$ is
$(\zeta,T,L)$-list-recoverable, where
\begin{equation}
\zeta=2\alpha-1-\frac{10}{\ln q},
\quad
 T=q^{1+\ell\cdot\kappa(\alpha)}
 \cdot(\ell\log q)^{-O_\alpha(\ell)},
 \quad\text{and}\quad
\log L=O\left(\frac q{\log^3q}\right).
 \label{hendpoint:incidence-budget}
\end{equation}
\end{restatable}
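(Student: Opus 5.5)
We will prove \Cref{hendpoint:lr} by an interpolation argument in the style of hidden-derivative list recovery~\cite{bcpzz2026,jeronimo2026}, adapted to jets. Fix $F_1,\ldots,F_n\subseteq\Sigma$ with $\sum_i|F_i|\le T$, and let $\Lcal=\Lcal_\zeta(F)$. Write $f_c\in\Pcal_K$ for the polynomial with codeword $c$. The plan has three steps: build a low-degree auxiliary polynomial $Q(X,Y_0,\ldots,Y_r)$ with a differential structure that vanishes on all candidates, conclude via Kopparty's differential-root bound~\cite[Theorem~4.3]{kopparty2015} that the solution space of $Q(X,P,P^{(1)},\ldots,P^{(r)})=0$ is small, and finally choose parameters to match $\zeta,T,L$.

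\textbf{Interpolation step.} We look for a nonzero $Q(X,Y_0,\ldots,Y_r)=A(X)+\sum_{j=0}^r B_j(X)Y_j$, which is linear in the $Y$'s. Here $\deg A\le D$ and $\deg B_j\le D-K$ (weighted degree $D$), and $r$ is a parameter of order $\Theta_\alpha(\ell\log q)$ or similar. For each $i$ and each allowed jet $z\in F_i$, we impose that the univariate polynomial $Q(X,P,P^{(1)},\ldots,P^{(r)})$ vanishes to high multiplicity $m$ at $a_i$ for every $P$ whose jet at $a_i$ equals $z$. Because only the length-$\ell$ jet is prescribed, the derivatives $P^{(j)}$ of order $\ge\ell$ are unknown. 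Following the hidden-derivative idea, we choose the vanishing conditions so that they depend only on $J_\ell P(a_i)=z$. Concretely, the Hasse-derivative jet of $Q(X,P,\ldots,P^{(r)})$ of length $m$ at $a_i$ involves $P^{(j)}$ up to order $r+m-1$. We therefore impose only the coefficient conditions involving $P^{(<\ell)}$, or more precisely we use total-incidence accounting~\cite{jeronimo2026} so that each pair $(i,z)$ costs $O(m\cdot\text{poly}(r))$ linear constraints. The number of unknowns is about $(r+2)D-(r+1)K$. The constraint count is about $T\cdot(\text{cost per }(i,z))$. Solvability holds once the unknowns exceed the constraints, which sets $D$ in terms of $T$.

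\textbf{Vanishing and root counting.} For $c\in\Lcal$, the polynomial $R_c(X)=Q(X,f_c,\ldots,f_c^{(r)})$ has degree at most $D$. It vanishes with multiplicity about $m-r$ or $m$ (depending on how the hidden derivatives are handled) at each of the $\ge\zeta n$ agreement points. When $\zeta n\cdot(\text{multiplicity})>D$, we get $R_c\equiv0$. The balance between $\zeta$, $m$, $r$, and $D$ is what produces the exponent $1+\ell\kappa(\alpha)$ with $\kappa(\alpha)=(4\alpha-3)/(4\alpha-2)$: the ratio $K/n=\ell/2$ is compared with agreement $\zeta\approx2\alpha-1$. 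The slack $10/\ln q$ absorbs the rounding losses when $m,r=\Theta(\log q)$. Next, $Q(X,P,P^{(1)},\ldots,P^{(r)})=0$ is an order-$r$ linear differential equation in the Hasse sense. Kopparty's theorem then says its solutions of degree $<K<q$ form an affine space of $\Fbb_q$-dimension at most $r$ (valid since $K<q\cdot$ characteristic constraints hold). This bounds $|\Lcal|\le q^{r}$. With $r=O(q/\log^4 q)$, or more plausibly via iterating or pruning over subspaces, this gives $\log L=O(q/\log^3q)$.

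\textbf{Main obstacle.} The hardest step is the constraint accounting for the hidden derivatives. We must ensure that each $(i,z)$ imposes only $\poly(r,m,\ell)$ constraints that are genuinely determined by the length-$\ell$ jet, while still guaranteeing high-multiplicity vanishing of $R_c$ at $a_i$. We also need the total incidence budget $T$ to come out as $q^{1+\ell\kappa(\alpha)}$ rather than a weaker power. We will try first to generalize~\cite{jeronimo2026} directly by treating the jet $z\in\Fbb_q^\ell$ as prescribing the first $\ell$ Taylor coefficients and letting $Y_j$ for $j\ge\ell$ play the hidden role. Then we optimize $m/r$ and $D$ numerically in terms of $\alpha$ and track the $(\ell\log q)^{-O_\alpha(\ell)}$ loss.
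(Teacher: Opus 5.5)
Your skeleton matches the paper's: interpolate an auxiliary $Q(X,Y_0,\ldots,Y_r)$ that vanishes on all candidates, treat derivatives of order $\ge\ell$ as hidden, and finish with Kopparty's differential-root bound. There is a genuine gap, though. The interpolation step as you set it up cannot reach the stated budget $T$, and the mechanism that produces the exponent $1+\ell\kappa(\alpha)$ is missing.

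\textbf{Linear $Q$ cannot work.} If $Q$ is linear in the $Y$'s, then $F_{Q,P}=A+\sum_jB_jP^{(j)}$ is an affine function of $P$. So at each point $u$, the set of jets $y$ for which ``$J_\ell P(u)=y\Rightarrow(X-u)^D\mid F_{Q,P}$'' holds is an affine subspace of $\Sigma$. To see this, write a polynomial with jet $\lambda y+(1-\lambda)y'$ as an affine combination of polynomials with jets $y$ and $y'$. Now take the family with $|F_i|=\ell+1$ affinely spanning $\Sigma$ at every point. It has $\sum_i|F_i|=(\ell+1)n\le T$ for fixed $\ell$ and large $q$. Any valid $Q$ would then give $F_{Q,P}\equiv0$ for every $P\in\Pcal_K$, and testing $P=0,1,X,\ldots,X^r$ forces $Q=0$. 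So no choice of $D,m,r$ rescues a linear $Q$. Even under your own count (at least $m$ conditions per incidence, fewer than $(r+2)\zeta nm$ unknowns), solvability needs $T\lesssim(r+2)\zeta n$, nowhere near $q^{1+\ell\kappa(\alpha)}$.

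\textbf{What the paper does instead.}
\begin{itemize}
\item $Q$ has total $Y$-degree up to $2m$. It is spanned by $X^{d_X}\prod_{j\le\ell}Y_j^{e_j}Y^v$ with $e_j\le m/(\ell H)$, $d_X<(K-1)m/(\ell H)$, and tails of weight $\omega(v)\le W$.
\item There are polynomially many hidden derivatives, $R=r-\ell+1\approx n/(100\ell^2\ln^4n)$, with $m\approx R\ln^2R$ and multiplicity $D=\ell m$.
\item The backward Taylor identity rewrites each $Y_j$ ($j<\ell$) as the prescribed part, plus hidden terms $Z^{k-j}Y_k$, plus $Z^{\ell-j}E_j$. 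After substitution, $E_j$ is divisible by $Z^R$. Hence only monomials $E^\nu$ with $R|\nu|<D$ impose conditions (\Cref{hfixed:constraints-lemma}), which is the source of the factor $R^{-\ell}$ in $M_{\mathrm{loc}}$.
\item \Cref{part:growth} gives the weighted-partition growth bound $\Lambda_R(W+\ell m)\le(eR)^{\ell/a}\Lambda_R(W)$ with $a=2\zeta-8/H\approx4\alpha-2$. Combined with the previous point, $\dim\Qcal/M_{\mathrm{loc}}\gtrsim nR^{\ell(1-1/a)}$.
\item Since $1-1/a\approx\kappa(\alpha)$ and $R$ is polynomial in $n$, this yields $T=q^{1+\ell\kappa(\alpha)}(\ell\log q)^{-O_\alpha(\ell)}$.
\end{itemize}

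\textbf{The root-counting step is misapplied.} You claim the solutions form an affine space of dimension at most $r$ because $K<q$. But $K=\ell(q-1)/2\ge q$ once $\ell\ge3$. In characteristic $q$, homogeneous solutions are closed under multiplication by $q$th powers, since $(g^qh)^{(j)}=g^qh^{(j)}$ for $j<q$. So the dimension can grow like $r\lfloor K/q\rfloor$; moreover, the $Q$ that actually works is not linear.

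The paper uses only the cardinality bound $|\Lcal|\le B(r+1)q^{2r\lfloor d/q\rfloor+4r+4}$ of \Cref{hfixed:differential-roots}, with $d=K-1$ and $B=2m$. It first checks $r<\min\{K-1,q\}$, $2m<q$, and weighted degree below $Dt<q^2$. This gives $\log L=O(\ell r\log q)$, which is $O(n/\log^3n)$ precisely because $r\le2R=O(n/(\ell^2\ln^4n))$.

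Your two parameter choices, $r=\Theta(\ell\log q)$ and $r=O(q/\log^4q)$, are incompatible. Only the polynomially large one is consistent with an interpolation step that can handle the required list sizes.
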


\section{Triply efficient quantum streaming below \texorpdfstring{$\alpha_\star$}{alpha-star} approximation}
\label{sec:quantum-streaming}

We prove \Cref{thm:quantum-poly} by adapting DQI~\cite{jordan2025dqi} to the streaming setting.

\QuantumStreamingUpper*

During the streaming phase, the algorithm in \Cref{thm:quantum-poly} encodes the stream data into a carefully designed quantum state.
Then in the postprocessing phase, the quantum state is restricted to errors that can be decoded uniquely, and finally we use Fourier interference to sample an answer with high score.

\paragraph{Section organization.}
In \Cref{subsec:quantum-algorithm}, we explain these steps assuming access to the whole input sets and prove the approximation guarantee. In \Cref{subsec:quantum-loading}, we implement the algorithm in the streaming setting and complete the proof of \Cref{thm:quantum-poly}.

\subsection{The algorithm with access to the whole input}
\label{subsec:quantum-algorithm}

We follow the notation convention from \Cref{prob:stream-hopi_code}: $n=q-1$, $N=n\cdot q^\ell$, $K=\ell n/2$, elements in $\Fbb_q^\times$ are named by $i\in[n]$, and $C_\Hrm$ is the corresponding Hermite code with generator matrix $G_\Hrm$. 

\paragraph{Working backward from the desired state.}
In light of \Cref{prob:stream-hopi_code}, we would like to prepare
$$
 \ket{\Dcal_P}
 \propto\sum_{u\in\Fbb_q^K}P(\score(u))\ket{u},
$$
where $P$ favors higher score.
Measuring $\ket{\Dcal_P}$ samples $u$ with probability proportional to $|P(\score(u))|^2$. We will choose $P$ so that with high probability, the output satisfies
\begin{equation}\label{eq:backward_u_poly}
 \score(u)\ge\alpha\cdot n\ge\alpha\cdot\OPT.
\end{equation}

Let $P$ be a polynomial of degree at most $d\le n$.
Recall that $\score(u)=\sum_{i=1}^n\1[(G_\Hrm^\top u)_i\in S_i]$.
By Fourier decomposition, we can express $P(\score(u))$ as
\begin{align*}
P(\score(u))
&=\sum_{e\in\Sigma^n}a_e\cdot\omega^{\langle e,G_\Hrm^\top u\rangle}
\tag{$\omega=e^{2\pi\mathrm i/q}$ and $a_e$'s are coefficients}\\
&=\sum_{\wt(e)\le d}a_e\cdot\omega^{\langle e,G_\Hrm^\top u\rangle}=\sum_{\wt(e)\le d}a_e\cdot\omega^{\langle u,G_\Hrm e\rangle}.
\tag{since $\deg(P)\le d$}
\end{align*}
Consequently, it suffices to prepare the following dual state
\begin{equation}
 \Fsf_q^{\otimes K}\ket{\Dcal_P}
 \propto\sum_{\wt(e)\le d}a_e\ket{G_\Hrm e}.
 \label{eq:dqi-target-fourier}
\end{equation}
Ideally, this state can be constructed by first preparing $\sum_{\wt(e)\le d}a_e\ket{e}$, then computing $\ket{G_\Hrm e}$ to obtain $\sum_{\wt(e)\le d}a_e\ket{e}\ket{G_\Hrm e}$, and finally using \Cref{thm:hermite-compact-decoder} to erase $\ket{e}$ by $\ket{G_\Hrm e}$. Since $K=\ell n/2$, to make this work, we need
\begin{enumerate}[label=(\arabic*),itemsep=0pt,topsep=4pt]
\item\label{itm:backward_1}
$2d<n/2+1$ due to \Cref{thm:hermite-compact-decoder},
\item\label{itm:backward_2} $\sum_{\wt(e)\le d}a_e\ket{e}$ can be efficiently prepared,
\item\label{itm:backward_3} the resulting $P$ favors higher $\score(u)$ which produces the desired \eqref{eq:backward_u_poly}.
\end{enumerate}

\paragraph{A natural product state.}
We choose the error state directly. For each $i\in[n]$, define
\begin{equation}
 g_i(z)=\frac{\1[z\in S_i]-\rho}{\sqrt{\rho(1-\rho)}}
 \quad\text{and}\quad 
\ket{h_i}=|\Sigma|^{-1/2}\sum_{z\in\Sigma}g_i(z)\ket{z},
\quad\text{where $\rho=\frac{|S_i|}{|\Sigma|}=\frac{q^\ell-1}{2\cdot q^\ell}$.}
 \label{eq:dqi-centered-membership}
\end{equation}
Intuitively, $\ket{h_i}$ puts positive mass on symbols in $S_i$ and penalizes symbols outside $S_i$, which is consistent with the score function. 
Note that $\ket{h_i}$ is a
unit vector orthogonal to the uniform state
$$
\ket{\phi_0}=|\Sigma|^{-1/2}\sum_{z\in\Sigma}\ket{z}.
$$
Thus $\ket{\widehat h_i}:=\Fsf_\Sigma\ket{h_i}$ is a unit
vector orthogonal to $\ket{0}$.
Let $\theta\in[0,1]$ be a parameter to be optimized later.
A natural choice for $\sum_ea_e\ket{e}$ is the product state
\begin{equation}
 \ket{\Psi}=\bigotimes_{i=1}^n
 \left(\sqrt{1-\theta}\ket{0}+
       \sqrt\theta\ket{\widehat h_i}\right),
 \label{eq:dqi-product-state}
\end{equation}
which has block weight distribution $\operatorname{Bin}(n,\theta)$.
Hence most of its mass is concentrated near $\wt(e)=\theta\cdot n$.
In light of \Cref{itm:backward_1} and \Cref{itm:backward_2} above, we should pick $\theta<1/4$.

\paragraph{A heuristic for $\alpha_\star$.}
To explain why $\alpha_\star$ is the right threshold, we assume in the ideal case that $\ket{e}\mapsto\ket{G_\Hrm e}$ is an isometry and \Cref{thm:hermite-compact-decoder} can be applied with the condition in \Cref{itm:backward_1}.
Let $V$ be the transform that implements $\ket{e}\mapsto\ket{G_\Hrm e}$ followed by $(\Fsf_q^{-1})^{\otimes K}$.
Then by \eqref{eq:dqi-target-fourier}, our $\ket{\Dcal_P}$ equals $V\ket{\Psi}$ whose amplitude on $\ket{u}$ is $\braket{u|V|\Psi}$.
To compute this, observe that for each $e\in\Sigma^n$, we have
\begin{equation}\label{eq:Ve-ideal}
 V\ket{e}=(\Fsf_q^{-1})^{\otimes K}\ket{G_\Hrm e}\propto \sum_{u\in\Fbb_q^K}
     \omega^{\langle u,G_\Hrm e\rangle}\ket{u}
 =\sum_{u\in\Fbb_q^K}
     \omega^{\langle G_\Hrm^\top u,e\rangle}\ket{u},
\end{equation}
where $\omega=e^{2\pi\mathrm i/q}$.
Let $\Fsf_\Hrm=\Fsf_\Sigma^{\otimes n}$.
For each $u\in\Fbb_q^K$, we have
\begin{align}
\braket{u|V|\Psi}
&\propto\sum_{e\in\Sigma^n}\braket{u|V|e}\cdot\braket{e|\Psi}
\propto\sum_{e\in\Sigma^n}\omega^{\langle G_\Hrm^\top u,e\rangle}\cdot\braket{e|\Psi}
\propto\braket{G_\Hrm^\top u|\Fsf_\Hrm^{-1}|\Psi}
\label{eq:Vdual}\\
&\propto\bra{G_\Hrm^\top u}\bigotimes_{i=1}^n\left(\sqrt{1-\theta}\ket{\phi_0}+\sqrt\theta\ket{h_i}\right)
\tag{by \eqref{eq:dqi-product-state}}\\
&\propto\prod_{i=1}^n\left(\sqrt{1-\theta}+\sqrt\theta\cdot g_i\left((G_\Hrm^\top u)_i\right)\right)
\propto c_+^{\score(u)}\cdot c_-^{n-\score(u)},
\tag{by \eqref{eq:dqi-centered-membership}}
\end{align}
where
$$
c_+=\sqrt{1-\theta}+\sqrt{\theta(1-\rho)/\rho}
\quad\text{and}\quad 
c_-=\sqrt{1-\theta}-\sqrt{\theta\rho/(1-\rho)}.
$$

To see where the score should concentrate, consider an ideal model of independent row assignments with $\rho=1/2$, the limit as $q\to\infty$, in which $c_\pm=\sqrt{1-\theta}\pm\sqrt\theta$.
In this model, the squared amplitudes give a score distribution
\begin{equation}\label{eq:ideal-score}
 \Pr[\text{score}=s]
 \propto\binom ns\cdot|c_+|^{2s}\cdot|c_-|^{2(n-s)}.
\end{equation}
Since $|c_+|^2+|c_-|^2=2$, this is a binomial distribution with positive rate $c_+^2/2 =1/2+\sqrt{\theta(1-\theta)}$,
which maximizes at $\alpha_\star=(2+\sqrt3)/4$ when $0\le\theta<1/4$ approaches $1/4$.
This suggests that the output score should concentrate near $\alpha_\star\cdot n$ as desired; the proof below accounts for the restriction to codewords and the projection onto decodable errors.

\paragraph{The realistic case.}
To meet \Cref{itm:backward_1}, we project $\ket{\Psi}$ onto the
unique-decoding range.
Fix $0<\alpha<\alpha_\star$.
Choose a constant $\lambda\in(0,1/4)$, depending only on $\alpha$, such that
$$
 \beta:=\frac12+\sqrt{\lambda(1-\lambda)}>\alpha
$$
and define
$$
\theta:=4\lambda\rho(1-\rho)=\lambda(1-|\Sigma|^{-2}).
$$
We will prepare
$\ket{\Psi}$ with this value of $\theta$. Since $\theta\le\lambda<1/4$,
its typical error weight lies a constant fraction below the decoding
threshold, while the heuristic score fraction tends to $\beta>\alpha$.

Set
\begin{equation}
 d=\lfloor n/4\rfloor-1=\lfloor(q-1)/4\rfloor-1
 \quad\text{and}\quad
 \Pi_d=\sum_{\wt(e)\le d}\ket{e}\bra{e}.
 \label{hbq:filter}
\end{equation}
After projection, the unnormalized state is $\ket{\Phi}:=\Pi_d\ket{\Psi}$.

\begin{claim}\label{clm:PhiPsi}
$\|\ket{\Psi}-\ket{\Phi}\|=\sqrt{1-\|\Phi\|^2}=2^{-\Omega_\alpha(n)}$.
\end{claim}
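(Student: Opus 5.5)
The plan is to use that $\ket{\Psi}$ is a product of unit vectors and that $\Pi_d$ is an orthogonal projection diagonal in the computational basis of $\Sigma^n$. Each factor $\sqrt{1-\theta}\ket{0}+\sqrt\theta\ket{\widehat h_i}$ is a unit vector, because $\ket{\widehat h_i}$ is a unit vector orthogonal to $\ket{0}$: $\ket{h_i}$ is a unit vector orthogonal to $\ket{\phi_0}$, and $\Fsf_\Sigma$ is unitary with $\Fsf_\Sigma\ket{\phi_0}=\ket{0}$. Hence $\|\Psi\|=1$. Since $\Pi_d$ is an orthogonal projection, $\ket{\Psi}-\ket{\Phi}=(I-\Pi_d)\ket{\Psi}$ is orthogonal to $\ket{\Phi}$. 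Pythagoras then gives $\|\Psi-\Phi\|^2=\|\Psi\|^2-\|\Phi\|^2=1-\|\Phi\|^2$, which is the first equality.

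For the exponential bound, I would compute the block-weight distribution of $\ket{\Psi}$. Expanding the tensor product, the component supported on errors $e$ whose nonzero-block set is exactly $A\subseteq[n]$ is
\[
\theta^{|A|/2}(1-\theta)^{(n-|A|)/2}\bigotimes_{i\in A}\ket{\widehat h_i}\otimes\bigotimes_{i\notin A}\ket{0}.
\]
Because $\ket{\widehat h_i}$ is supported on $\Sigma\setminus\{0\}$, these components have disjoint supports for different $A$. Their squared norms are therefore $\theta^{|A|}(1-\theta)^{n-|A|}$. It follows that $1-\|\Phi\|^2=\Pr[\operatorname{Bin}(n,\theta)>d]$.

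It then remains to apply a Chernoff/Hoeffding bound. We have $\theta\le\lambda<1/4$ with $\lambda$ depending only on $\alpha$, and $d=\lfloor n/4\rfloor-1\ge n/4-2$. So $d-\theta n\ge(1/4-\lambda)n-2=\Omega_\alpha(n)$ once $q\ge\Omega_\alpha(1)$, and Hoeffding gives $\Pr[\operatorname{Bin}(n,\theta)>d]\le\exp(-2(1/4-\lambda)^2n/4)=2^{-\Omega_\alpha(n)}$. Taking square roots preserves the form $2^{-\Omega_\alpha(n)}$.

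No step is a real obstacle. The only point needing care is the disjointness of supports, which rests on $\braket{0|\widehat h_i}=0$ and so on $\ket{h_i}\perp\ket{\phi_0}$. That orthogonality holds because $g_i$ is centered: $\sum_z g_i(z)\propto|S_i|-\rho|\Sigma|=0$.
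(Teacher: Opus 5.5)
Your proposal is correct and follows the same route as the paper: the block weight of $\ket{\Psi}$ is distributed as $\operatorname{Bin}(n,\theta)$, and a Chernoff/Hoeffding bound applies because $d-\theta n=\Omega_\alpha(n)$. You also spell out the unit-norm, Pythagoras, and disjoint-support details (all resting on $\braket{0|\widehat h_i}=0$) that the paper's one-line proof leaves implicit.
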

\begin{proof}
Since $\|\Phi\|^2=\Pr[\operatorname{Bin}(n,\theta)\le d]$ and $d-\theta\cdot n=\Omega_\alpha(n)$, a Chernoff bound finishes the proof.
\end{proof}

The Fourier formula \eqref{eq:Ve-ideal} holds for every $e\in\Sigma^n$ with $\wt(e)\le d$. Recall that $V$ is the transform that implements $\ket{e}\mapsto\ket{G_\Hrm e}$, provided $\wt(e)\le d$ and using \Cref{thm:hermite-compact-decoder}, followed by $(\Fsf_q^{-1})^{\otimes K}$. Then the following workflow works out: first prepare $\ket{\Psi}$, then compute the weight and postselect on $\wt(e)\le d$ to obtain $\ket{\Phi}/\|\Phi\|$, finally run $V$ and measure the outcome $V\ket{\Phi}/\|\Phi\|$ to obtain a target answer $u$ for \Cref{prob:stream-hopi_code}.

The following claim proves the concentration guarantee after projection.

\begin{claim}\label{clm:truncation-score}
For each $u\in\Fbb_q^K$, define $p_u=|\braket{u|V|\Phi}|^2=|\braket{u|V\Pi_d|\Psi}|^2$.
Then
$$
\sum_{\score(u)<\alpha\cdot n}p_u\le O_\alpha(1/n).
$$
\end{claim}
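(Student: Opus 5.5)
The plan is a second-moment (Chebyshev) argument on the score, in the spirit of the DQI semicircle analysis. The key step is to move it from the $u$-space to the error space, where $\ket\Psi$ is a product state. Write $\widehat S=\sum_u\score(u)\ket u\bra u$ for the diagonal score operator. Then
\[
\sum_u\bigl(\score(u)-\mu\bigr)^2p_u=\bigl\|(\widehat S-\mu)V\ket\Phi\bigr\|^2
\]
for any centering constant $\mu$. If $\mu\ge\alpha n+\Omega_\alpha(n)$ and the right-hand side is $O_\alpha(n)$, Chebyshev gives $\sum_{\score(u)<\alpha n}p_u\le O_\alpha(n)/\Omega_\alpha(n)^2=O_\alpha(1/n)$.

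\emph{Step 1: intertwining.} Consider the multiplication operator $M_i=\sum_u\1[(G_\Hrm^\top u)_i\in S_i]\ket u\bra u$. By \eqref{eq:Ve-ideal}, multiplying the phase $\omega^{\langle G_\Hrm^\top u,e\rangle}$ by $\1[(G_\Hrm^\top u)_i\in S_i]$ equals a Fourier combination of phases $\omega^{\langle G_\Hrm^\top u,e+w\rangle}$, where $w$ is supported on block $i$ only. Hence there is an operator $A_i$ acting only on the $i$th block of the error register such that $M_iV\ket e=VA_i\ket e$. This identity holds whenever the formula for $V$ is valid on $e$ and on all errors reachable from $e$ by one block change. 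Here $A_i$ is, up to the transform $\Fsf_\Sigma$, just the operator $\1_{S_i}$ acting on block $i$, so $\|A_i\|\le1$. Setting $\widetilde A=\sum_iA_i$, we get $\widehat SV\ket\Phi=V\widetilde A\ket\Phi$, and $\widetilde A\ket\Phi$ is supported on $\wt(e)\le d+1$.

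\emph{Step 2: $V$ is an isometry up to weight $d+1$.} If $\wt(e),\wt(e')\le d+1$ and $G_\Hrm e=G_\Hrm e'$, then $e-e'\in C_\Hrm^\perp$ and $\wt(e-e')\le 2d+2\le n/2<d(C_\Hrm^\perp)=n/2+1$ by \Cref{thm:hermite-compact-decoder}, so $e=e'$. Thus $e\mapsto G_\Hrm e$ is injective on this ball. Followed by the unitary $(\Fsf_q^{-1})^{\otimes K}$, $V$ is an isometry there, and
\[
\bigl\|(\widehat S-\mu)V\ket\Phi\bigr\|=\bigl\|(\widetilde A-\mu)\ket\Phi\bigr\|,
\]
a purely error-space quantity.

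\emph{Step 3: replace $\Phi$ by $\Psi$ and compute on the product state.} Since $\|\widetilde A-\mu\|\le n+\mu=O(n)$, \Cref{clm:PhiPsi} gives
\[
\bigl\|(\widetilde A-\mu)(\ket\Phi-\ket\Psi)\bigr\|\le O(n)\,2^{-\Omega_\alpha(n)},
\]
which is negligible. On $\ket\Psi=\bigotimes_i\ket{\psi_i}$, each $A_i$ acts only on factor $i$. Taking $\mu=\sum_i\braket{\psi_i|A_i|\psi_i}$ (with $A_i$ Hermitian), $\|(\widetilde A-\mu)\ket\Psi\|^2$ is exactly a variance of a sum of independent local terms, hence at most $\sum_i\|A_i\|^2\le n$. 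It remains to evaluate the per-block mean $\braket{\psi_i|A_i|\psi_i}$. Transforming back by $\Fsf_\Sigma$, this is the probability that $\sqrt{1-\theta}\ket{\phi_0}+\sqrt\theta\ket{h_i}$ lands in $S_i$. Using \eqref{eq:dqi-centered-membership}, a direct computation gives
\[
\bigl(\sqrt{(1-\theta)\rho}+\sqrt{\theta(1-\rho)}\bigr)^2,
\]
which tends to $\beta=\tfrac12+\sqrt{\lambda(1-\lambda)}>\alpha$ as $q\to\infty$, because $\rho\to1/2$ and $\theta\to\lambda$. So for $q\ge\Omega_\alpha(1)$ we have $\mu\ge(\alpha+\Omega_\alpha(1))n$, as needed.

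I expect the main obstacle to be Step 1: setting up $A_i$ rigorously on the truncated state and checking the weight bookkeeping. Because $\Phi$ is truncated, $A_i$ can push weight to $d+1$. Step 2 still covers this, since $2(d+1)\le n/2$, but only barely; this is why $d=\lfloor n/4\rfloor-1$ is chosen. I would first verify the identity $M_iV\ket e=VA_i\ket e$ at the level of a single block's Fourier coefficients, and then extend it linearly.
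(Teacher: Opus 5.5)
Your proposal is correct and follows the paper's proof essentially step for step. Your $\widetilde A=\sum_iA_i$ is the paper's $M=\Fsf_\Hrm M_0\Fsf_\Hrm^{-1}$, and you use the same ingredients in the same order: the intertwining $\widehat SV=V\widetilde A$, the isometry of the linearly extended $V$ on weight $\le d+1$ via $2(d+1)<n/2+1$, the transfer from $\ket\Psi$ to $\ket\Phi$ via \Cref{clm:PhiPsi}, and the product-state variance with $\mu=\tau n$. Your bounds $\|\widetilde A-\mu\|\le n+\mu$ and variance $\le n$ are slightly looser than the paper's $n$ and $n/4$, which is harmless.
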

\begin{proof}
For any $\mu>\alpha n$, the second-moment method gives
\begin{equation}\label{eq:truncation-second-moment-bound}
 \sum_{\score(u)<\alpha n}p_u
 \le\frac{\sum_u(\score(u)-\mu)^2p_u}{(\mu-\alpha n)^2}.
\end{equation}
We will choose $\mu=\beta n+O_\alpha(1)$ and bound the numerator by
$O_\alpha(n)$. Since $\beta>\alpha$, the denominator is then
$\Omega_\alpha(n^2)$, which proves the claim.
For the analysis, extend $V$ linearly to all errors by
$V\ket{e}=(\Fsf_q^{-1})^{\otimes K}\ket{G_\Hrm e}$ for
$e\in\Sigma^n$.
This extension is an isometry among errors with $\wt(e)\le d+1$ by
$2\cdot(d+1)<d(C_\Hrm^\perp)=n/2+1$ and \Cref{thm:hermite-compact-decoder}.
In addition, the calculation in \eqref{eq:Vdual} shows that 
\begin{equation}\label{eq:clm:truncation-score_1}
\bra{u}V=q^{(\ell n-K)/2}\bra{G_\Hrm^\top u}\Fsf_\Hrm^{-1}.
\end{equation}

To express the score in the error basis, write
$S(z)=\sum_i\1[z_i\in S_i]$ for $z\in\Sigma^n$ and define
\begin{equation}\label{eq:clm:truncation-score_2}
 M_0\ket{z}=S(z)\ket{z}
 \quad\text{and}\quad
 M=\Fsf_\Hrm M_0\Fsf_\Hrm^{-1}.
\end{equation}
Thus $M$ computes the score after changing to the error basis.
Since $S(G_\Hrm^\top u)=\score(u)$, we have
\begin{align*}
 \bra{u}VM
 &=q^{(\ell n-K)/2}\bra{G_\Hrm^\top u}M_0\Fsf_\Hrm^{-1}
 \tag{by \eqref{eq:clm:truncation-score_1}}\\
 &=\score(u)\cdot q^{(\ell n-K)/2}\bra{G_\Hrm^\top u}\Fsf_\Hrm^{-1}
 \tag{by \eqref{eq:clm:truncation-score_2}}\\
 &=\score(u)\bra{u}V.
 \tag{by \eqref{eq:clm:truncation-score_1}}
\end{align*}
Consequently, we express the numerator as
\begin{align}
\sum_u(\score(u)-\mu)^2p_u
&=\left\|\sum_u(\score(u)-\mu)\braket{u|V|\Phi}\cdot\ket{u}\right\|^2=\left\|\sum_u(\bra{u}VM-\mu\bra{u}V)\ket{\Phi}\cdot\ket{u}\right\|^2\notag\\
&=\left\|\sum_u\bra{u}V(M-\mu I)\ket{\Phi}\cdot\ket{u}\right\|^2=\|V(M-\mu I)\ket{\Phi}\|^2\notag\\
&=\|(M-\mu I)\ket{\Phi}\|^2.
\label{eq:truncation-moment-identity}
\end{align}
For the last equality, we observe that $M$ is a sum of operators on individual blocks and $(M-\mu I)\ket{\Phi}$ is supported on errors with $\wt(e)\le d+1$, on which $V$ is an isometry.

We now choose $\mu$ using $\ket{\Psi}$. Recall from \eqref{eq:dqi-product-state} that $\Fsf_\Hrm^{-1}\ket{\Psi}$ is a product state whose coordinate lies in the prescribed set with probability
\[
 \tau
 =\left(\sqrt{\rho(1-\theta)}+
              \sqrt{(1-\rho)\theta}\right)^2
 =\beta+O_\alpha(1/n),
\]
where $|\rho-1/2|=O_\alpha(1/n^\ell)=O_\alpha(1/n)$, $|\theta-\lambda|=O_\alpha(1/n^2)$, and $\beta=1/2+\sqrt{\lambda(1-\lambda)}$.

Set $\mu:=\tau n=\beta n+O_\alpha(1)$. Then we have $\|(M-\mu I)\ket{\Psi}\|^2=n\tau(1-\tau)\le n/4$.
Since $\|M-\mu I\|\le n$, \Cref{clm:PhiPsi} implies
\begin{equation}\label{eq:hermite-filter-deviation}
 \|(M-\mu I)\ket{\Phi}\|
 \le\|(M-\mu I)\ket{\Psi}\|+n\cdot\left\|\ket{\Psi}-\ket{\Phi}\right\|
 =O_\alpha(\sqrt n).
\end{equation}
By \eqref{eq:truncation-moment-identity}, the numerator in \eqref{eq:truncation-second-moment-bound} is
$O_\alpha(n)$. Since $\mu-\alpha n=\Omega_\alpha(n)$,
this completes the whole proof.
\end{proof}

\subsection{Implementation in one streaming pass}
\label{subsec:quantum-loading}

To complete the proof of \Cref{thm:quantum-poly}, we need to prepare the product state $\ket{\Psi}$ in the streaming setting. This is adapted from \cite{rosmanis2026}. We assume for simplicity arbitrary single- and two-qubit gates are allowed.

\begin{lemma}\label{lem:quantum-loading}
Given one pass over the stream in \Cref{prob:stream-hopi_code}, one can exactly prepare $\ket{\Psi}$ from \eqref{eq:dqi-product-state} using $O(q\ell\log q)$ space and $\poly(q,\ell)$ preprocessing-update-postprocessing time.
\end{lemma}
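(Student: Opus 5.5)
The plan is to build $\ket{\Psi}$ row by row: each row $i\in[n]$ gets one control qubit and one address register of $\ell\lceil\log q\rceil$ qubits holding an element of $\Sigma=\Fbb_q^\ell$. That is $n(1+\ell\lceil\log q\rceil)=O(q\ell\log q)$ qubits in total, plus $O(\log q+\ell\log q)$ classical and ancilla workspace for reading the current entry $(i,z,b)$.

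\textbf{Preprocessing.} For each $i$, put the address register in $\ket{\phi_0}$ by applying $\Fsf_\Sigma^{-1}$ to $\ket0$. Put the control in a state $\sqrt{1-\theta}\ket0+\sqrt\theta\ket1$ using a single-qubit rotation. Each $\Fsf_q$ on $\lceil\log q\rceil$ qubits can be compiled exactly from arbitrary one- and two-qubit gates in $\poly(q)$ gates, since it is a $q\times q$ unitary acting on an embedded subspace. So this step costs $\poly(q,\ell)$ time.

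\textbf{Update.} The target per row is $\sqrt{1-\theta}\ket0\ket{\phi_0}+\sqrt\theta\ket1\ket{h_i}$, where $\ket{h_i}=|\Sigma|^{-1/2}\sum_z g_i(z)\ket z$. Since $g_i(z)$ takes two values, not $\pm1$, I would not use a pure phase flip. Instead, entries with $b=1$ apply a controlled operation on $\ket1\ket z$ that multiplies the amplitude by $(1-\rho)/\sqrt{\rho(1-\rho)}$, and entries with $b=0$ multiply it by $-\rho/\sqrt{\rho(1-\rho)}$.

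These factors are not unitary, so the cleaner route is as follows.
\begin{itemize}
\item Use only the phase flip $(-1)^{\1[z\notin S_i]}$ on $\ket1\ket z$. After the whole stream, the $\ket1$ branch holds $\ket{\chi_i}=|\Sigma|^{-1/2}\sum_z(-1)^{\1[z\notin S_i]}\ket z$.
\item Observe that $\ket{\chi_i}=2\sqrt{\rho(1-\rho)}\ket{h_i}+(2\rho-1)\ket{\phi_0}$. This holds because $\1[z\in S_i]$ gives $2\1-1$ and $|S_i|$ is known.
\item In postprocessing, a fixed two-dimensional rotation on $\mathrm{span}\{\ket0\ket{\phi_0},\ket1\ket{\phi_0},\ket1\ket{h_i}\}$ maps this to the desired form.
\end{itemize}
This rotation must be implementable without knowing $S_i$. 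It is a known unitary acting on the control together with the $\ket{\phi_0}$ component of the address: conjugate by $\Fsf_\Sigma$, so that $\ket{\phi_0}\mapsto\ket0$, then apply a control-qubit rotation conditioned on the address being $0$. The coefficients are fixed because $\rho$ is, so the preprocessing angle can be chosen to make the result exact. Each update touches one row, costing $\poly(\ell\log q)$ gates: an equality test of the address against the classical $z$, and a controlled $Z$.

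\textbf{Postprocessing.} Per row, apply $\Fsf_\Sigma$ to the address. This gives $a\ket0\ket0+b\ket1\ket0+c\ket1\ket{\widehat h_i}$, where $\braket{0|\widehat h_i}=0$.
\begin{enumerate}
\item Conditioned on address $=0$, rotate the control so that the amplitude on $\ket1\ket0$ vanishes. Unitarity makes this possible exactly once the preprocessing angle is tuned. Then the control is $\ket0$ exactly when the address is $0$.
\item Uncompute the control by a CNOT from the predicate ``address $\ne0$''. The control then returns to $\ket0$ and is discarded.
\end{enumerate}
The resulting row state is $\sqrt{1-\theta}\ket0+\sqrt\theta\ket{\widehat h_i}$.

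\textbf{Main obstacle.} The delicate point is the exact amplitude bookkeeping. One must pick the initial control angle $\vartheta$ and the later rotation so that the final weights are exactly $1-\theta$ and $\theta$, with the correction from $\rho\ne1/2$ absorbed. I would solve this $2\times2$ system explicitly; it is solvable because all norms involved are determined by the known $\rho$.
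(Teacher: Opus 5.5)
Your proposal is correct and is essentially the paper's proof: you phase-flip $\ket{1}\ket{z}$ on negative entries, use $\ket{\chi_i}=(2\rho-1)\ket{\phi_0}+2\sqrt{\rho(1-\rho)}\ket{h_i}$, apply $\Fsf_\Sigma$, and repair the address-$0$ branch with a controlled single-qubit rotation. The paper rotates that branch to $\ket{1}$ so the control can be discarded directly, which is equivalent to your rotate-to-$\ket{0}$-then-CNOT. The bookkeeping you left open is settled by initializing the control to $\sqrt{1-\lambda}\ket{0}+\sqrt{\lambda}\ket{1}$ rather than the $\sqrt{1-\theta}\ket{0}+\sqrt{\theta}\ket{1}$ you first wrote: then the $\ket{\widehat h_i}$ amplitude is $2\sqrt{\lambda\rho(1-\rho)}=\sqrt{\theta}$, and the address-$0$ branch has squared norm $1-\lambda+\lambda(2\rho-1)^2=1-\theta$.
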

\begin{proof}
Recall that $\theta=4\lambda\rho(1-\rho)$, $\rho=(q^\ell-1)/(2\cdot q^\ell)$, and $\lambda\in(0,1/4)$ is fixed.
For each $i\in[n]$, initialize
\[
 \left(\sqrt{1-\lambda}\ket{0}+\sqrt\lambda\ket{1}\right)\ket{\phi_0}
 \quad\text{where}\quad
 \ket{\phi_0}=\frac1{|\Sigma|^{1/2}}\sum_{z\in\Sigma}\ket{z}.
\]
On an entry $(i,z,0)$, negate the amplitude of $\ket{1}\ket{z}$
in row $i$; entries $(i,z,1)$ do nothing.
By \eqref{eq:dqi-centered-membership}, the $i$th state is
$$
\sqrt{1-\lambda}\ket{0}\ket{\phi_0}+\sqrt{\lambda}\ket{1}\left((2\rho-1)\ket{\phi_0}+2\sqrt{\rho(1-\rho)}\ket{h_i}\right).
$$
After the pass, apply $\Fsf_\Sigma$ and obtain
\begin{equation}
 \left(\sqrt{1-\lambda}\ket{0}+\sqrt\lambda\cdot(2\rho-1)\ket{1}\right)
 \ket{0}+\sqrt\theta\ket{1}\ket{\widehat h_i}.
 \label{eq:stream-loaded}
\end{equation}
Choose an arbitrary single-qubit unitary $U$, depending only on $\lambda$ and $\rho$, such that
\[
 U\left(\sqrt{1-\lambda}\ket{0}+\sqrt\lambda\cdot(2\rho-1)\ket{1}\right)
 =\sqrt{1-\theta}\ket{1}.
\]
Apply $U$ to the control qubit conditioned on the rest being $0$.
Since $\ket{\widehat h_i}$ is orthogonal to $\ket{0}$, this
acts only on the first term of \eqref{eq:stream-loaded}, yielding $\ket{1}\otimes
 \left(\sqrt{1-\theta}\ket{0}+\sqrt\theta\ket{\widehat h_i}\right)$.
Discarding the control qubits therefore leaves exactly
$\ket{\Psi}$ from \eqref{eq:dqi-product-state}.
The time and space bounds are immediate.
\end{proof}

Finally we complete the proof of \Cref{thm:quantum-poly}.

\begin{proof}[Proof of \Cref{thm:quantum-poly}]
The algorithm in \Cref{subsec:quantum-algorithm} prepares
$\ket{\Psi}$ from \eqref{eq:dqi-product-state}, measures the
projection \eqref{hbq:filter}, and, on success, applies $V$ from
\eqref{eq:Ve-ideal} and measures the output polynomial.
On rejection, it outputs the zero polynomial.
In the streaming setting, we use \Cref{lem:quantum-loading} to supply $\ket{\Psi}$. All other operations happen during the postprocessing phase, which matches the claimed space and time bounds by \Cref{thm:hermite-compact-decoder}
and its coherent implementation~\cite{khattar2025circuits}. 
\Cref{clm:PhiPsi} and \Cref{clm:truncation-score} bound the failure probability by $O_\alpha(1/n)+2^{-\Omega_\alpha(n)}=O_\alpha(1/q)\le0.01$ for $q\ge\Omega_\alpha(1)$ sufficiently large, as desired.
\end{proof}

\section{Doubly efficient quantum streaming at every approximation}
\label{sec:quantum-near-optimal}

We prove \Cref{thm:quantum-exp-summary} by extending the DQI construction
of \Cref{sec:quantum-streaming} beyond the unique-decoding radius.

\QuantumNearOptimalUpper*

During the streaming phase, we prepare the same product state as in
\Cref{sec:quantum-streaming}, with a larger typical error weight.
Then in postprocessing, we append a few error blocks to the syndrome
so that it identifies almost every retained error uniquely. Fourier
interference again samples an answer with high score. We implement
the decoding by exhaustive search, using small space and exponential time.

\paragraph{Section organization.}
In \Cref{subsec:nearopt-algorithm}, we explain the algorithm assuming
access to the whole input sets and prove the approximation guarantee.
In \Cref{subsec:nearopt-implementation}, we implement the algorithm
in one streaming pass.

\subsection{The algorithm with access to the whole input}
\label{subsec:nearopt-algorithm}

We follow the notation of \Cref{prob:stream-hopi_code} as in \Cref{subsec:quantum-algorithm}. In particular,
$n=q-1$, $N=nq^\ell$, $K=\ell n/2$, and
$\rho=(1-|\Sigma|^{-1})/2$. Fix $\alpha\in(0,1)$ and put
$\epsilon=1-\alpha$. Our goal is to sample $u$ satisfying
\[
 \score(u)\ge(1-\epsilon)\cdot n\ge\alpha\cdot\OPT.
\]

\paragraph{Working backward from a near-optimal score.}
The heuristic in \Cref{subsec:quantum-algorithm} suggests using the
product state with $\theta$ close to $1/2$: the ideal score fraction
$1/2+\sqrt{\theta(1-\theta)}$ then approaches one.
Choose a constant $\lambda\in(0,1/2)$ such that
\[
 \beta:=\frac12+\sqrt{\lambda(1-\lambda)}>1-\frac\epsilon8.
\]
As before, we use
\begin{equation}
 \ket{\Psi}=\bigotimes_{i=1}^n
 \left(\sqrt{1-\theta}\ket{0}+
       \sqrt\theta\ket{\widehat h_i}\right)
\quad\text{where}\quad
\theta=4\lambda\rho(1-\rho)=\lambda(1-|\Sigma|^{-2}).
 \label{eq:nearopt-product-state}
\end{equation}
Its block weight has distribution $\operatorname{Bin}(n,\theta)$.
Pick a constant $\tau\in(\lambda,1/2)$ depending only on $\alpha$,
and define $d=\lfloor\tau n\rfloor$.
Since $\theta\le\lambda<\tau$, almost all the mass of $\ket{\Psi}$ is at weight at most $d$.
The one-pass construction in \Cref{lem:quantum-loading} directly applies to this choice of $\lambda$ as well.

\paragraph{Augmenting the syndrome.}
The difficulty is decoding. Errors of weight close to $n/2$ need not
be determined by their syndrome, so we cannot directly implement
$\ket{e}\mapsto\ket{G_\Hrm e}$. Write
\[
 B_t=\{e\in\Sigma^n:\wt(e)\le t\}
 \quad\text{and}\quad
 \Ecal_t(s)=\{e\in B_t:G_\Hrm e=s\}.
\]
The key observation here is the following \Cref{clm:dualwt}, which implies that a few random coordinates should distinguish different errors, provided the syndrome list is small enough.

\begin{claim}\label{clm:dualwt}
If $e\ne e'\in\Sigma^n$ satisfy $G_\Hrm e=G_\Hrm e'$, then $\wt(e-e')\ge n/2+1$.
\end{claim}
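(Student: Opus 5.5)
By linearity, $G_\Hrm e=G_\Hrm e'$ is equivalent to $G_\Hrm(e-e')=0$. So $x:=e-e'$ is a nonzero vector in $\ker(G_\Hrm)$. The generator matrix of $C_\Hrm$ is a parity-check matrix of $C_\Hrm^\perp$, as noted in the preliminaries, so $\ker(G_\Hrm)=C_\Hrm^\perp$. Hence $x$ is a nonzero dual codeword, and the claim reduces to showing $d(C_\Hrm^\perp)\ge n/2+1$.

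To bound the dual distance, I would invoke \Cref{thm:hermite-compact-decoder}. With $K=\ell n/2$ it gives $d(C_\Hrm^\perp)=\lfloor K/\ell\rfloor+1=n/2+1$, since $n=q-1$ is even.

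As an independent check that does not depend on the decoder statement, I would use \Cref{fct:hermite-duality}. Because $K=\ell n/2$, we have $C_\Hrm^\perp=TC_\Hrm$, where $T$ is block-diagonal with invertible blocks and therefore preserves block weight. It then suffices to lower-bound the weight of a nonzero $(J_\ell f(a_i))_i$ with $f\in\Pcal_K$, $f\ne0$. If $J_\ell f(a_i)=0$, then $(X-a_i)^\ell$ divides $f$. So if $m$ blocks vanish, $\ell m\le\deg f<\ell n/2$, which gives $m<n/2$, i.e.\ $m\le n/2-1$. The weight is therefore at least $n-(n/2-1)=n/2+1$.

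Both routes give $\wt(e-e')\ge d(C_\Hrm^\perp)\ge n/2+1$. The only point needing care is identifying $\ker(G_\Hrm)$ with $C_\Hrm^\perp$ under the paper's block inner-product convention, and that is exactly the stated preliminary fact.
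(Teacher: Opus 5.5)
Your proposal is correct and matches the paper's proof, which applies \Cref{thm:hermite-compact-decoder} with $K=\ell n/2$ to read off $d(C_\Hrm^\perp)=\lfloor K/\ell\rfloor+1=n/2+1$; implicitly this uses, as you do, that $e-e'$ is a nonzero element of $\ker G_\Hrm=C_\Hrm^\perp$. Your second route via \Cref{fct:hermite-duality} and the root count $\ell m\le\deg f<\ell n/2$ is also valid, and it derives the dual distance directly instead of taking it from the decoder statement.
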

\begin{proof}
Apply \Cref{thm:hermite-compact-decoder} with $K=\ell n/2$.
\end{proof}

Choose a uniform $r$-subset $R\subseteq[n]$, independently of
$\ket{\Psi}$, with $r$ specified below. Let
$E_R\colon\Sigma^n\to\Fbb_q^{\ell r}$ extract the coordinates in $R$ and flatten $\Sigma$ into $\Fbb_q^\ell$.
Define
\[
 H_R=\begin{pmatrix}G_\Hrm\\E_R\end{pmatrix},
 \qquad H_Re=(G_\Hrm e,e_R)\in\Fbb_q^{K+\ell r}.
\]
After inverse Fourier transformation, a label
$(u,v)\in\Fbb_q^K\times\Fbb_q^{\ell r}$ corresponds to
\[
 H_R^\top(u,v)=G_\Hrm^\top u+E_R^\top v.
\]
The extra term $E_R^\top v$ changes only the coordinates in $R$. Thus, with
$S(z)=\sum_i\1[z_i\in S_i]$ as in
\Cref{subsec:quantum-algorithm}, discarding $v$ costs at most $r$
satisfied constraints:
\begin{equation}\label{eq:nearopt-score-loss}
 \score(u)\ge S\bigl(H_R^\top(u,v)\bigr)-r.
\end{equation}

\paragraph{Projecting onto isolated errors.}
Call an error $e\in B_d$ \emph{isolated} for $R$ if it is the unique error in $B_{d+2}$ with augmented syndrome $H_Re$. Define
\[
 \Gcal_R=\{e\in B_d:\text{no $e'\in B_{d+2}\setminus\{e\}$ with }H_Re'=H_Re\}
 \quad\text{and}\quad \Pi_R=\sum_{e\in\Gcal_R}\ket{e}\bra{e}.
\]
The $+2$ leaves room for the second-moment calculation, just as the condition
$2d+2<d(C_\Hrm^\perp)$ in \Cref{clm:truncation-score} does.
We measure the projection flag. On success, the unnormalized state is
$\ket{\Phi_R}=\Pi_R\ket{\Psi}$.

To choose $r$, we first bound the maximum syndrome-list size.

\begin{lemma}\label{lem:nearopt-dual-lists}
For every $\ell\ge1$ and $q\ge\Omega_\tau(1)$, we have
\begin{equation}\label{eq:nearopt-list-bound}
 \max_{s\in\Fbb_q^K}|\Ecal_{d+2}(s)|\le\exp\left(O_\tau\left(
       \ell\log(\ell+1)+\frac{n}{(\log n)^3}\right)\right).
\end{equation}
\end{lemma}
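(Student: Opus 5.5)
Fix a syndrome $s$ and an element $e_0\in\Ecal_{d+2}(s)$; if the list is empty there is nothing to prove. Every $e\in\Ecal_{d+2}(s)$ gives $e_0-e\in\ker(G_\Hrm)=C_\Hrm^\perp$. So the map $e\mapsto e_0-e$ injects $\Ecal_{d+2}(s)$ into the set of dual codewords $c\in C_\Hrm^\perp$ with $d(c,e_0)\le d+2$. By \Cref{fct:hermite-duality} with $K=\ell n/2$, we have $C_\Hrm^\perp=TC_\Hrm$ for a block-distance-preserving map $T$. Applying $T^{-1}$, we get
\[
 |\Ecal_{d+2}(s)|\le\abs{\{c\in C_\Hrm:d(c,y)\le d+2\}}
 \quad\text{where}\quad y=T^{-1}e_0.
\]
This reduces the lemma to a list-decoding bound for $C_\Hrm$ at radius $d+2=\tau n+O(1)$, with $\tau<1/2$ a constant.

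We get the two terms in \eqref{eq:nearopt-list-bound} from two bounds, and take whichever applies.

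The first term comes from large $\ell$. Set $\gamma=(1/2-\tau)/2$, so that $d+2\le(1/2-\gamma)n$ once $q\ge\Omega_\tau(1)$. If $\ell\ge16/\gamma^2$ and $q\ge4/\gamma$, then \Cref{thm:krsw-list-recovery} bounds the list by $(\ell/\gamma)^{O(\ell/\gamma^2)}=\exp(O_\tau(\ell\log(\ell+1)))$.

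The second term comes from small $\ell$, namely $\ell<16/\gamma^2=O_\tau(1)$. Here I use \Cref{hendpoint:lr} with singleton sets $F_i=\{y_i\}$. Then $\sum_i|F_i|=n\le T$, because $T=q^{1+\ell\kappa}(\ell\log q)^{-O(\ell)}$ exceeds $q$ for bounded $\ell$ and large $q$. A codeword within distance $\tau n+2$ of $y$ agrees with $y$ on at least $(1-\tau)n-2$ coordinates. So I choose $\alpha\in(3/4,1)$ with $\zeta=2\alpha-1-10/\ln q\le 1-\tau-3/n$; for instance $2\alpha-1=(1-\tau+1/2)/2+\text{slack}$ lies strictly between $1/2$ and $1-\tau$. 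Every such codeword then lies in $\Lcal_\zeta(F)$. The theorem gives $\log L=O(q/\log^3 q)=O(n/(\log n)^3)$, with the implied constant depending on $\alpha$ and hence only on $\tau$.

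Adding the bounds from the two cases gives \eqref{eq:nearopt-list-bound}.

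The main obstacle is whether the second regime really works. Two points need checking. First, the agreement threshold $1-\tau$ must exceed $1/2$ enough that a legal $\alpha>3/4$ exists; it does, since $\tau<1/2$ gives $1-\tau>1/2$. Second, the $T\ge n$ requirement must hold uniformly in $\ell$ up to $O_\tau(1)$, with the $(\ell\log q)^{-O_\alpha(\ell)}$ loss absorbed by $q^{\ell\kappa(\alpha)}$ for large $q$.

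If the $q$-dependence of the constant in \Cref{hendpoint:lr} causes trouble, I would fall back on the fact that in that theorem $q\ge\Omega_\alpha(1)$ is the only requirement. Since $\alpha$ is a function of $\tau$, this is absorbed by the hypothesis $q\ge\Omega_\tau(1)$.
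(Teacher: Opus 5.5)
Your proposal is correct and follows the paper's proof. It uses the same duality reduction of $\Ecal_{d+2}(s)$ to a radius-$(d+2)$ ball around $y$ in $C_\Hrm$; the paper takes $y=-T^{-1}e_0$, a sign difference that does not matter. It then makes the same split at $\ell_0=\lceil 16/\gamma^2\rceil$, applying \Cref{thm:krsw-list-recovery} for large $\ell$ and \Cref{hendpoint:lr} with singleton lists for small $\ell$. The only difference is the choice of $\alpha$: you take $2\alpha-1$ strictly between $1/2$ and $1-\tau$, whereas the paper takes $\alpha'=1-\tau/2$ and absorbs the additive $2$ using the $10/\ln q$ term, and both choices meet the agreement threshold.
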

\begin{proof}
We first reduce the syndrome list to a list-decoding ball in $C_\Hrm$. By \Cref{fct:hermite-duality}, there is an invertible block-diagonal map $T$ such that $C_\Hrm^\perp=TC_\Hrm$. Fix a syndrome $s$. If $\Ecal_{d+2}(s)$ is empty, there is nothing to prove. Otherwise, choose $e_0$ with $G_\Hrm e_0=s$ and put $y=-T^{-1}e_0$. Since $\ker G_\Hrm=C_\Hrm^\perp$, every vector with syndrome $s$ has a unique representation
\[
 e=e_0+Tc=T(c-y),\qquad c\in C_\Hrm.
\]
Each block of $T$ is invertible, so $\wt(e)=d(c,y)$. Consequently,
\[
 |\Ecal_{d+2}(s)|
 =|\{c\in C_\Hrm:d(c,y)\le d+2\}|.
\]
Put $\gamma=(1/2-\tau)/2>0$ and
$\ell_0=\lceil16/\gamma^2\rceil$.
\begin{itemize}
 \item Large multiplicity: $\ell\ge\ell_0$.
The multiplicity condition in \Cref{thm:krsw-list-recovery} holds
by the definition of $\ell_0$.
For sufficiently large $q$, we also have $q\ge4/\gamma$ and
$2/n\le\gamma$, so
$\frac{d+2}{n}
 \le\tau+\frac2n
 \le\tau+\gamma
 =\frac12-\gamma$.
The theorem therefore gives
\[
 |\Ecal_{d+2}(s)|
 \le(\ell/\gamma)^{O(\ell/\gamma^2)}
 =\exp\bigl(O_\tau(\ell\log(\ell+1))\bigr).
\]

\item Small multiplicity: $1\le\ell<\ell_0$.
Here we use \Cref{hendpoint:lr} with $\alpha'=1-\tau/2>3/4$
and singleton lists $A_i=\{y_i\}$. Their total incidence is $n$.
Indeed, writing $\kappa(\alpha')=\frac{1-2\tau}{2-2\tau}>0$, the theorem's incidence budget can be taken in the form
$T=\frac{q^{1+\ell\kappa(\alpha')}}{(\ell\log q)^{B_{\alpha'}\ell}}$
for a constant $B_{\alpha'}>0$. Uniformly over $1\le\ell<\ell_0$,
\[
 \frac{T}{n}\ge
 \left(\frac{q^{\kappa(\alpha')}}
              {(\ell_0\log q)^{B_{\alpha'}}}\right)^\ell
 \ge1
\]
for sufficiently large $q$. Moreover, $10n/\ln q\ge2$ for
sufficiently large $q$. Since $n-d-2$ is an integer, this implies
\[
 n-d-2
 \ge\left\lceil\left(1-\tau-\frac{10}{\ln q}\right)n\right\rceil,
\]
which is exactly the required agreement threshold. Therefore,
\[
 \log|\Ecal_{d+2}(s)|
 =O_{\ell,\tau}\left(\frac{q}{(\log n)^3}\right)
 =O_{\ell,\tau}\left(\frac{n}{(\log n)^3}\right).
\]
There are only finitely many integers $1\le\ell<\ell_0$, and
$\ell_0$ depends only on $\tau$. Taking the maximum of their
constants and field-size thresholds makes the last bound uniform
in $\ell$, with dependence only on $\tau$.
\end{itemize}
Both bounds are uniform in the received word $y$ and hence in $s$. Combining the two cases proves \eqref{eq:nearopt-list-bound}.
\end{proof}

The dual distance makes each competing error unlikely to survive the extra coordinates.

\begin{lemma}\label{lem:nearopt-isolation}
For $q\ge\Omega_\tau(\ell\log(\ell+1))$, there is an integer
$r\le\epsilon n/4$ such that, for a uniform $r$-subset $R\subseteq[n]$
and every fixed $e\in B_d$, we have $\Pr_R[e\notin\Gcal_R]\le1/n$.
Consequently, every fixed unit vector $\ket{\psi}$ satisfies
\begin{equation}\label{eq:nearopt-isolation-mass}
 \E_R\left[\|(I-\Pi_R)\ket{\psi}\|^2\right]
 \le\sum_{e\notin B_d}|\braket{e|\psi}|^2+1/n.
\end{equation}
\end{lemma}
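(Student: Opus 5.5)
Fix $e\in B_d$. I will bound $\Pr_R[e\notin\Gcal_R]$ by a union bound over competitors. If $e\notin\Gcal_R$, there is some $e'\in B_{d+2}\setminus\{e\}$ with $H_Re'=H_Re$. Such an $e'$ satisfies two conditions: $G_\Hrm e'=G_\Hrm e$, and $e'_R=e_R$. The first condition places $e'$ in the syndrome list $\Ecal_{d+2}(G_\Hrm e)\setminus\{e\}$. By \Cref{lem:nearopt-dual-lists}, this list has size at most
\[
L_0:=\exp\bigl(O_\tau(\ell\log(\ell+1)+n/(\log n)^3)\bigr).
\]
For each fixed competitor $e'$, \Cref{clm:dualwt} gives $\wt(e-e')\ge n/2+1$. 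Hence the set $D=\{i:e_i\ne e'_i\}$ has $|D|>n/2$, and the second condition $e'_R=e_R$ forces $R\cap D=\emptyset$. For a uniform $r$-subset,
\[
\Pr_R[R\cap D=\emptyset]=\binom{n-|D|}{r}\Big/\binom nr\le\Bigl(\frac{n-|D|}{n}\Bigr)^r<2^{-r}.
\]
The union bound then gives $\Pr_R[e\notin\Gcal_R]\le L_0\,2^{-r}$.

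Next I choose $r=\lceil \log_2 L_0+\log_2 n\rceil$. This choice gives $L_02^{-r}\le 1/n$. The remaining task is to check $r\le\epsilon n/4$, and this is the step I expect to be the main obstacle, since it is where the hypothesis $q\ge\Omega_\tau(\ell\log(\ell+1))$ is used. Note that $\tau$, and hence all hidden constants, depend only on $\alpha$. We have
\[
r\le C_\tau\bigl(\ell\log(\ell+1)+n/(\log n)^3\bigr)+\log n+2.
\]
The term $C_\tau n/(\log n)^3+\log n+2$ is at most $\epsilon n/8$ once $q$ exceeds a constant depending on $\alpha$. The term $C_\tau\ell\log(\ell+1)$ is at most $\epsilon n/8$ once $n=q-1\ge(8C_\tau/\epsilon)\,\ell\log(\ell+1)$. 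This requirement is exactly of the assumed form $q\ge\Omega_\alpha(\ell\log(\ell+1))$. One has to take care that the constant in \Cref{lem:nearopt-dual-lists} is uniform in $\ell$; the paper asserts this, so I simply take the maximum of the two thresholds.

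Finally, for \eqref{eq:nearopt-isolation-mass}, write $\ket\psi=\sum_e\psi_e\ket e$. Since $\Pi_R$ is diagonal in the error basis and $\Gcal_R\subseteq B_d$,
\[
\|(I-\Pi_R)\ket\psi\|^2=\sum_{e\notin B_d}|\psi_e|^2+\sum_{e\in B_d}|\psi_e|^2\,\1[e\notin\Gcal_R].
\]
Taking expectation over $R$ and applying the per-error bound $1/n$ bounds the second sum by $\tfrac1n\sum_e|\psi_e|^2\le 1/n$, which completes the proof.
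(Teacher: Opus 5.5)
Your proof is correct and follows the paper's argument essentially step for step: a union bound over the competitors in $\Ecal_{d+2}(G_\Hrm e)\setminus\{e\}$, whose number is controlled by \Cref{lem:nearopt-dual-lists}; the $2^{-r}$ miss probability from \Cref{clm:dualwt}; the choice $r\approx\log_2\max_s|\Ecal_{d+2}(s)|+\log_2 n$; and the diagonal computation for \eqref{eq:nearopt-isolation-mass}. The one difference is bookkeeping in the check $r\le\epsilon n/4$: the paper notes $\epsilon>8\bigl(1/2-\sqrt{\tau(1-\tau)}\bigr)$ (from $\beta>1-\epsilon/8$ and $\lambda<\tau$), which makes the threshold depend on $\tau$ alone as the statement's $\Omega_\tau$ asks, whereas your threshold also depends on $\epsilon$; this is harmless, since $\tau$ and $\epsilon$ are both fixed by $\alpha$.
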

\begin{proof}
By \Cref{lem:nearopt-dual-lists}, choosing
\[
 r=\left\lceil C_\tau\left(\ell\log(\ell+1)
       +\frac{n}{(\log n)^3}\right)+\log n\right\rceil
\]
with a sufficiently large constant $C_\tau$ ensures
$2^{-r}\max_s|\Ecal_{d+2}(s)|\le1/n$.
Our choice of $\lambda<\tau$ gives
$\epsilon>8\cdot(1/2-\sqrt{\tau(1-\tau)})>0$.
Thus $\epsilon$ is bounded below by a positive constant depending only
on $\tau$. The first term in $r$ is $O_\tau(\ell\log(\ell+1))$
and the remaining terms are $o_\tau(n)$, so
$q\ge\Omega_\tau(\ell\log(\ell+1))$ with a sufficiently large
constant ensures $r\le\epsilon n/4$.

Fix $e\in B_d$. For any competitor
$f\in\Ecal_{d+2}(G_\Hrm e)\setminus\{e\}$,
\Cref{clm:dualwt} gives $\wt(f-e)\ge n/2+1$.
The augmented syndromes collide exactly when $R$ misses all
coordinates on which $e$ and $f$ differ, so
\[
 \Pr_R[H_Rf=H_Re]
 =\frac{\binom{n-\wt(f-e)}r}{\binom nr}\le2^{-r}.
\]
A union bound over the competitors gives
\[
 \Pr_R[e\notin\Gcal_R]
 \le\bigl(|\Ecal_{d+2}(G_\Hrm e)|-1\bigr)2^{-r}\le1/n.
\]
Finally, since $\Pi_R$ is diagonal in the error basis, every fixed
unit vector $\ket{\psi}$ satisfies
\[
 \E_R\|(I-\Pi_R)\ket{\psi}\|^2
 =\sum_{e\notin B_d}|\braket{e|\psi}|^2
  +\sum_{e\in B_d}|\braket{e|\psi}|^2
       \Pr_R[e\notin\Gcal_R].
\]
The preceding bound and $\|\psi\|=1$ give
\eqref{eq:nearopt-isolation-mass}.
\end{proof}

Since the product-state weight has distribution
$\operatorname{Bin}(n,\theta)$ and $\theta\le\lambda<\tau$, a Chernoff
bound gives
\begin{equation}\label{eq:nearopt-weight-tail}
 \sum_{e\notin B_d}|\braket{e|\Psi}|^2
 \le e^{-2(\tau-\lambda)^2n}.
\end{equation}
Thus, by \Cref{lem:nearopt-isolation}, the average rejected mass satisfies
\begin{equation}\label{eq:nearopt-rejection}
 \delta_n:=\E_R\|(I-\Pi_R)\ket{\Psi}\|^2
 \le e^{-2(\tau-\lambda)^2n}+1/n.
\end{equation}

\paragraph{Decoding and sampling.}
On the successful branch, every retained $e$ belongs to
$\Gcal_R$, so $H_Re$ determines $e$ uniquely within $B_{d+2}$.
We can therefore perform the transform $\ket{e}\to\ket{H_Re}$ by exhaustive search.
Applying $(\Fsf_q^{-1})^{\otimes(K+\ell r)}$ to the augmented syndrome then gives the isometry
\begin{equation}\label{eq:nearopt-isometry}
 V_R\ket{e}=q^{-(K+\ell r)/2}
   \sum_{y\in\Fbb_q^{K+\ell r}}
     \omega^{\langle y,H_Re\rangle}\ket{y}
 \quad\text{for every}\quad e\in\Gcal_R.
\end{equation}
Finally we measure $V_R\ket{\Phi_R}/\|\Phi_R\|$ to obtain $(u,v)$,
and only output $u$ for \Cref{prob:stream-hopi_code}.

\paragraph{Approximation guarantee.}
To derive the approximation guarantee, we follow the second-moment argument in
\Cref{clm:truncation-score}, averaging over the independently chosen $R$.

\begin{claim}\label{clm:nearopt-score}
The probability that the projection succeeds but the output satisfies
$\score(u)<\alpha n$ is $O_\alpha(1/n)$.
\end{claim}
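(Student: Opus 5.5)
The plan is to mirror the second-moment argument of \Cref{clm:truncation-score}, working with the augmented label $(u,v)$ and the score $S(H_R^\top(u,v))$, and to average over the independently chosen $R$. Fix $R$ and set $p_{u,v}=|\braket{u,v|V_R|\Phi_R}|^2$. Then the probability that the projection succeeds and the output has $\score(u)<\alpha n$ is $\E_R\sum_{(u,v):\score(u)<\alpha n}p_{u,v}$. By \eqref{eq:nearopt-score-loss} and $r\le\epsilon n/4$, the event $\score(u)<\alpha n$ forces
\[
S(H_R^\top(u,v))<\alpha n+r\le(1-3\epsilon/4)n .
\]
Set $\mu=\tau' n$, where $\tau'=\beta+O_\alpha(1/n)$ is the per-row satisfaction probability of $\Fsf_\Hrm^{-1}\ket\Psi$. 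Since $\beta>1-\epsilon/8$, we have $\mu-(1-3\epsilon/4)n=\Omega_\alpha(n)$. Chebyshev then bounds the bad mass by $\sum_{u,v}(S(H_R^\top(u,v))-\mu)^2p_{u,v}/\Omega_\alpha(n^2)$, so it suffices to show $\E_R$ of the numerator is $O_\alpha(n)$.

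Next I would reproduce the moment identity. Extend $V_R$ linearly by $V_R\ket e=(\Fsf_q^{-1})^{\otimes(K+\ell r)}\ket{H_Re}$ for all $e$. The Fourier computation gives $\bra{u,v}V_R\propto\bra{H_R^\top(u,v)}\Fsf_\Hrm^{-1}$, with the same normalizing constant for every label. Hence $\bra{u,v}V_RM=S(H_R^\top(u,v))\bra{u,v}V_R$, with $M=\Fsf_\Hrm M_0\Fsf_\Hrm^{-1}$ as in \eqref{eq:clm:truncation-score_2}. It follows that the numerator equals $\|V_R(M-\mu I)\ket{\Phi_R}\|^2$.

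The key point, and the main obstacle, is that $V_R$ is an isometry on the support of $(M-\mu I)\ket{\Phi_R}$. That support consists of errors $e'$ within one block of some $e\in\Gcal_R$, so $e'\in B_{d+1}$. Two such vectors $e',e''$ with $H_Re'=H_Re''$ need to be shown equal. The definition of $\Gcal_R$ only speaks about competitors of $e$ itself, not of neighbours $e'$. This is exactly why the paper uses the $+2$ slack, and I would argue it as follows. The matrix $M$ is a sum of single-block operators, so
\[
\|V_R(M-\mu)\Phi_R\|^2=\sum_{e_1,e_2\in\Gcal_R}\overline{a_{e_1}}a_{e_2}\bra{e_1}(M-\mu)V_R^\dagger V_R(M-\mu)\ket{e_2}.
\]
A cross term $\bra{e_1}(M-\mu)\ket{f_1}\braket{f_1|V_R^\dagger V_R|f_2}\bra{f_2}(M-\mu)\ket{e_2}$ with $f_1\ne f_2$ requires $H_Rf_1=H_Rf_2$ and $f_j$ within one block of $e_j$. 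Then $f_1-f_2+e_2$ lies in $B_{d+2}$ and has augmented syndrome $H_Re_2$. If it differs from $e_2$, this contradicts $e_2\in\Gcal_R$. So $f_1-f_2+e_2=e_2$, giving $f_1=f_2$. If I run into trouble making this exact, I would fall back to bounding these collision terms in expectation over $R$ via \Cref{lem:nearopt-isolation}, since they only arise from non-isolated configurations. With the isometry in hand, the numerator equals $\|(M-\mu I)\ket{\Phi_R}\|^2$.

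Finally, as in \eqref{eq:hermite-filter-deviation}, I would write
\[
\|(M-\mu I)\Phi_R\|\le\|(M-\mu I)\Psi\|+n\|(I-\Pi_R)\Psi\|.
\]
The first term is at most $\sqrt{n/4}$. Squaring and taking $\E_R$, the second term contributes $n^2\delta_n$. By \eqref{eq:nearopt-rejection}, $\delta_n\le e^{-\Omega(n)}+1/n$, so this contribution is only $O(n)$, and the expected numerator is $O_\alpha(n)$, as required. Dividing by $\Omega_\alpha(n^2)$ gives the claimed $O_\alpha(1/n)$.
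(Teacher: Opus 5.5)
Your proposal is correct and follows the paper's proof essentially step for step. The shared steps are: the second-moment bound on the augmented score with $\mu=\beta n+O_\alpha(1)$; the intertwining relation $\bra{u,v}V_RM=S(H_R^\top(u,v))\bra{u,v}V_R$; the isometry of $V_R$ on the one-block neighbourhood of $\Gcal_R$ via isolation in $B_{d+2}$; the perturbation bound averaged over $R$ using $\delta_n$; and the score-loss margin coming from $r\le\epsilon n/4$. Your $e_1+a-b$ argument for the isometry is exactly the paper's, so the fallback is unnecessary. The one cosmetic slip is that squaring the triangle inequality gives $\E_R\|(M-\mu I)\Phi_R\|^2\le n/2+2n^2\delta_n$ rather than a bare $n^2\delta_n$ contribution, which changes nothing.
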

\begin{proof}
Follow the proof of \Cref{clm:truncation-score}, using its operator
$M$ and product-state mean $\mu$ at the present $\lambda$.
We only highlight the changes.
First, replace $G_\Hrm$ by $H_R$, extend $V_R$ to all errors by
\eqref{eq:nearopt-isometry}, and set
$\widetilde M_R\ket{y}=S(H_R^\top y)\ket{y}$.
The same Fourier calculation gives $\widetilde M_RV_R=V_RM$.
Isolation in $B_{d+2}$ ensures that the linear extension of $V_R$
is an isometry on errors within one block of $\Gcal_R$:
if $H_R(e+a)=H_R(f+b)$ with $e,f\in\Gcal_R$ and
$\wt(a),\wt(b)\le1$, isolation of $f$ forces $e+a-b=f$.
Hence the moment identity still holds:
\begin{equation}\label{eq:nearopt-moment-identity}
 \braket{V_R\xi|(\widetilde M_R-\mu I)^2|V_R\xi}
 =\|(M-\mu I)\ket{\xi}\|^2
\end{equation}
for every $\mu$ and every $\ket{\xi}$ supported on $\Gcal_R$. Second, average the perturbation estimate over $R$. Using
\eqref{eq:nearopt-rejection} and the same product-state variance bound,
\[
 \E_R\|(M-\mu I)\ket{\Phi_R}\|^2
 \le n/2+2n^2\delta_n=O_\alpha(n).
\]
Third, discarding the extra coordinates costs at most $r$ in score
by \eqref{eq:nearopt-score-loss}. Since
$\mu=\beta n+O_\alpha(1)\ge(1-\epsilon/4)n$ and
$r\le\epsilon n/4$, a bad output has augmented score less than
$\mu-\epsilon n/2$. The same second-moment bound, applied to the
unnormalized states $\ket{\Phi_R}$ and averaged over $R$, gives
joint probability $O_\alpha(n)/(\epsilon n/2)^2=O_\alpha(1/n)$.
\end{proof}

Adding the rejection probability $\delta_n$, the algorithm fails
with probability at most $O_\alpha(1/n)=O_\alpha(1/q)$.
This calculation uses joint probabilities throughout; it does not
require a lower bound on $\|\Phi_R\|$ for each individual $R$.

\subsection{Implementation in one streaming pass}
\label{subsec:nearopt-implementation}

To implement the algorithm, the streaming phase is the same as in \Cref{subsec:quantum-loading} for $\ket{\Psi}$ and only the postprocessing changes: we use exhaustive search to test isolation and decode the augmented syndrome.
The proof of \Cref{lem:quantum-loading} applies unchanged to the
present $\lambda\in(0,1/2)$, and we use the same arbitrary single-
and two-qubit gate model.

\begin{lemma}\label{lem:nearopt-decoder}
For every $R\subseteq[n]$, the projection $\Pi_R$ and, on success,
the map $\ket{e}\to\ket{H_Re}$ can be implemented with
$O(q\ell\log q)$ space and $q^{O(q\ell)}$ time.
\end{lemma}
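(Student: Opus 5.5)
The plan is to realize every step as a reversible classical circuit that sweeps through candidate errors one at a time. Each step stores only $O(1)$ vectors in $\Sigma^n$ or $\Fbb_q^{K+\ell r}$ as working registers, so the space is $O(n\ell\log q)=O(q\ell\log q)$. The time is governed by $|B_{d+2}|\le|\Sigma|^n=q^{\ell n}=q^{O(q\ell)}$.

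\emph{Step 1: the isolation predicate.} First I compute $H_Re$ into a fresh register in time $\poly(q,\ell)$. Then I iterate $e'$ over $B_{d+2}$ in lexicographic order, using a single counter register of $n\ell\lceil\log q\rceil$ bits. For each $e'$ I compute reversibly the bit $b(e')=\1[e'\ne e\text{ and }\wt(e')\le d+2\text{ and }H_Re'=H_Re]$ and OR it into an accumulator bit. After that I uncompute the intermediate value $H_Re'$, so the workspace is reused. A single OR accumulator is not reversible as stated. I will therefore either keep a counter of collisions of $O(q\ell\log q)$ bits, which is reversible via increment, or apply the standard Bennett-style trick of copying the final flag and then running the loop backwards. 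Combining the result with $\1[\wt(e)\le d]$ gives $\1[e\in\Gcal_R]$ in a flag qubit. I measure that flag and run the whole computation in reverse to clean the workspace. Measuring the flag implements $\Pi_R$ coherently.

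\emph{Step 2: the decoding map.} Given $e\in\Gcal_R$, I compute $s=H_Re$ reversibly as before. To erase $e$, I need the inverse map $s\mapsto e$ restricted to $\Gcal_R$. I build it by a second sweep over $e'\in B_d$: whenever $H_Re'=s$ and $e'\in\Gcal_R$ (tested with the Step 1 subroutine nested inside the loop), I XOR $e'$ into a target register. Isolation guarantees that exactly one $e'$ qualifies, namely $e$ itself, so the sweep computes $\ket{s}\ket{e}\ket{0}\mapsto\ket{s}\ket{e}\ket{e}$. A CNOT-style subtraction then maps the middle register to $\ket{0}$, and running the sweep in reverse clears the target. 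This yields $\ket{e}\mapsto\ket{H_Re}$ on $\Gcal_R$.

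\emph{Cost and the main obstacle.} The nested loops cost $|B_{d+2}|^2\cdot\poly(q,\ell)=q^{O(q\ell)}$ time. At any moment the algorithm holds only a constant number of registers of size $O(n\ell\log q)$: the loop counters, syndromes, and the flag or counter. I expect the main obstacle to be bookkeeping reversibility within the space bound, since naive uncomputation of a loop that stores history would blow up the space. The fix is that each loop iteration is itself reversibly uncomputed before the counter increments, so no history is kept. Likewise, the OR over iterations is handled by a collision counter of logarithmic size in $|B_{d+2}|$, that is $O(q\ell\log q)$ bits, or by the compute, copy, uncompute pattern. Sampling $R$ and storing it take $O(r\log n)$ bits, which is negligible.
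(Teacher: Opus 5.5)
Your proposal is correct and follows essentially the same route as the paper: reversible exhaustive enumeration over $B_{d+2}$ with a collision counter gives the isolation flag, which is measured and uncomputed. Then $H_Re$ is computed and $e$ is erased by searching for the unique match, reusing a constant number of $O(n\ell\log q)$-bit registers throughout. The nested isolation test inside your erasure sweep is unnecessary, because on the accepted branch $e$ is already the unique element of $B_{d+2}$ with augmented syndrome $H_Re$ (this is how the paper does it); the extra test only squares the $q^{O(q\ell)}$ running time and is harmless.
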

\begin{proof}
To test membership in $\Gcal_R$, enumerate $f\in B_{d+2}$,
counting those with $H_Rf=H_Re$.
Accept exactly when $\wt(e)\le d$ and this count is one. Compute this flag reversibly,
uncompute the work registers, and measure the flag to implement $\Pi_R$.

On acceptance, perform
$\ket{e}\ket{0}\to\ket{e}\ket{H_Re}\to\ket{0}\ket{H_Re}$.
The first step computes the augmented syndrome. For the second,
enumerate $f\in B_{d+2}$ with the given augmented syndrome; the
unique match is $e$, so this search erases the error register.
Both searches examine at most $|\Sigma|^n=q^{O(q\ell)}$ candidates.
The candidate, counters, and syndrome checks use $O(q\ell\log q)$
space, reusing the work registers after each check. The same bounds
hold for the reversible computations and their inverses.
\end{proof}

Now we are ready to prove \Cref{thm:quantum-exp-summary}.

\begin{proof}[Proof of \Cref{thm:quantum-exp-summary}]
In the streaming phase, we use \Cref{lem:quantum-loading} to prepare $\ket{\Psi}$ with the claimed space and time bounds.
In the postprocessing phase, choose $r$ as in \Cref{lem:nearopt-isolation}
and sample a uniform $r$-subset $R$. Apply \Cref{lem:nearopt-decoder}
to measure $\Pi_R$ and, on acceptance, decode the augmented syndrome.
On acceptance, inverse Fourier-transform the
augmented syndrome, measure $(u,v)$, and output $u$ for an answer to \Cref{prob:stream-hopi_code}.
On rejection, output the zero polynomial.
Sampling $R$ and the coordinatewise Fourier transforms use
$O(q\ell\log q)$ space and $\poly(q,\ell)$ time. Capping the
integer-sampling rejection trials at $O(\log n)$ per draw adds at
most $1/n$ failure probability, while keeping $R$ exactly
uniform conditional on success.
Together with \eqref{eq:nearopt-rejection} and
\Cref{clm:nearopt-score}, this gives total failure probability
$O_\alpha(1/q)\le0.01$ for
$q\ge\Omega_\alpha(\ell\log(\ell+1))$ sufficiently large.
\end{proof}

\section{Classical streaming lower bounds}
\label{sec:classical}
In this section, we prove the classical streaming lower bound for
$(\alpha,q,\ell)$-HOPI.

\ClassicalStreamingLower*

Fix $\ell$ and $\alpha$ as above. We prove the streaming lower bound by using the $k$-party communication complexity. Index the rows in their fixed group order, write $F_i=S_{a_i}$, and partition them into $k$ consecutive blocks of size at most $\lceil n/k\rceil$, one for each party. We use public-blackboard communication and write $|\Pi|$ for the worst-case communication cost. The communication bound holds even with success probability $2/3$, and therefore applies to the $0.99$ success requirement in \Cref{prob:streaming-hopi}.

Throughout this section, put $h=\ln n$,$\zeta=2\alpha-1-\frac{10}{\ln q}$ and $k=\lceil h\rceil$.  For $F=(F_1,\ldots,F_n)$ and $c\in\Sigma^n$, write $\err_F(c)=|\{i\in[n]:c_i\notin F_i\}|$. 

Recall that the Hermite code is $C_\Hrm=
\left\{\bigl(J_{\ell}P(a_1),\ldots,J_{\ell}P(a_n)\bigr):
 P\in\Pcal_K \right\}\subseteq\Sigma^n$
where $\Sigma=\Fbb_q^\ell$ and $K=\ell n/2$. This code is fixed independently of the input lists.
Using the list-recovery bound in \Cref{hendpoint:lr}, it remains
to prove the following communication lower bound.
\begin{lemma}
\label{cl:communication}
Suppose $C_\Hrm$ is $(\zeta,T,L)$-list-recoverable, where
$T\ge2n$ and $\log L\le A n/h^3$ for some $A>0$.
For $q\ge\Omega_{\alpha,A}(1)$, there is an input distribution such that any communication protocol $\Pi$ giving an $\alpha$-approximation with probability at least $2/3$ over the distribution has
\begin{equation}
 |\Pi|\ge \frac{c\cdot T}{h^2}-\frac{\ell n\log q}c,
 \label{cl:communication-interface}
\end{equation}
where $c>0$ is an absolute constant.
\end{lemma}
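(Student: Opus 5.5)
We follow the outline of the technical overview, made rigorous through the structure-versus-randomness refinement of \cite{yangzhang2024}.

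\textbf{Setup.} The input distribution is the uniform one: each $S_i$ is an independent uniform subset of $\Sigma$ of size $(|\Sigma|-1)/2$. Up to a negligible correction, the membership bits are close to independent uniform; we handle this either by conditioning or by noting that a fixed-weight slice is $\gamma$-dense for $\gamma$ close to $1$ at a cost of $O(\log|\Sigma|)$ per row, which is absorbed by the $\ell n\log q/c$ term. By \Cref{app:classical-random-sat}, $\OPT=n$ with high probability. Hence an $\alpha$-approximation is a codeword $c$ with $\err_F(c)\le(1-\alpha)n$.

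\textbf{Reductions.} First fix the randomness so that success probability over the inputs is at least $1/2$. Then append a verification phase. The protocol announces the output $u$, which costs $\ell n\log q/2$ bits. Each party then announces the $n/k$ membership bits $\1[c_i\in F_i]$ for its rows, and the protocol outputs $\bot$ if the score is below $\alpha n$. Together this costs $O(\ell n\log q)$ bits, which is the subtracted term in \eqref{cl:communication-interface}.

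\textbf{Refinement.} Next apply the refinement of \cite{yangzhang2024} with density parameter $\gamma=1-1/h$. After each message, the current rectangle $X_1\times\cdots\times X_k$ is decomposed further: whenever some $X_a$ restricted to the free coordinates fails to be $\gamma$-dense, we restrict to a maximal violating set $J$ and fix it. This produces a subcube-like tree with fixed sets $I_a$ and $\gamma$-dense remainders. The key estimate to establish is \Cref{cl:preprocessing}, namely
\[
\E[\codim(R_{\mathrm{leaf}})]\le O\bigl((|\Pi|+O(\ell n\log q))/(1-\gamma)\bigr)=O(h\cdot|\Pi'|).
\]
I would prove it by a density-increment potential. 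The deficiency $|I_a|+\log|X_a|-|N_a\setminus I_a|$ (negative min-entropy deficiency) rises by at most the message length and falls by $(1-\gamma)|J|$ whenever $J$ is fixed. The per-party accounting must be done so that $k$ does not appear; this follows \cite{huang2025minentropy}. Markov's inequality then gives $\Pr[\codim>T]\le O(h|\Pi'|/T)$.

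\textbf{Dangerous codewords.} Before the budget $T$ is exceeded, the sets $F_i(R)$ of fixed membership bits satisfy $\sum_i|F_i(R)|\le T$. List recovery therefore shows that at most $L$ codewords ever become dangerous, meaning they have at least $\lceil\zeta n\rceil$ fixed queried bits. When a message from party $a$ first makes $c$ dangerous, at least $\tau\ge(1-1/k-\zeta)n-1$ of $c$'s queried bits lie outside party $a$ and are unfixed. Since $\zeta<2\alpha-1$, $\tau$ exceeds $2(1-\alpha)n+\Omega(n/\log q)$ for $k=\lceil h\rceil$; this is where the $10/\ln q$ slack in $\zeta$ is used. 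The unfixed bits are $\gamma$-dense, and the parties' factors are independent within the rectangle. So the probability that $c$ ends with $\err\le(1-\alpha)n$ is at most
\[
2^{(1-\gamma)\tau}\Pr[\operatorname{Bin}(\tau,1/2)\le(1-\alpha)n]\le 2^{n/h}\cdot2^{-\Omega(n/h^2)}.
\]
This is the step I expect to be delicate: the Chernoff gap is only of order $(n/\log q)$ relative to $\tau$, giving exponent $\Omega(n/h^2)$, and it must beat both $2^{(1-\gamma)\tau}$ and $L=2^{An/h^3}$. It may be necessary to choose $\gamma=1-1/h^{2}$ instead. With that choice the $h^2$ in \eqref{cl:communication-interface} comes from $1/(1-\gamma)$ and the density exponent $n/h^2$ is dominated by an appropriate constant; the constants would be tuned for $q\ge\Omega_{\alpha,A}(1)$.

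\textbf{Conclusion.} A successful verified leaf within budget outputs a dangerous codeword. Ordering dangerous codewords by first appearance and taking a union bound over the $L$ slots gives
\[
1/2\le O(h^2|\Pi'|/T)+L\cdot2^{-\Omega(n/h^2)}.
\]
The second term is $o(1)$, so $|\Pi'|\ge cT/h^2$, and subtracting the verification cost yields \eqref{cl:communication-interface}.
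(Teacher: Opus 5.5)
Your proposal is correct in substance and follows the paper's skeleton. You fix the coins and append full verification at cost $O(\ell n\log q)$. You refine into a $\gamma$-subcube-like protocol by density restoration applied only to the sender's factor. You use list recovery on the exposed address sets to cap at $L$ the codewords that ever become dangerous before the budget $T$ is exceeded. Finally, you bound each first-appearance candidate by the density-weighted binomial tail over its unexposed non-sender queries.

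Resolve your hedge in favour of $\gamma=1-h^{-2}$ from the start, as the paper does. With $\gamma=1-1/h$ the factor $2^{\tau/h}$ really does swamp the tail. With $\gamma=1-h^{-2}$ the constants close. The choice $k=\lceil h\rceil$ and the $10/\ln q$ slack give $\tau/2-(1-\alpha)n\ge 4n/h$, so Hoeffding yields $e^{-32n/h^2}$, which beats both $2^{(1-\gamma)\tau}\le 2^{n/h^2}$ and $L\le 2^{An/h^3}$. Your potential is also misstated. The right one is $\sum_a\bigl(D_a+(1-\gamma)|I_a|\bigr)$ with $D_a=|N_a|-|I_a|-\log|X_a|\ge0$, whose expected increase is at most $1+\log\mathrm e$ per original bit.

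The genuine difference from the paper is how you handle the within-row dependence of the membership bits: each row is a uniform point of a fixed-weight slice.

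The paper removes this dependence by resampling each list as an excluded jet $\xi_i$ plus a uniform perfect matching of $\Sigma\setminus\{\xi_i\}$ with fair orientation bits. After fixing $\xi_i$ and the matchings, the protocol runs on exactly uniform independent bits and the potential starts at $0$. The price is that exposed sets have size $n+2\codim(R_v)$, hence the hypothesis $T\ge 2n$. In addition, codewords dangerous already at the root must be ruled out via $\zeta>1-\alpha$.

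Your route of working directly on the slice product also works, but not for the reasons you first give. The bits are not close to uniform in statistical distance, and ``conditioning'' as stated is not an argument. What is true is that, conditional on any transcript, the input is uniform on a product rectangle inside the slice product. So the density machinery applies verbatim, with initial potential $\sum_a D_a=n\bigl(|\Sigma|-\log\binom{|\Sigma|}{(|\Sigma|-1)/2}\bigr)=O(n\ell\log q)$, which is absorbed into the additive $\ell n\log q/c$ term. In return you get the cleaner identity $\sum_i|F_i(v)|=\codim(R_v)$, and there are no root-dangerous codewords.

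For that last point you do need the root to be $\gamma$-dense without any fixing. This holds for large $q$. A $j$-subset of one row has deficiency $O(j/|\Sigma|)\le j/h^2$ when $j\le|\Sigma|/2$. Otherwise its deficiency is $O(\ell\log q)\le|\Sigma|/(2h^2)\le j/h^2$. The check matters: a density restoration at the root would fix coordinates in all parties simultaneously. The step ``at most $\lceil n/k\rceil$ of the queries belong to the sender'' would then fail for codewords dangerous at the root.
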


We prove the communication lower bound by using the gadgetless lifting idea of \cite{yangzhang2024,maoyangzhang2024} and present the proof by following~\cite[Section~5]{ggjl2025}. We first (\Cref{subsec:classical-subcube}) prove the lower bound for highly structured communication protocols that we call subcube protocols. We then (\Cref{subsec:classical-dense}) generalize
that proof to subcube-like protocols. Finally
(\Cref{cl:preprocessing}), we use the structure-vs-randomness paradigm~\cite{gpw2020}
to reduce general protocols to this case. 

\subsection{Communication lower bound for subcube protocols}
\label{subsec:classical-subcube}
We start by defining the input distribution $\mu$ and its properties. Let $\mu$ be the distribution of independent uniform lists $F_i$ of size $(|\Sigma|-1)/2$. Under $\mu$, we have $\OPT=n$ with probability $1-o(1)$ (See \Cref{hb:random-sat}). Thus, on almost all inputs from $\mu$, an $\alpha$-approximation must satisfy at least $\alpha n$ rows.
In each row, choose a uniform excluded jet $\xi_i$ and a uniform perfect matching of $\Sigma\setminus\{\xi_i\}$, and select one endpoint of every matching edge using an independent fair bit. The resulting list has size
$(|\Sigma|-1)/2$. Its distribution is invariant under all permutations of $\Sigma$, which act transitively on lists of this size; hence it is uniform. Therefore, conditional on the excluded jets and matchings, the orientation bits are uniform and independent, and the rows are sampled
independently. 

Since the hard distribution is satisfiable with probability $1-o(1)$, the protocol outputs a polynomial satisfying at least $\alpha n$ rows with probability at least $2/3-o(1)$. By averaging, we fix the matchings, excluded jets, and random coins while retaining success probability at least $1/2$ over uniform orientation bits. We append a verification step at the end. The output party sends the polynomial's $K$ coefficients, and each party reports the membership bit of the queried jet in each of its rows. We reject invalid polynomials or those satisfying fewer than $\alpha n$ rows. The resulting deterministic protocol is \emph{fully verified}: every accepting transcript determines the output polynomial and all queried membership bits, hence the corresponding orientation bits for nonexcluded jets. This adds only $K\lceil\log q\rceil+n+O(1)=O(\ell nh)$ bits of communication, so it suffices to analyze fully verified protocols. 

\paragraph{Proof for subcube protocols.}
We first define the notion of subcube protocols.
Let $N_a$ be the set of input bits held by party $a$.
A rectangle $R=X_1\times\cdots\times X_k$, where
$X_a\subseteq\{0,1\}^{N_a}$, is a subcube rectangle if each factor $X_a$ fixes some coordinates $I_a\subseteq N_a$ and leaves all other coordinates unrestricted. Its codimension is
$\codim(R)=\sum_a|I_a|$.
A subcube protocol is such that every complete-message node of
the protocol tree corresponds to a subcube rectangle.
This model is closely related to decision trees, but is strictly more general. 

Fix a fully verified subcube protocol $\Pi$ and a matching structure.  For a node $v$, let $F_i(v)$ consist of $\xi_i$ and both endpoints
of every edge whose orientation is fixed by $R_v$. Thus $F_i(v)$ contains addresses with known membership status, whether positive or negative; it is not the input list $F_i$. Since matching edges are disjoint,
\begin{equation}
 \sum_{i\in[n]}|F_i(v)|=n+2\codim(R_v).
 \label{cl:touched}
\end{equation}
Stop the analysis of a path just before its first node with $\sum_{i\in[n]}|F_i(v)|>T$, and call this an abort. For the original random leaf
$R$, uniformity gives $\E\codim(R)=\ent(R)\le|\Pi|$, where $\ent$ is Shannon entropy. By monotonicity and Markov's inequality,
\begin{equation}
 \Pr[\text{abort}]
 \le\frac{2\E[\codim(R)]}{T-n}
 \le\frac{2|\Pi|}{T-n}.
 \label{cl:abort}
\end{equation}

Call a codeword \emph{dangerous} at $v$ if at least
$\lceil\zeta n\rceil$ of its addresses belong to $F(v)$.
For the successive nodes $v_1,\ldots,v_d$ in the stopped path, let
\[
 Q_i=\Lcal_\zeta(F(v_i))
 =\{c\in C_\Hrm:
       |\{r\in[n]:c_r\in F_r(v_i)\}|\ge\lceil\zeta n\rceil\}.
\]
We observe three properties:
\begin{enumerate}[label=(\roman*),nosep]
 \item $Q_i\subseteq Q_{i+1}$ for all $i$.
 \item Every successful output on a non-aborting path belongs to $Q_d$.
 \item $|Q_d|\le L$.
\end{enumerate}
Here (i) follows from monotonicity of the fixed coordinates. For (ii), full verification makes every output query's membership status known, so every output address belongs to $F(v_d)$.
For (iii), apply list recovery to the exposed
address sets, rather than to the input lists:
\begin{equation}
 \sum_{r\in[n]}|F_r(v_i)|\le T
 \quad\Longrightarrow\quad
 |Q_i|=|\Lcal_\zeta(F(v_i))|\le L.
 \label{cl:list-recovery-use}
\end{equation}
It holds at every node, even though $F(v_i)$ depends on the transcript, because list recovery applies to every family of address sets within the incidence budget. Moreover, (i) gives $\abs{\bigcup_{i=1}^d Q_i}=|Q_d|\le L$. Thus at most $L$ distinct codewords become dangerous along the entire stopped path, not merely at any single node. This is the use of Hermite list recovery in the subcube proof: it bounds the number of candidates in the final union bound. No list-recovery algorithm is run by the protocol.

A codeword dangerous at the root queries at least $\lceil\zeta n\rceil$
excluded jets and cannot satisfy $\alpha n$ rows, since $\zeta>1-\alpha$ for sufficiently large $n$. Order the other dangerous codewords by first appearance, breaking ties by a fixed order, and pad the list with $\bot$ to obtain
$c_1,\ldots,c_L$. Once an index is assigned, its codeword does not change.

\begin{claim}
\label{cl:first-appearance}
For every $j\in[L]$, $\Pr[c_j\ne\bot\text{ and }\err_F(c_j)\le(1-\alpha)n] \le e^{-32n/h^2}$.
\end{claim}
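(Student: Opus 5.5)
The plan is to condition on the node at which $c_j$ is first assigned, and then apply a Chernoff bound to the still-unfixed orientation bits held by parties other than the sender. Formally, let $\sigma_j$ denote the stopping time given by the first node $v$ on the stopped path at which the $j$th dangerous codeword appears. The event in question is contained in the union over possible such nodes $v$ (with $c_j$ determined by $v$ and the ordering rule) of $\{\text{reach }v\}\cap\{\err_F(c_j)\le(1-\alpha)n\}$. By the law of total expectation, it suffices to show that for every node $v$ at which some codeword $c=c_j$ first becomes dangerous (and $c$ is not dangerous at the root), we have $\Pr[\err_F(c)\le(1-\alpha)n\mid R_v]\le e^{-32n/h^2}$. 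Summing over the disjoint events ``the $j$th codeword is first assigned at $v$'' then gives the claim.

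Fix such a node $v$, and let $u$ be its parent, so that $c$ is not dangerous at $u$. The message from $u$ to $v$ is sent by a single party $a$, which holds at most $\lceil n/k\rceil$ rows. Only rows of party $a$ can gain fixed orientation bits, so among the rows $r$ held by other parties, fewer than $\lceil\zeta n\rceil$ have $c_r\in F_r(v)$. Consequently, at least
\[
 \tau:=n-\lceil n/k\rceil-\lceil\zeta n\rceil
\]
rows $r$ outside party $a$ satisfy $c_r\notin F_r(v)$. Each such row has $c_r\neq\xi_r$, and $c_r$ is an endpoint of a matching edge whose orientation bit is not fixed by $R_v$. Since $\Pi$ is a subcube protocol, $R_v$ is a subcube, so conditioned on reaching $v$ these orientation bits are independent and uniform. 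Distinct rows use distinct bits. Each bit independently makes $c_r\notin F_r$ with probability exactly $1/2$. Hence $\err_F(c)$ stochastically dominates $\mathrm{Bin}(\tau,1/2)$.

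It remains to compare the parameters. We have $\zeta=2\alpha-1-10/\ln q$ and $k=\lceil\ln n\rceil$, so
\[
 \tau\ge(2-2\alpha)n+\frac{10n}{\ln q}-\frac{n}{h}-2\ge 2(1-\alpha)n+\frac{8n}{h}
\]
for large $q$, using $\ln q\approx h$. Therefore $(1-\alpha)n\le\tau/2-4n/h$, and Hoeffding's inequality gives
\[
 \Pr\bigl[\mathrm{Bin}(\tau,1/2)\le\tau/2-4n/h\bigr]\le\exp\bigl(-2(4n/h)^2/\tau\bigr)\le e^{-32n/h^2},
\]
since $\tau\le n$. This is exactly the stated bound.

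The main obstacle is to justify conditioning on a stopping-time event, namely that the codeword first becomes dangerous at $v$. This event is determined by the path up to $v$, so it is measurable with respect to $R_v$. The remaining bits are still uniform given $R_v$, because the protocol is a subcube protocol and the input distribution conditioned on the fixed matchings and excluded jets is a product of uniform bits. Care is also needed so that the ``not aborted'' condition only refers to nodes up to $v$, which is true by construction of the stopped path.
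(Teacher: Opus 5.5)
Your proposal is correct and follows essentially the same route as the paper: you fix the entry node $v$ of $c_j$, use that only the sender's factor changes to get at least $n-\lceil n/k\rceil-\lceil\zeta n\rceil\ge 2(1-\alpha)n+8n/h$ unexposed non-sender bits that are independent and fair given $R_v$, apply Hoeffding with $t\le n$, and average over the disjoint (antichain) entry events. Your explicit check of the parameter inequality is a detail the paper only asserts.
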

\begin{proof}
Fix $j$ and an entry node $v$ where $c_j$ first becomes dangerous. Before the sender's message, fewer than $\lceil\zeta n\rceil$ of its addresses were exposed. The other parties therefore still hold $t$ unexposed queried bits, where
\begin{equation}
 t\ge n-\lceil n/k\rceil-\lceil\zeta n\rceil
   \ge2(1-\alpha)n+\frac{8n}{h}.
 \label{cl:unfixed-queries}
\end{equation}
Conditional on $R_v$, these bits are independent and fair, possibly complemented. Since $t\le n$, Hoeffding's inequality gives
\[
 \Pr[\err_F(c_j)\le(1-\alpha)n\mid R_v]
 \le\Pr[\operatorname{Bin}(t,1/2)\le(1-\alpha)n]
 \le e^{-32n/h^2}.
\]
The candidate and its index are determined at $v$. For a fixed $j$,
its possible entry nodes form an antichain, so their input events
are disjoint. Averaging the conditional bound over these events proves
the claim. No condition is imposed on the future transcript.
\end{proof}

Every successful non-aborting output is among these at most $L$
candidates. Using the claim and properties (ii)--(iii), we obtain
\begin{align}
 \Pr[\Pi\text{ succeeds and does not abort}]
 &\le\sum_{j=1}^L
   \Pr[c_j\ne\bot\text{ and }\err_F(c_j)\le(1-\alpha)n]
 \notag\\
 &\le L e^{-32n/h^2}
 \notag=\exp\left(O_{\ell,\alpha}(n/h^3)-32n/h^2\right)=o(1).
 \label{cl:subcube}
\end{align}
The list-recovery bound controls the number of candidates, whereas
\Cref{cl:first-appearance} controls the success probability of
each candidate through its still-unexposed queries.
Together with~\eqref{cl:abort}, this gives
\[
 \Pr[\Pi\text{ succeeds}]
 \le\frac{2|\Pi|}{T-n}+o(1).
\]
Therefore, any such protocol with success probability at least
$1/2$ has $|\Pi|=\Omega(T)$.

\subsection{Communication lower bound for general protocols}
\label{subsec:classical-dense}
We will now prove the lower bound for a more general class of
subcube-like protocols. We use dense random variables as a notion of pseudorandomness. In this section, fix $\gamma=1-h^{-2}$.

\begin{definition}[Min-entropy and density]
The min-entropy of a random variable $X$ is
\[
 \ent_\infty(X)=-\log \max_x\Pr[X=x].
\]
A distribution on $\{0,1\}^U$ is $\gamma$-dense if
$\ent_\infty(X_J)\ge\gamma|J|$ for every $J\subseteq U$.
\end{definition}

\begin{definition}[Subcube-like protocols]
A rectangle $R=X_1\times\cdots\times X_k$ is $\gamma$-subcube-like
with respect to sets $I_1,\ldots,I_k$ if $(X_a)_{I_a}$ is fixed and
the uniform distribution on the remaining coordinates of $X_a$ is
$\gamma$-dense, for every party $a$.
Its codimension is $\sum_a|I_a|$.
A protocol is subcube-like if every complete-message node is subcube-like.
\end{definition}

\paragraph{Proof for subcube-like protocols.}
The argument is nearly identical to that for subcube protocols.
Define $Q_1,\ldots,Q_d$ and $c_1,\ldots,c_L$ as before.
The only change to \Cref{cl:first-appearance} is its conditional
probability bound. The $t$ unexposed queries in
\eqref{cl:unfixed-queries} lie in the nonsending parties.
Density within each party and independence across party factors imply
that every assignment to these queries has probability at most
$2^{-\gamma t}$. Hence
\begin{align}
 \Pr[\err_F(c_j)\le(1-\alpha)n\mid R_v]
 \le2^{-\gamma t}
   \sum_{s=0}^{\lfloor(1-\alpha)n\rfloor}\binom ts
 \notag =2^{t/h^2}
   \Pr[\operatorname{Bin}(t,1/2)\le(1-\alpha)n]
 \le e^{-30n/h^2}.
 \label{cl:dense-tail}
\end{align}
At a fully verified leaf, every queried orientation bit is constant and must belong to its owner's fixed set: a constant bit cannot lie in a $\gamma$-dense distribution. Thus property (ii) still holds. The same first-appearance and union-bound argument, followed by Markov's inequality, gives
\begin{equation}
 \Pr[\Pi\text{ succeeds}]
 \le\frac{2\E[\codim(R)]}{T-n}+L e^{-30n/h^2},
 \label{cl:success-exposure}
\end{equation}
where $R$ is the original random leaf. It remains to bound its expected codimension for a refinement of a general protocol.

\paragraph{From general protocols to subcube-like protocols}
We will now prove \Cref{cl:communication} for general protocols. The proof is by a reduction to the subcube-like case using the density-restoring partition of~\cite{gpw2020,yangzhang2024,maoyangzhang2024,ggjl2025}.
The following lemma is the multiparty version needed here.
\begin{lemma}[Density-restoring partition \cite{gpw2020}]
\label{cl:density-restoring}
Let $X\subseteq\{0,1\}^{U}$ be nonempty and equip it with the uniform distribution. For every $0<\gamma<1$, there is a partition $X=\bigsqcup_j X^j$ and, for every cell, a set $I_j\subseteq U$ and a string
$a_j\in\{0,1\}^{I_j}$ such that:
\begin{enumerate}[label=(\roman*),nosep]
 \item every $x\in X^j$ satisfies $x_{I_j}=a_j$;
 \item the uniform distribution on the residual set
   $X'_j:=\{x_{U\setminus I_j}:x\in X^j\}$ is $\gamma$-dense;
 \item if $j$ is sampled with probability
   $p_j=|X^j|/|X|$, then
   \begin{equation}
    \E_j[\operatorname{def}(X'_j)]
    \le \operatorname{def}(X)+(\gamma-1)\E_j[|I_j|] + \log \mathrm e.
    \label{cl:restoration-step}
   \end{equation}
\end{enumerate}
\end{lemma}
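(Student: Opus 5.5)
We will prove \Cref{cl:density-restoring} by the standard greedy carving argument of~\cite{gpw2020}, with the deficiency measured as $\operatorname{def}(Y)=|V|-\log|Y|$ for $Y\subseteq\{0,1\}^V$. The plan is to build the partition iteratively, certify density of each residual by maximality, and then control the expected deficiency by an integral estimate.

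\emph{Construction.} Set $Y_1=X$. At step $j$, if $Y_j=\emptyset$, stop. Otherwise, equip $Y_j$ with the uniform distribution and look for the largest set $J\subseteq U$ (possibly $J=\emptyset$) admitting some $\beta\in\{0,1\}^J$ with
\[
 \Pr_{x\in Y_j}[x_J=\beta]>2^{-\gamma|J|}.
\]
If none exists with $J\neq\emptyset$, the uniform distribution on $Y_j$ is already $\gamma$-dense. In that case take $I_j=\emptyset$, $X^j=Y_j$, and stop. Otherwise fix such a $J$ of maximum size together with its $\beta$, and put
\[
 I_j=J,\qquad a_j=\beta,\qquad X^j=\{x\in Y_j:x_J=\beta\},\qquad Y_{j+1}=Y_j\setminus X^j.
\]
The process terminates because each cell is nonempty. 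Property (i) then holds by construction.

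\emph{Density of the residuals (ii).} Suppose, for contradiction, that $X'_j$ is not $\gamma$-dense. Then some nonempty $J'\subseteq U\setminus I_j$ and some $\beta'$ satisfy $\Pr_{X^j}[x_{J'}=\beta']>2^{-\gamma|J'|}$. Multiplying by $\Pr_{Y_j}[x_{I_j}=a_j]>2^{-\gamma|I_j|}$ gives
\[
 \Pr_{Y_j}\bigl[x_{I_j\cup J'}=(a_j,\beta')\bigr]>2^{-\gamma(|I_j|+|J'|)}.
\]
This is a strictly larger violating set, contradicting the maximality of $I_j$.

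\emph{Deficiency (iii).} Write $s_j=|Y_j|/|X|$, so that $p_j=s_j-s_{j+1}$. By the choice of $I_j$ we have $|X^j|\ge2^{-\gamma|I_j|}|Y_j|$. Hence
\[
 \operatorname{def}(X'_j)=|U|-|I_j|-\log|X^j|\le|U|-|I_j|+\gamma|I_j|-\log|X|-\log s_j,
\]
that is,
\[
 \operatorname{def}(X'_j)\le\operatorname{def}(X)+(\gamma-1)|I_j|+\log(1/s_j).
\]
Taking the expectation over $j\sim p$, it remains to show $\sum_j p_j\log(1/s_j)\le\log\mathrm e$. On the interval $t\in[s_{j+1},s_j]$, which has length $p_j$, we have $\log(1/t)\ge\log(1/s_j)$. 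Therefore
\[
 \sum_j p_j\log(1/s_j)\le\int_0^1\log(1/t)\,dt=\log\mathrm e.
\]

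The only delicate point is the maximality argument in (ii). There one must pick the violating set of maximum size, not merely any violating set, so that the combined set $I_j\cup J'$ gives the contradiction. The remaining steps are routine bookkeeping.
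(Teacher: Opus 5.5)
Your proposal is correct and is the standard greedy construction from \cite{gpw2020} that the paper cites without reproving. In that construction, carving off the cell of a maximal violating set gives $\gamma$-density of each residual by maximality, and the estimate $\sum_j p_j\log(1/s_j)\le\int_0^1\log(1/t)\,dt=\log\mathrm e$ gives the expected-deficiency bound (iii); choosing an inclusion-maximal rather than maximum-size violating set would work equally well in (ii).
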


\begin{lemma}
\label{cl:preprocessing}
Every deterministic protocol $\Pi$ on uniform independent input bits
has a $\gamma$-subcube-like refinement $\Pi'$ with the same output
such that its random leaf $R$ satisfies
\begin{equation}
 \E[\codim(R)]=O(h^2|\Pi|).
 \label{cl:fixed-communication}
\end{equation}
The implicit constant is independent of the number of parties.
\end{lemma}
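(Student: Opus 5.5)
We build $\Pi'$ top-down, simulating $\Pi$ message by message and inserting density-restoring partitions (\Cref{cl:density-restoring}) after each message. We maintain the invariant that every complete-message node of $\Pi'$ is a $\gamma$-subcube-like rectangle $R=X_1\times\cdots\times X_k$ with fixed sets $I_1,\ldots,I_k$. At the root every $X_a$ is the full cube, so $I_a=\emptyset$ and the rectangle is $1$-dense.

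When party $a$ sends a message bit of $\Pi$, the current factor $X_a$ splits into two halves. We then apply \Cref{cl:density-restoring} to the residual set $(X_a)_{N_a\setminus I_a}$ of the half that was reached. This yields cells in which a further set $I'$ is fixed and the remainder is again $\gamma$-dense. Party $a$ announces the cell index together with $I'$ and its fixed value. The other factors are untouched, so they stay dense. Hence every node is subcube-like and $\Pi'$ has the same output as $\Pi$.

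To bound $\E[\codim(R)]$, we track a potential. For each party, the density deficiency is $\operatorname{def}(X_a):=|N_a\setminus I_a|-\log|X_a|$, measured on the residual coordinates. The potential is the total deficiency $D=\sum_a\operatorname{def}(X_a)$. We have $D\ge0$ always and $D=0$ at the root.

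A single bit sent by $\Pi$ increases the sender's deficiency by $\log(|X_a|/|X_a^b|)$, where $X_a^b$ is the half reached. In expectation over the uniform input this increase is the entropy of the bit, at most $1$. The subsequent partition step changes the deficiency by at most $-(1-\gamma)\E|I'|+\log\mathrm e$, by \eqref{cl:restoration-step}.

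Summing over all $|\Pi|$ rounds and taking expectations over the random path gives
\[
0\le\E[D_{\mathrm{leaf}}]\le|\Pi|(1+\log\mathrm e)-(1-\gamma)\E[\codim(R)].
\]
Rearranging, $\E[\codim(R)]\le O(|\Pi|)/(1-\gamma)=O(h^2|\Pi|)$ because $\gamma=1-h^{-2}$. Only the speaker's factor changes at each step and the deficiencies simply add across parties, so the constant does not depend on $k$.

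The main obstacle is justifying the expectation bookkeeping. The partition probabilities in \Cref{cl:density-restoring} are taken uniformly within the current factor, so we need the random path of $\Pi'$ under uniform inputs to induce exactly these weights at each step. This holds because the inputs are uniform and independent across parties, so conditioned on reaching a rectangle, the input is uniform on it. It then remains to sum the increments via the martingale (chain-rule) structure.

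A secondary subtlety is that bits sent by $\Pi$ may depend on coordinates already in $I_a$. Such a message may be constant on $X_a$, in which case it costs nothing. Otherwise it only depends on the free part, because $(X_a)_{I_a}$ is fixed.
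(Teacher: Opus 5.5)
Your proposal is correct and follows essentially the same route as the paper. The paper also puts $\Pi$ in binary normal form, applies the density-restoring partition to the sender's residual factor after each original bit, and tracks the potential $\sum_a\bigl(D_a+(1-\gamma)|I_a|\bigr)$; this potential grows in expectation by at most the bit's entropy ($\le 1$) plus $\log\mathrm e$ per round, which yields $\E[\codim(R)]\le(1+\log\mathrm e)|\Pi|/(1-\gamma)=O(h^2|\Pi|)$. (Announcing $I'$ and its value is harmless but unnecessary, since the cell index already determines them.)
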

\begin{proof}
Put $\Pi$ in binary normal form, so that each original
communication round sends one bit. The refinement $\Pi'$ is defined by
\Cref{algo:preprocess}. It maintains the invariant that the
current rectangle is $\gamma$-subcube-like with respect to the
coordinates fixed so far.

\begin{algorithm2e}[ht]
  \caption{Protocol preprocessing}
  \label{algo:preprocess}
  \DontPrintSemicolon
  \KwIn{A deterministic $k$-party protocol $\Pi$ in binary normal form
    and an input $(x_1,\ldots,x_k)$}
  \KwOut{The output of $\Pi$ on $(x_1,\ldots,x_k)$}
  Initialize $v\leftarrow$ the root of $\Pi$ and
    $I_a\leftarrow\emptyset$ for every party $a$.\;
  \While{$v$ is not a leaf}
  {
    Let party $a$ send the next bit at $v$, and write
      $X_a=X_{a,0}\sqcup X_{a,1}$ for the partition of its current
      factor according to this bit.\;
    Let $b$ be the bit determined by $x_a$, update
      $X_a\leftarrow X_{a,b}$, and move to the child $v_b$.\;
    Set $U_a=N_a\setminus I_a$ and
      $Y_a=\{x_{U_a}:x\in X_a\}$.\;
    Apply the density-restoring partition
      $Y_a=\bigsqcup_j Y_a^j$, with fixed sets $J_{a,j}\subseteq U_a$.\;
    Lift each cell to
      $X_a^j=\{x\in X_a:x_{U_a}\in Y_a^j\}$.\;
    Let $C(\cdot)$ be a prefix-free encoding of the cell index $j$.\;
    Let $j$ be such that $x_a\in X_a^j$, update
      $X_a\leftarrow X_a^j$ and $I_a\leftarrow I_a\cup J_{a,j}$,
      and send $(b,C(j))$.\;
  }
  Output the label of $v$.\;
\end{algorithm2e}

The cell index can, for example, be encoded by a Huffman code for the cell probabilities. This makes the refinement an ordinary binary protocol.
For a nonempty set $X\subseteq\{0,1\}^{U}$, let
$\operatorname{def}(X):=|U|-\log |X|$ denote its min-entropy deficiency. If $I\subseteq U$ and
$X'\subseteq\{0,1\}^{U\setminus I}$, the same notation is used with
$U\setminus I$ as the ambient coordinate set.
For the current factor of party $a$, define its residual deficiency by
\[
 D_a:=\operatorname{def}\bigl((X_a)_{N_a\setminus I_a}\bigr)
 =|N_a|-|I_a|-\log|X_a|.
\]
The equality holds because the coordinates in $I_a$ are fixed, so
projection onto the remaining coordinates is injective on $X_a$.

We now verify the codimension bound. Conditional on the current transcript, the input is uniform on a product rectangle.
Restricting the sender's factor according to one original message bit
increases its expected $D_a$ by the conditional Shannon entropy of that
bit, which is at most $1$. Applying \Cref{cl:density-restoring} to $Y_a$
afterwards fixes exactly $|J_{a,j}|$ new coordinates, since
$J_{a,j}\cap I_a=\varnothing$, and restores $\gamma$-density.
By \eqref{cl:restoration-step}, this restoration increases the expected
value of $D_a+(1-\gamma)|I_a|$ by at most $\log\mathrm e$.
The potential $\sum_a\bigl(D_a+(1-\gamma)|I_a|\bigr)$ is initially zero.
Summing over all original message bits gives
\[
 \E\left[\sum_a D_a\right]
 +h^{-2}\E[\codim(R)]
 \le(1+\log \mathrm e)|\Pi|.
\]
Here $R$ is the random leaf of the refined protocol and we used
$1-\gamma=h^{-2}$. Since all deficiencies are nonnegative,
\[
 \E[\codim(R)]
 \le \frac{1+\log \mathrm e}{1-\gamma}|\Pi|
 =O(h^2|\Pi|),
\]
which proves the lemma. Notice that the partition is applied only to
the sender's factor at each step, so the implicit constant is
independent of the number of parties.
\end{proof}

\begin{proof}[Proof of \Cref{cl:communication}]
The original protocol has a fully
verified deterministic augmentation of cost
$|\Pi|+O(\ell n\log q)$ and success probability at least $1/2$.
Apply \Cref{cl:preprocessing} to this augmentation.
Its refinement is still fully verified, so
\eqref{cl:success-exposure} gives
\[
 \frac12
 \le O\left(\frac{h^2(|\Pi|+\ell n\log q)}{T}\right)
   +\exp\left(A(\ln2)n/h^3-30n/h^2\right),
\]
where we used $T-n\ge T/2$. The last term is at most $1/4$
for sufficiently large $q$ depending only on $\alpha,A$.
Rearranging proves the claimed bound.
\end{proof}

Now, we are ready to prove \Cref{thm:classical-lower-summary}.
\begin{proof}[Proof of \Cref{thm:classical-lower-summary}]
Write $\kappa=\kappa(\alpha)>0$. By \Cref{hendpoint:lr}, choose $B=B_\alpha\ge40$ such that $T=q^{1+\kappa\ell}/(\ell\log q)^{B\ell}$ and $\log L=O(n/h^3)$
for $q\ge q_0(\alpha)$. For sufficiently large $C_\alpha$, the desired bound follows from
$s\ge1$ unless $q\ge q_0(\alpha)$ and $(\ell\log q)^B\le q^{\kappa/2}$. Assume these conditions. Then $T\ge q^{1+\kappa\ell/2}$ and $\ell\le q^{\kappa/(2B)}$, so $T=\omega(\ell n h^3)$ uniformly in $\ell$. Increasing $q_0(\alpha)$ absorbs the verification overhead
in \Cref{cl:communication}, giving $|\Pi|=\Omega(T/h^2)$.

A $\npass$-pass algorithm using $s$ bits yields a protocol of cost at most $(k\npass-1)s$: each party processes its block and passes the memory to the next party, including between passes. The final party outputs the polynomial with the same success probability of at least $0.99$. Since $k=\lceil h\rceil$, we obtain
\[
 s\ge\Omega\!\left(\frac{T}{\npass h^3}\right)
 \ge\frac{q^{1+\kappa\ell}}{\npass \cdot (\ell\log q)^{C_\alpha\ell}},
\]
after increasing $C_\alpha$ if necessary.
\end{proof}

\bibliographystyle{alphaurl}
\bibliography{references}

\appendix
\section{Calculations for the headline theorems}\label{sec:headline_calc}

We specialize
\Cref{thm:quantum-poly}, \Cref{thm:quantum-exp-summary}, and \Cref{thm:classical-lower-summary} from \Cref{sec:intro-results} to prove \Cref{thm:headline-opi} and \Cref{thm:headline-hopi}. In fact, they are immediate corollaries of the more general \Cref{thm:headline-opi-formal} and \Cref{thm:headline-hopi-formal} below. Recall $\alpha_\star=(2+\sqrt3)/4>0.93301$.

\begin{theorem}\label{thm:headline-opi-formal}
For every $\epsilon\in(0,1/4)$ and every prime $q\ge\Omega_\epsilon(1)$, the following
holds for streaming OPI with $\ell=1$ and $N=q(q-1)$.
\begin{enumerate}
 \item There is a one-pass quantum algorithm achieving $(1-\epsilon)$-approximation that has
 $O(\sqrt N\log N)$ space complexity, $\poly(N)$ preprocessing-update time, and 
 $\exp\bigl(O(\sqrt N\log N)\bigr)$ postprocessing time.
 \item There is a one-pass quantum algorithm achieving $(\alpha_\star-\epsilon)$-approximation that has
 $O(\sqrt N\log N)$ space complexity and $\poly(N)$ preprocessing-update-postprocessing time.
 \item For every integer $\npass\ge1$, every randomized $\npass$-pass
 classical $(3/4+\epsilon)$-approximation has space complexity at least
 \[
  \frac{N^{\frac12+\frac{2\epsilon}{1+4\epsilon}}}{\npass\cdot(\log N)^{O_\epsilon(1)}}.
 \]
\end{enumerate}
\end{theorem}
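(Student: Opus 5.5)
This theorem is a direct specialization of \Cref{thm:quantum-exp-summary}, \Cref{thm:quantum-poly}, and \Cref{thm:classical-lower-summary} to $\ell=1$, followed by a translation from $q$ to $N=q(q-1)$. Since $q=\Theta(\sqrt N)$ and $\log q=\Theta(\log N)$, any bound of the form $O(q\ell\log q)$ becomes $O(\sqrt N\log N)$, any bound of the form $\poly(q,\ell)$ becomes $\poly(N)$, and any bound of the form $q^{O(q\ell)}$ becomes $\exp(O(q\log q))=\exp(O(\sqrt N\log N))$.

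For item 1, I would apply \Cref{thm:quantum-exp-summary} with $\alpha=1-\epsilon$ and $\ell=1$. Its field-size requirement $q\ge\Omega_\alpha(\ell\log(\ell+1))$ then reads $q\ge\Omega_\epsilon(1)$. For item 2, I would apply \Cref{thm:quantum-poly} with $\alpha=\alpha_\star-\epsilon$. This value lies in $(0,\alpha_\star)$ because $\epsilon<1/4<\alpha_\star$, so the hypothesis $q\ge\Omega_\alpha(1)=\Omega_\epsilon(1)$ holds. In both cases the space and time bounds translate as described above.

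For item 3, I would apply \Cref{thm:classical-lower-summary} with $\alpha=3/4+\epsilon$, which lies in $(3/4,1]$ since $\epsilon<1/4$. The main computation is the exponent:
\[
\kappa(3/4+\epsilon)=\frac{4\epsilon}{1+4\epsilon},
\]
so the bound is $q^{1+\kappa}/\bigl(\npass(\log q)^{C_\alpha}\bigr)$.

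It remains to rewrite $q^{1+\kappa}$ in terms of $N$. Since $N\le q^2$, we have $q\ge N^{1/2}$, and therefore
\[
q^{1+\kappa}\ge N^{(1+\kappa)/2}=N^{\frac12+\frac{2\epsilon}{1+4\epsilon}}.
\]
This matches the stated exponent exactly. Finally, $(\log q)^{C_\alpha}\le(\log N)^{O_\epsilon(1)}$, since $C_\alpha$ depends only on $\epsilon$.

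I do not expect a substantial obstacle, since the argument is pure bookkeeping. The points that need care are:
\begin{itemize}
\item checking that the parameter ranges match, namely $\epsilon<1/4$ so that $3/4+\epsilon\le 1$;
\item absorbing the ``$q\ge\Omega_\alpha(1)$'' thresholds into the single condition $q\ge\Omega_\epsilon(1)$;
\item verifying the direction of the inequality $q^{1+\kappa}\ge N^{(1+\kappa)/2}$, which holds because $q^2\ge N$.
\end{itemize}
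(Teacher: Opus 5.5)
Your proposal is correct and matches the paper's proof: specialize \Cref{thm:quantum-exp-summary}, \Cref{thm:quantum-poly}, and \Cref{thm:classical-lower-summary} to $\ell=1$ with $\alpha=1-\epsilon$, $\alpha_\star-\epsilon$, and $3/4+\epsilon$, compute $\kappa(3/4+\epsilon)=4\epsilon/(1+4\epsilon)$, and convert from $q$ to $N=q(q-1)$. Your use of $q\ge\sqrt N$ and $\log q\le\log N$ gives the stated exponent exactly, which justifies this step slightly more cleanly than the paper's appeal to $N=\Theta(q^2)$.
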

\begin{proof}
Set $\ell=1$, so $N=q(q-1)=\Theta(q^2)$.
For the first item, apply \Cref{thm:quantum-exp-summary} with $\alpha=1-\epsilon$.
Its space bound is $O(q\log q)=O(\sqrt N\log N)$, its
preprocessing-update time is $\poly(q)=\poly(N)$, and its
postprocessing time is $q^{O(q)}=\exp\bigl(O(\sqrt N\log N)\bigr)$.
For the second item,
\Cref{thm:quantum-poly} applies with $\alpha=\alpha_\star-\epsilon$; and the desired space and time bounds follow.

For the third item, apply \Cref{thm:classical-lower-summary} with $\alpha=3/4+\epsilon\in(3/4,1)$. Since
\[
 \kappa(3/4+\epsilon)=\frac{4\epsilon}{1+4\epsilon},
\]
every such $\npass$-pass algorithm using $s$ bits satisfies, for some
constant $B_\epsilon>0$,
\[
 s\ge\frac{q^{1+\frac{4\epsilon}{1+4\epsilon}}}{\npass\cdot(\log q)^{B_\epsilon}}
 \ge\frac{N^{\frac12+\frac{2\epsilon}{1+4\epsilon}}}{\npass\cdot(\log N)^{B_\epsilon}},
\]
where the second inequality uses $N=\Theta(q^2)$.
\end{proof}

\begin{theorem}\label{thm:headline-hopi-formal}
For every $\epsilon\in(0,1/4)$, there is an integer $C=\Theta_\epsilon(1)$
such that, for every prime $q\ge\Omega_\epsilon(1)$, the following
holds for streaming Hermite OPI with
$\ell=\lfloor q^{1/C}\rfloor$ and $N=(q-1)q^\ell$.
\begin{enumerate}
 \item There is a one-pass quantum algorithm achieving $(1-\epsilon)$-approximation that has
 $O((\log N)^{C+1})$ space complexity, $(\log N)^{O(C)}$ preprocessing-update time, and
 $\exp\bigl(O((\log N)^{C+1})\bigr)$ postprocessing time.
 \item There is a one-pass quantum algorithm achieving $(\alpha_\star-\epsilon)$-approximation that has
 $O((\log N)^{C+1})$ space complexity and $(\log N)^{O(C)}$ preprocessing-update-postprocessing time.
 \item For every integer $\npass\ge1$, every randomized $\npass$-pass
 classical $(3/4+\epsilon)$-approximation has space complexity at least $N^{\frac{2\epsilon}{1+4\epsilon}}/\npass$.
\end{enumerate}
\end{theorem}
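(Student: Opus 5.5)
The plan is to specialize \Cref{thm:quantum-exp-summary}, \Cref{thm:quantum-poly}, and \Cref{thm:classical-lower-summary} with $\ell=\lfloor q^{1/C}\rfloor$, then convert every bound in $q,\ell$ into one in $N=(q-1)q^\ell$. The basic conversion is $\log N=\Theta(\ell\log q)$. Since $\ell=\Theta(q^{1/C})$, we get $\log N=\Theta(q^{1/C}\log q)$. In particular $q\le(\log N)^{C}$ and $\ell\le\log N$. Hence $q\ell\log q=O(q\log N)=O((\log N)^{C+1})$, and $\poly(q,\ell)=(\log N)^{O(C)}$.

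\textbf{Quantum items.} For item 1, apply \Cref{thm:quantum-exp-summary} with $\alpha=1-\epsilon$. Its requirement is $q\ge\Omega_\epsilon(\ell\log(\ell+1))$. This holds for $q\ge\Omega_\epsilon(1)$ whenever $C\ge2$, because $\ell\log(\ell+1)=O(q^{1/C}\log q)=o(q)$. Its postprocessing time $q^{O(q\ell)}=\exp(O(q\ell\log q))$ then becomes $\exp(O((\log N)^{C+1}))$ by the space conversion above. For item 2, apply \Cref{thm:quantum-poly} with $\alpha=\alpha_\star-\epsilon$. It only needs $q\ge\Omega_\epsilon(1)$, and its space and time bounds translate in the same way.

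\textbf{Classical item.} Apply \Cref{thm:classical-lower-summary} with $\alpha=3/4+\epsilon$, so that $\kappa=\kappa(\alpha)=4\epsilon/(1+4\epsilon)$. This gives
\[
s\ge\frac{q^{1+\ell\kappa}}{\npass\,(\ell\log q)^{C_\alpha\ell}}.
\]
The target is $N^{\kappa/2}/\npass$. Since $N\le q^{\ell+1}$, we have $N^{\kappa/2}\le q^{(\ell+1)\kappa/2}$. It therefore suffices to show
\[
q^{1+\ell\kappa/2-\kappa/2}\ge(\ell\log q)^{C_\alpha\ell}.
\]
Because $\kappa<1$, the left side is at least $q^{\ell\kappa/2}$. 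Taking $\ell$-th roots, the condition reduces to
\[
q^{\kappa/2}\ge(\ell\log q)^{C_\alpha}.
\]
With $\ell\le q^{1/C}$, the right side is at most $q^{C_\alpha/C}(\log q)^{C_\alpha}$. So I choose the integer $C>4C_\alpha/\kappa$, which depends only on $\epsilon$. This makes $q^{\kappa/4}\ge(\log q)^{C_\alpha}$ suffice, and that holds for $q\ge\Omega_\epsilon(1)$.

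\textbf{Main obstacle.} The delicate step is this last exponent balancing. One must make sure the factor $(\ell\log q)^{C_\alpha\ell}$ consumes only half of the $\ell\kappa$ exponent, and that the constant $C$ is chosen after $C_\alpha$ so it is genuinely $\Theta_\epsilon(1)$. A secondary check is that $\ell\ge1$ for large $q$, and that all the thresholds on $q$ from the three theorems depend only on $\epsilon$ once $C$ is fixed.
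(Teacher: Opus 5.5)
Your proposal is correct and follows essentially the same route as the paper. Both specialize the three general theorems, convert resources via $q\le(\log N)^C$ and $\ell\log q\le\log N$, and choose $C$ (the paper takes $C\ge\max\{2,4B_\epsilon/\kappa\}$) so that $(\ell\log q)^{C_\alpha\ell}$ consumes at most half of the $\ell\kappa$ exponent. The only bookkeeping to tighten is to include $C\ge2$ explicitly in your choice of $C$, since item 1 relies on it.
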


\begin{proof}
Write $\kappa=\kappa(3/4+\epsilon)=4\epsilon/(1+4\epsilon)$.
By \Cref{thm:classical-lower-summary}, there is a constant
$B_\epsilon\ge1$, such that every
$\npass$-pass classical $(3/4+\epsilon)$-approximation using $s$ bits satisfies
\begin{equation}
 \npass\cdot s\ge\frac{q^{1+\ell\kappa}}{(\ell\log q)^{B_\epsilon\ell}}.
 \label{eq:headline-classical-tradeoff}
\end{equation}
Choose an integer $C\ge\max\{2,4B_\epsilon/\kappa\}$. With $\ell=\lfloor q^{1/C}\rfloor$, for sufficiently
large $q$ we have
\[
 q\le(2\ell)^C\le(\ell\log q)^C\le(\log N)^C
 \quad\text{and}\quad \ell\log q\le\log N.
\]
Consequently,
\begin{equation}
 q\ell\log q\le(\log N)^{C+1},
 \qquad \poly(q,\ell)=(\log N)^{O(C)}.
 \label{eq:headline-hermite-resources}
\end{equation}

Since $C>1$, we have $\ell\log(\ell+1)=o(q)$.
Thus $q\ge\Omega_{1-\epsilon}(\ell\log(\ell+1))$ for all $q\ge\Omega_\epsilon(1)$.
Apply \Cref{thm:quantum-exp-summary} with $\alpha=1-\epsilon$.
\eqref{eq:headline-hermite-resources} gives the claimed space, and preprocessing-update time, while its postprocessing time is
\[
 q^{O(q\ell)}=\exp\bigl(O(q\ell\log q)\bigr)
 =\exp\bigl(O((\log N)^{C+1})\bigr).
\]
This proves the first item.

For the second item, we simply apply \Cref{thm:quantum-poly} with
$\alpha=\alpha_\star-\epsilon$.
For the third item, for sufficiently large $q$, both $\ell\le q^{1/C}$ and
$\log q\le q^{1/C}$. 
Hence \eqref{eq:headline-classical-tradeoff} implies
\[
 \npass\cdot s\ge\frac{q^{1+\ell\kappa}}{(\ell\log q)^{B_\epsilon\ell}}
 \ge q^{1+\ell\kappa-2B_\epsilon\ell/C}
 \ge q^{1+\ell\kappa/2}
 \ge N^{\kappa/2}.
\]
The third inequality uses $C\ge4B_\epsilon/\kappa$, and the last uses $N\le q^{\ell+1}$ and $\kappa/2<1$.
\end{proof}

\section{Optimum of random Hermite OPI instances}
\label{app:classical-random-sat}

Recall that $\mu$ samples the lists independently and uniformly at their prescribed size, and that $\rho=(1-q^{-\ell})/2$. Set $n=q-1$.
\begin{lemma}\label{hb:random-sat}
Under the hard input distribution $\mu$,
\begin{equation}
 \Pr[\OPT<n]
 \le (q^K\rho^n)^{-1}
 =e^{-\Omega(n\ell\log q)}.
 \label{hb:random-sat-bound}
\end{equation}
\end{lemma}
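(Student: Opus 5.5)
The plan is a second-moment argument on the number of perfectly satisfying polynomials. Let $Z=\sum_{f\in\Pcal_K}\1[J_\ell f(i)\in S_i\ \forall i\in[n]]$, so that $\OPT<n$ iff $Z=0$. By Chebyshev, $\Pr[Z=0]\le\Var(Z)/\E[Z]^2$. I aim to show
\[
\E[Z^2]\le\E[Z]+\E[Z]^2 .
\]
This gives $\Pr[Z=0]\le 1/\E[Z]$, which is the claimed bound once $\E[Z]$ is computed. The first moment is immediate: the lists are independent and uniform of size $\rho|\Sigma|$, so each fixed $f$ is fully satisfied with probability exactly $\rho^n$, and $\E[Z]=q^K\rho^n$.

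For the second moment, split $\E[Z^2]$ into diagonal and off-diagonal parts. The diagonal pairs $f=g$ contribute $\E[Z]$. For $f\ne g$, let $A(f,g)$ be the set of points where the two jets agree, and write $a=|A(f,g)|$. Rows are independent, so the joint probability factorizes. At an agreeing row the factor is $\rho$. At a disagreeing row the two distinct jets both lie in a uniform list, with probability
\[
\frac{|S_i|(|S_i|-1)}{|\Sigma|(|\Sigma|-1)}=\rho^2 r,\qquad r:=\frac{|S_i|-1}{|S_i|}\cdot\frac{|\Sigma|}{|\Sigma|-1}<1 .
\]
Hence $\Pr[f,g\text{ both satisfy}]=\rho^{2n}r^{n}\,x^{a}$ with $x:=1/(\rho r)>1$. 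Expand $x^a=\sum_{B\subseteq A(f,g)}(x-1)^{|B|}$. Since $x-1>0$, all coefficients are positive, so
\[
\sum_{f\ne g}x^{|A(f,g)|}=\sum_{B\subseteq[n]}(x-1)^{|B|}\cdot\#\{f\ne g: A(f,g)\supseteq B\}.
\]
Next compare with the model in which the two jet vectors are independent and uniform in $\Sigma^n$. For $|B|\le n/2=K/\ell$, \Cref{fact:hermite-interpolation} shows that the number of pairs agreeing on $B$ is exactly $q^{2K}q^{-\ell|B|}$. Excluding $f=g$ only decreases this count. For $|B|>n/2$, no pair $f\ne g$ agrees on $B$, because $f-g$ would be a nonzero polynomial of degree $<K$ with $\ell|B|>K$ roots counted with multiplicity. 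So every term is at most its independent-model counterpart $q^{2K}q^{-\ell|B|}$.

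Summing back, the off-diagonal part is at most $q^{2K}\rho^{2n}r^n\,\E_{\mathrm{ind}}[x^{a}]$. In the independent model each row is an independent pair of uniform jets, and the per-row probability that both lie in $S_i$ is exactly $\rho^2$. Therefore $\rho^{2n}r^n\E_{\mathrm{ind}}[x^a]=\rho^{2n}$, and the off-diagonal part is at most $q^{2K}\rho^{2n}=\E[Z]^2$. This proves the displayed second-moment inequality.

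Finally, $q^K\rho^n=\bigl(q^{\ell/2}(1-q^{-\ell})/2\bigr)^n=e^{\Omega(n\ell\log q)}$ for $q$ larger than an absolute constant. The main obstacle is the off-diagonal comparison. The agreement distribution of two distinct codewords is not binomial, so a direct count is messy. The positive-coefficient subset expansion sidesteps this, since it only needs the upper bounds on agreement counts over all sets $B$, which \Cref{fact:hermite-interpolation} and the root bound supply.
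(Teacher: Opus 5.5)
Your proposal is correct and follows essentially the same route as the paper: a second-moment bound $\Var(Z)\le\E[Z]$ on the number of fully satisfying codewords. Both arguments expand the pairwise acceptance probability as a nonnegative combination of agreement indicators $\1[c_B=c'_B]$ and bound each agreement probability by $|\Sigma|^{-|B|}$ via Hermite interpolation. Your per-row factor $\rho^2 r$ is exactly the paper's $\theta=\rho(|\Sigma|\rho-1)/(|\Sigma|-1)$, and your $(x-1)$-expansion is the paper's expansion of $\prod_i(\theta+(\rho-\theta)\1[c_i=c'_i])$. Writing it in the paper's form also avoids dividing by $r$, which vanishes in the degenerate case $|S_i|=1$.
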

\begin{proof}
Let $Z$ count the codewords satisfying all rows. Each codeword succeeds with probability $\rho^n$, so $\E[Z]=q^K\rho^n$. We show that $\Var(Z)\le\E[Z]$. Choose a uniform ordered pair of distinct codewords $c,c'$. Their difference corresponds to a uniform nonzero polynomial of degree less than $K$. For every $I\subseteq[n]$,
\[
 \Pr[c_I=c'_I]=
 \begin{cases}
  \displaystyle\frac{q^{K-\ell|I|}-1}{q^K-1},&\ell|I|\le K,\\[5pt]
  0,&\ell|I|>K,
 \end{cases}
\]
Both cases follow from the extension count and uniqueness in \Cref{fact:hermite-interpolation}, excluding the zero polynomial. In the first case, the displayed ratio is at most $q^{-\ell|I|}$; in the second case, the probability is zero. Thus, in both cases, $\Pr[c_I=c'_I]\le |\Sigma|^{-|I|}$.

Put $\theta=\rho(|\Sigma|\rho-1)/(|\Sigma|-1)$. For fixed $c,c'$, the probability
that row $i$ accepts both symbols is
$\theta+(\rho-\theta)\1[c_i=c'_i]$. Both coefficients are
nonnegative. Expanding the product over the independent rows and then
averaging over distinct codeword pairs gives
\begin{align*}
 \E_{c\ne c'}\left[\Pr_F[c,c'\text{ satisfy all rows}]\right]
 &\le\sum_{I\subseteq[n]}
       \theta^{n-|I|}(\rho-\theta)^{|I|}|\Sigma|^{-|I|}
 =\left(\theta+\frac{\rho-\theta}{|\Sigma|}\right)^n
 =\rho^{2n}.
\end{align*}
Thus,
\[
 \E[Z(Z-1)] = \sum_{c\ne c'}\Pr_F[c,c'\text{ satisfy all rows}] \le q^K(q^K-1)\rho^{2n}\le(\E [Z])^2,
\]
which proves $\Var(Z) = \E[Z^2] - (\E[Z])^2 =  \E[Z(Z-1)] + \E[Z] - (\E[Z])^2 \le\E[Z]$. Finally, $\OPT<n$ exactly when $Z=0$. Chebyshev's
inequality gives
\[
 \Pr[Z=0]
 \le\frac{\Var(Z)}{(\E[Z])^2}
 \le\frac1{\E[Z]}
 =(q^K\rho^n)^{-1} = e^{-\Omega(n\ell\log q)}.
\]
The last line follows from $K=\ell n/2$ and
$\rho=(1-q^{-\ell})/2$.
\end{proof}

\section{List-recovery bound of Hermite codes}
\label{app:classical-list-recovery}

Fix $3/4<\alpha\le1$ and an integer $\ell\ge1$.
Throughout this proof, $q=n+1$ is prime.
All derivatives are Hasse derivatives; $P^{(j)}$ is the notation
used here for the derivative denoted by $P^{[j]}$ in
\cite{bcpzz2026,jeronimo2026}.
We extend the hidden-derivative construction of~\cite[Section~3]{bcpzz2026}
and the total-incidence interpolation framework
of~\cite[Section~3.6]{jeronimo2026} to length-$\ell$ jets.
The auxiliary results are stated at the steps where they are used;
the proofs of the local jet extension and the partition estimates
follow the main argument.

\hendpointlr*

\begin{proof}
All field-size thresholds below depend only on $\alpha$.
We divide the proof into four steps. First, we reduce list recovery
to counting candidate polynomials. Second, we establish the local
multiplicity conditions and an interpolation criterion for a common
differential equation. Third, we construct a space satisfying this
criterion for the incidence budget $T$, obtaining one equation for
all candidates. Fourth, we count its solutions to obtain $L$.

\paragraph{Step 1: Reduce to counting candidate polynomials.}
Put $\kappa(\alpha)=(4\alpha-3)/(4\alpha-2)$.
Fix a sufficiently large constant $B_\alpha\ge40$ and set $T=\frac{q^{1+\ell\kappa(\alpha)}}{(\ell\ln q)^{B_\alpha\ell}}$. Let $K=\ell n/2$. Put $h=\ln n$, $\zeta=2\alpha-1-10/{\ln q}$,
and $t=\lceil\zeta n\rceil$.
Let $F_i\subseteq\Sigma$ be arbitrary lists with $N_{\inc}:=\sum_i|F_i|\le T$
and, writing $S=(F_1,\ldots,F_n)$, define
\[
 \Pcal(F)=\left\{P\in\Pcal_K:\quad
 \bigl|\{i:J_\ell P(a_i)\in F_i\}\bigr|\ge t\right\}.
\]
Every codeword counted by list recovery is the encoding of a polynomial
in $\Pcal(F)$, so it suffices to prove $|\Pcal(F)|\le L$.
An incidence is a pair $(a_i,y)$ with $y\in S_i$.
If $\ell>n^{1/10}$, then
$T\le n^{2+\ell-B_\alpha\ell/10}<1$ for large $n$.
Every list is then empty, and $\Pcal(S)=\varnothing$ since $t\ge1$,
so the conclusion holds for every $L\ge1$.
We henceforth assume $1\le\ell\le n^{1/10}$.

\paragraph{Step 2: Establish the conditions for a common equation.}
Let $r\ge0$ be an integer to be chosen in Step~3.
To bound $\Pcal(F)$, we seek a nonzero auxiliary polynomial
$Q(X,Y_0,\ldots,Y_r)$ such that
\[
 F_{Q,P}(X):=Q(X,P,P^{(1)},\ldots,P^{(r)})\equiv0
 \qquad\text{for every }P\in\Pcal(F).
\]
This step establishes the required local conditions; Step~3 will choose the space and parameters and obtain $Q$. Let $D\ge1$ be an integer target multiplicity, also to be chosen in Step~3. For a finite-dimensional space $\Qcal$, we impose at most $M_{\mathrm{loc}}$ homogeneous linear conditions per incidence, ensuring
\[
 J_\ell P(a_i)=y\quad\Longrightarrow\quad
 (X-a_i)^D\mid F_{Q,P}(X).
\]
The sufficient dimension and degree bounds are
\begin{equation}
 \dim\Qcal>N_{\inc}M_{\mathrm{loc}},
 \qquad
 \deg F_{Q,P}<Dt\quad(Q\in\Qcal,\ \deg P<K).
 \label{hfixed:proof-targets}
\end{equation}
The following criterion is immediate from rank--nullity and the fact
that a nonzero polynomial of degree less than $Dt$ cannot have $t$
distinct roots of multiplicity $D$.

\begin{proposition}[{Adapted from \cite[Proposition~3.7]{jeronimo2026}}]
\label{hfixed:interpolation-criterion}
Let $a_1,\ldots,a_n\in\Fbb_q$ be distinct, let $F_i\subseteq\Sigma$ be finite, and put $N=\sum_i|F_i|$. Let $D,t,K\ge1$ and $r\ge0$ be integers, let $M\ge0$, and let $\Qcal\subseteq\Fbb_q[X,Y_0,\ldots,Y_r]$ be a finite-dimensional linear subspace.
Suppose $\deg F_{Q,P}<Dt$ for every $Q\in\Qcal$ and every polynomial $P$ with $\deg P<K$.
Suppose also that, for each incidence $(a_i,y)$, there is a system of at most $M$
homogeneous linear conditions on $Q$, independent of $P$, whose satisfaction guarantees
\[
 J_\ell P(a_i)=y\quad\Longrightarrow\quad
 (X-a_i)^D\mid F_{Q,P}(X)
 \qquad(\deg P<K).
\]
If $\dim\Qcal>NM$, a nonzero $Q\in\Qcal$ satisfies the local conditions at every incidence $(a_i,y)$, $y\in F_i$. For this $Q$, every polynomial $P$ with $\deg P<K$ whose jet hits at least $t$ lists satisfies
\[
 Q\bigl(X,P(X),P^{(1)}(X),\ldots,P^{(r)}(X)\bigr)\equiv0.
\]
\end{proposition}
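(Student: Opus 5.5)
The plan is to prove the criterion in two parts: first a linear-algebra existence step, then a root-counting step.

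\emph{Existence of $Q$.} For each incidence $(a_i,y)$ with $y\in F_i$, the hypothesis supplies at most $M$ homogeneous linear conditions on $Q\in\Qcal$. These conditions depend only on the incidence, not on $P$. Collecting them over all incidences gives a homogeneous linear system on the finite-dimensional space $\Qcal$ with at most $NM$ equations. Since $\dim\Qcal>NM$, rank--nullity shows that the solution space has positive dimension. We therefore fix a nonzero $Q\in\Qcal$ satisfying every local condition simultaneously. This choice of $Q$ is made once and for all, independently of any $P$.

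\emph{Vanishing of $F_{Q,P}$.} Fix any $P$ with $\deg P<K$ whose jets hit at least $t$ lists. Let $I$ be the set of indices $i$ with $J_\ell P(a_i)\in F_i$, so $|I|\ge t$. For each $i\in I$, the pair $(a_i,J_\ell P(a_i))$ is an incidence. Its local conditions hold for $Q$, so the guaranteed implication gives $(X-a_i)^D\mid F_{Q,P}(X)$. The $a_i$ are distinct, so the factors $(X-a_i)^D$ are pairwise coprime. Hence their product, of degree $D|I|\ge Dt$, divides $F_{Q,P}$. But by hypothesis $\deg F_{Q,P}<Dt$. The only polynomial of degree less than $Dt$ divisible by a polynomial of degree at least $Dt$ is the zero polynomial, so $F_{Q,P}\equiv0$.

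\emph{Where care is needed.} The only subtle point is that the local conditions must be linear in $Q$ and independent of $P$; this is exactly what the statement assumes. A second point is that nonzeroness of $Q$ is a statement about the abstract polynomial $Q$, not about $F_{Q,P}$. Nothing beyond the hypotheses is used, so I expect no genuine obstacle here. The real work lies in verifying the hypotheses, namely the jet-local conditions and the degree bound, in Step~3 of the surrounding proof.
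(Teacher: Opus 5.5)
Your proposal is correct and follows the paper's own argument: the paper calls the criterion immediate from rank--nullity (a nonzero $Q$ satisfying all at most $NM<\dim\Qcal$ homogeneous local conditions) together with the fact that a nonzero polynomial of degree less than $Dt$ cannot have $t$ distinct roots of multiplicity $D$. Your write-up simply spells out these two steps, and the coprimality of the factors $(X-a_i)^D$ is the right way to make the second one precise.
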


To supply the local conditions in this criterion, we use the backward
Taylor identity~\cite[Eq.~(8)]{bcpzz2026} to relate derivatives at $u$
and $u+Z$. For any field $\Fbb$, integer $d\ge0$, and polynomial
$P\in\Fbb[X]$ of degree at most $d$, the following identity holds
in $\Fbb[X,Z]$:
\begin{equation}
 P(X)=\sum_{j=0}^{d}P^{(j)}(X+Z)(-Z)^j.
 \label{hfixed:backward-taylor}
\end{equation}

Applied to the $\ell$ prescribed derivatives, this identity leads to
$\ell$ remainder variables. To count the resulting coefficient
conditions, we also track the exponents of $Y_{\ell+1},\ldots,Y_r$.
For $r\ge\ell$, put $R=r-\ell+1$ and define the weight of a tail
$v=(v_{\ell+1},\ldots,v_r)$ by
$\omega(v)=\sum_{k=\ell+1}^r(k-\ell)v_k$.
Under the reindexing $v_{\ell+j}\mapsto v_j$, the same weight is
$\omega(v)=\sum_{j=1}^{R-1}jv_j$. Let
\[
 \Lambda_R(w)=
 \#\left\{v\in\Zbb_{\ge0}^{R-1}:
             \sum_{j=1}^{R-1}jv_j\le w\right\},
 \qquad \Lambda_R(w)=0\quad(w<0).
\]
The variable $Y_\ell$ has weight zero and is counted separately.
For any nonnegative integer multi-index $v$, write $|v|=\sum_jv_j$;
for a tail, also write $Y^v=\prod_{k=\ell+1}^rY_k^{v_k}$.

The next lemma extends the cited substitution and multiplicity argument
to $\ell$ prescribed derivatives. With $X=u+Z$, its congruence is exactly
the local divisibility condition in \Cref{hfixed:interpolation-criterion}.

\begin{lemma}[{\cite[Lemma~3.1]{bcpzz2026}} and {\cite[Lemma~3.2]{jeronimo2026}}]
\label{hfixed:constraints-lemma}
Let $r\ge\ell$ be an integer, put $R=r-\ell+1$, and let
$D\ge1$, $B,W\ge0$ be integers.
Suppose $\Qcal\subseteq\Fbb_q[X,Y_0,\ldots,Y_r]$ is a finite-dimensional
space spanned by monomials of total $Y$-degree at
most $B$ and tail weight at most $W$.
For each incidence $(u,y)$, there is a system of at most
\[
 D(B+1)\binom{\lceil D/R\rceil+\ell}{\ell}\Lambda_R(W+D)
\]
homogeneous linear conditions on the coefficients of $Q\in\Qcal$.
If $J_\ell P(u)=y$ and $Q$ satisfies these conditions at $(u,y)$, then
\begin{equation}
 Q\bigl(u+Z,P(u+Z),P^{(1)}(u+Z),\ldots,P^{(r)}(u+Z)\bigr)
 \equiv0\pmod{Z^D}.
 \label{hfixed:local-projection}
\end{equation}
\end{lemma}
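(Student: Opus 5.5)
We write the polynomial $G(Z):=Q\bigl(u+Z,P(u+Z),\ldots,P^{(r)}(u+Z)\bigr)$ and express it through only the data pinned down by the jet $y$, plus a controlled family of ``unknown'' quantities. Then we demand that the coefficient of each $Z^m$ with $m<D$ vanish identically in those unknowns.

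\textbf{Taylor expansions at $u$.} Since $P^{(k)}(u+Z)=\sum_{j\ge0}\binom{k+j}{k}P^{(k+j)}(u)Z^j$, every $P^{(k)}(u+Z)$ is a polynomial in $Z$ whose coefficients are the values $P^{(m)}(u)$. The first $\ell$ of these values, $P^{(0)}(u),\ldots,P^{(\ell-1)}(u)$, are the known coordinates $y_0,\ldots,y_{\ell-1}$ of $y$. The remaining values $P^{(m)}(u)$ with $m\ge\ell$ are unknown.

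\textbf{Choice of unknowns.} Following the hidden-derivative idea of \cite{bcpzz2026}, we do not treat the $P^{(m)}(u)$ themselves as unknowns. Instead we apply the backward Taylor identity \eqref{hfixed:backward-taylor} to $P^{(0)},\ldots,P^{(\ell-1)}$. This rewrites each of $P(u+Z),\ldots,P^{(\ell-1)}(u+Z)$ as a known polynomial in $Z$ (built from $y$) plus $Z^{\ell-k}$ times an expression in the higher derivatives evaluated at $u+Z$. Concretely, write $P^{(k)}(u+Z)=\sum_{j<\ell-k}\binom{k+j}{k}y_{k+j}Z^j+Z^{\ell-k}E_k(Z)$, where $E_k$ is a linear combination of the $P^{(m)}(u+Z)$ with $m\ge\ell$.

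\textbf{Substitution.} We substitute into $Q$ and expand each monomial $X^aY_0^{b_0}\cdots Y_r^{b_r}$.
\begin{itemize}
\item The variables $Y_0,\ldots,Y_{\ell-1}$ contribute a known part. Every cross term carries powers of $Z$: a product of $s$ remainder factors carries at least $Z^{s}$.
\item Modulo $Z^D$, only configurations with total remainder degree below $D$ survive. The $\ell$ remainder variables, together with one power of $Z$ per use, produce the factor $\binom{\lceil D/R\rceil+\ell}{\ell}$.
\end{itemize}
We then treat $Y_\ell,\ldots,Y_r$ evaluated at $u+Z$ as formal indeterminates $W_\ell,\ldots,W_r$. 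The conditions are: for each $m<D$, the coefficient of $Z^m$ in the resulting polynomial in the remainder monomials and the $W$'s vanishes identically. This is a sufficient condition for \eqref{hfixed:local-projection}. These conditions are linear and homogeneous in the coefficients of $Q$, and they depend only on $(u,y)$, not on $P$.

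\textbf{Counting.} We count (coefficient of $Z^m$) $\times$ (surviving monomials) as follows.
\begin{itemize}
\item The $D$ choices of $m$ give the factor $D$.
\item The degree in $W_\ell$ is at most $B$, giving the factor $B+1$.
\item Tail monomials $W^v$ must have weight at most $W+D$. The reason is that rewriting a remainder $E_k$ in terms of higher derivatives shifts weight by at most the number of $Z$'s absorbed, which is at most $D$. This gives $\Lambda_R(W+D)$.
\item The remainder-variable monomials give the binomial factor.
\end{itemize}
To justify the $\lceil D/R\rceil$ rather than $D$, we grade so that each unit of remainder degree costs $R$ in the combined $Z$/weight budget. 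Here the $R=r-\ell+1$ derivatives per unit are absorbed into the tail.

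\textbf{Main obstacle.} We expect the hard step to be this bookkeeping, and specifically showing that the weight shift and the remainder degree trade off as claimed. We would first verify it for $\ell=1$ against \cite[Lemma~3.2]{jeronimo2026}, then induct on the number of prescribed derivatives.
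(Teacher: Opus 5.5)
There is a genuine gap, and it sits exactly at the step you flag as the main obstacle. As written, your substitution is $Y_k=A_y^{(k)}(Z)+Z^{\ell-k}E_k$ for $k<\ell$, with $A_y(Z)=\sum_{j<\ell}y_jZ^j$. Here $E_k$ is the \emph{full} forward-Taylor remainder, treated as a free unknown, and $Y_\ell,\dots,Y_r$ remain independent indeterminates.

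\begin{itemize}
\item A remainder factor then carries only $Z^{\ell-k}$, which is just $Z^1$ for $k=\ell-1$. Nothing more is available, because the constant term of $E_k$ is $\binom{\ell}{k}P^{(\ell)}(u)$, which the jet does not constrain.
\item Your own remark that $s$ remainder factors carry at least $Z^s$ shows the consequence. The conditions are indexed by all $\nu$ with $\sum_k(\ell-k)\nu_k<D$, which is of order $D^\ell$ up to $\ell$-dependent factors, not $\binom{\lceil D/R\rceil+\ell}{\ell}$. No grading changes this while the $E_k$ are decoupled from the tail variables.
\item In this literal scheme the substitution also creates no tail variables, so your factor $\Lambda_R(W+D)$ has no source.
\end{itemize}

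The lost factor is the whole gain of the hidden-derivative method. Step~3 needs $\binom{\lceil \ell m/R\rceil+\ell}{\ell}\le(3em/R)^\ell$, i.e.\ the $R^{-\ell}$ in $M_{\mathrm{loc}}$, to produce the factor $R^{\ell(1-1/a)}$ in $\dim\Qcal/M_{\mathrm{loc}}$. Your instinct to rewrite $E_k$ in terms of higher derivatives is right. However, $E_k$ involves $P^{(m)}(u+Z)$ for all $m$ up to $\deg P$, and only the part with $m\le r$ can be written in the available variables.

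The missing idea is to make the truncation error itself the unknown. Apply the backward Taylor identity to $P^{(j)}$ for $j<\ell$ and discard the terms $Z^{k-j}P^{(k)}(u+Z)$ with $k>r$. Then eliminate the unitriangular system in decreasing order of $j$; this handles all $\ell$ at once, so no induction on $\ell$ is needed. The result is $P^{(j)}(u+Z)\equiv A_y^{(j)}(Z)+\sum_{k=\ell}^r c_{j,k}Z^{k-j}P^{(k)}(u+Z)\pmod{Z^{r+1-j}}$ with integer $c_{j,k}$.

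Now substitute $Y_j=A_y^{(j)}(Z)+Z^{\ell-j}E_j+\sum_{k=\ell}^r c_{j,k}Z^{k-j}Y_k$.
\begin{itemize}
\item Once $Y_k=P^{(k)}(u+Z)$, the new $E_j$ is divisible by $Z^{(r+1-j)-(\ell-j)}=Z^R$, so $E^\nu$ carries $Z^{R|\nu|}$.
\item Write the transformed polynomial as $\sum_{\nu,b,v}q_{\nu,b,v}(Z)E^\nu Y_\ell^bY^v$. It then suffices to impose $q_{\nu,b,v}\equiv0\pmod{Z^{D-R|\nu|}}$ for $R|\nu|<D$, which is where $\lceil D/R\rceil$ comes from.
\item Each tail variable introduced by the substitution comes with $Z^{k-j}$, whose exponent exceeds its weight $k-\ell$. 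Hence recorded monomials of $Z$-degree below $D$ have tail weight at most $W+D$.
\item The substitution is affine in $(E,Y_\ell,\dots,Y_r)$, so the $Y_\ell$-degree stays at most $B$.
\end{itemize}
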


We will prove \Cref{hfixed:constraints-lemma} after completing the proof
of \Cref{hendpoint:lr}. For now, it supplies the local conditions for
any monomial space with the stated degree and weight bounds.
It remains to choose such a space satisfying both inequalities in
\eqref{hfixed:proof-targets}.

\paragraph{Step 3: Construct the space and verify the incidence budget $T$.} We choose the tail exponents first. Their ordinary degrees control
the specialization degree, while the local condition count involves
$\Lambda_R(W+D)$. The next lemma retains many tails of bounded ordinary
degree and controls the growth from $\Lambda_R(W)$ to $\Lambda_R(W+D)$.
The corresponding estimate in~\cite[Lemma~3.4]{jeronimo2026}
is stated for cubic multiplicity. We use the following version with
the smaller multiplicity $m=\lceil R\ln^2(eR)\rceil$.

\begin{lemma}[Weighted-partition estimates]
\label{part:growth}
Let $R\ge2$ be an integer and $a>1$.
Put $H=\ln(eR)$,$m=\lceil RH^2\rceil$,$W=\left\lfloor\frac{aRm}{H}\right\rfloor$. For every integer $b\ge0$,
\begin{equation}
\Lambda_R(W+bm)\le(eR)^{b/a}\Lambda_R(W).
 \label{part:ratio}
\end{equation}
If $V$ is uniform on the vectors counted by $\Lambda_R(W)$, then
\begin{equation}
 \E|V|\le am+\frac{m}{2H}.
 \label{part:mean}
\end{equation}
Consequently, for every $a_+>a$, the set $\Tcal_{a,a_+}
 =\{v\in\Zbb_{\ge0}^{R-1}:\omega(v)\le W,\ |v|\le\lceil a_+m\rceil\}$ satisfies
\begin{equation}
 |\Tcal_{a,a_+}|
 \ge\frac{a_+-a-(2H)^{-1}}{a_+}\Lambda_R(W).
 \label{part:retained-fraction}
\end{equation}
\end{lemma}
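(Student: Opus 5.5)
The plan is to control the growth of $\Lambda_R$ using a recurrence for the level counts. Write $p(s)=\#\{v\in\Zbb_{\ge0}^{R-1}:\omega(v)=s\}$, so that $\Lambda_R(w)=\sum_{s\le w}p(s)$. The generating function of $p$ is $\prod_{j=1}^{R-1}(1-x^j)^{-1}$, and logarithmic differentiation gives the Euler-type identity
\[
 s\,p(s)=\sum_{j=1}^{R-1}j\sum_{k\ge1}p(s-jk)\qquad(s\ge1).
\]
Because $R\ge2$, parts of size $1$ are allowed, so adding a single $1$ injects level $i$ into level $i+1$; hence $p$ is nondecreasing.

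\emph{Upper bound on growth.} Using monotonicity, each term satisfies $j\,p(s-jk)\le\sum_{i=s-jk}^{s-jk+j-1}p(i)$. Summing over $k\ge1$ gives
\[
 j\sum_{k\ge1}p(s-jk)\le\Lambda_R(s-1),
\]
and therefore $s\,p(s)\le(R-1)\Lambda_R(s-1)$. This yields
\[
 \Lambda_R(s)\le\Bigl(1+\tfrac{R-1}{s}\Bigr)\Lambda_R(s-1).
\]
Telescoping from $W+1$ to $W+bm$ gives
\[
 \frac{\Lambda_R(W+bm)}{\Lambda_R(W)}\le\exp\Bigl((R-1)\tfrac{bm}{W}\Bigr).
\]
Since $W\ge aRm/H-1$, the exponent is at most $(R-1)bH/(aR)\cdot(1+o(1))$. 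I then need to absorb the floor, presumably by keeping the exact sum $\sum_s 1/s\le\ln\frac{W+bm}{W}$ and checking that $(R-1)/R$ compensates. The result is \eqref{part:ratio}.

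\emph{Mean bound.} For the mean, I use $\E v_j=\sum_{k\ge1}\Pr[v_j\ge k]=\sum_{k\ge1}\Lambda_R(W-jk)/\Lambda_R(W)$. The identity $\Pr[v_j\ge k]=\Lambda_R(W-jk)/\Lambda_R(W)$ holds because subtracting $k$ from $v_j$ is a bijection onto $\{\omega\le W-jk\}$. This step needs a lower bound on growth, and I obtain it from the same recurrence with monotonicity reversed: $j\sum_{k\ge1}p(s-jk)\ge\sum_{i\le s-j}p(i)=\Lambda_R(s-j)$. Consequently
\[
 s\,p(s)\ge\sum_{j=1}^{R-1}\Lambda_R(s-j)\ge(R-1)\Lambda_R(s-R+1).
\]
I expect this to convert to $\Lambda_R(W-x)/\Lambda_R(W)\le\bigl(1-\tfrac{x}{W+R}\bigr)^{R-1}$, or something comparable. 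The conversion is the step I expect to be the main obstacle, since the shift by $R$ has to be handled carefully, perhaps by comparing with a continuous product and using the monotonicity of $p$. Granting it, comparing the sum over $k$ with an integral gives $\E v_j\le\frac{W+R}{jR}$. Summing over $j$, with $\sum_{j<R}1/j\le H$, gives
\[
 \E|V|\le\frac{WH}{R}+H\le am+\frac{m}{2H},
\]
where the last step uses $H\le RH/2\le m/(2H)$ for $R\ge2$. This is \eqref{part:mean}.

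\emph{Retained fraction.} Finally, \eqref{part:retained-fraction} follows from Markov's inequality:
\[
 \Pr\bigl[|V|>\lceil a_+m\rceil\bigr]\le\frac{am+m/(2H)}{a_+m}.
\]
The complement of this event is exactly $\Tcal_{a,a_+}$, and $|\Tcal_{a,a_+}|=\Lambda_R(W)\cdot\Pr[V\in\Tcal_{a,a_+}]$.
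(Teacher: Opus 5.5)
Your proofs of \eqref{part:ratio} and \eqref{part:retained-fraction} match the paper's: the same level-count identity, the same block/monotonicity trick, the same telescoping, and the same Markov step. For the floor you need no $1+o(1)$. Since $am/H\ge RH\ge1$, we have $W\ge aRm/H-1\ge a(R-1)m/H$, and hence $(R-1)bm/W\le bH/a$ exactly.

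\textbf{The gap is in the proof of \eqref{part:mean}.} The pointwise bound you propose to grant, $\Lambda_R(W-x)/\Lambda_R(W)\le\bigl(1-\frac{x}{W+R}\bigr)^{R-1}$, is false in the lemma's regime.

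\emph{Counterexample at $x=W$.} There the bound asserts $\Lambda_R(W)\ge(1+W/R)^{R-1}$. But the disjoint unit cubes $v+[0,1)^{R-1}$ lie in $\{y\ge0:\sum_j jy_j<W+\binom R2\}$, so
\[
\Lambda_R(W)\le\frac{\bigl(W+\binom R2\bigr)^{R-1}}{((R-1)!)^2}.
\]
For $R=6$, every $a>1$ gives $m=47$ and $W\ge101$. Then $(W+15)^5/14400<(W+6)^5/7776$, which contradicts the asserted bound.

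\emph{The underlying reason.} The true shift is quadratic, not $R$. For fixed $R$,
\[
\Lambda_R(w)=\frac{(w+\sigma)^{R-1}}{((R-1)!)^2}\bigl(1+O(w^{-2})\bigr),\qquad \sigma=\tfrac12\bigl(1+\tbinom R2\bigr)\approx R^2/4.
\]

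\emph{Why ``something comparable'' does not close the gap.} Your integral step would tolerate a shift $c$ up to about $R^2/2$: it costs $Hc/R$ against the budget $m/(2H)\ge RH/2$. However, you prove no such pointwise bound. It also does not follow readily from $s\,p(s)\ge\sum_{j<R}\Lambda_R(s-j)$. Converting that inequality into a ratio bound requires the upper growth bound as well. The effective shift this produces is of order $R^2/2$, which sits at the edge of the available slack.

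\textbf{Fix: aggregate instead of going pointwise.} Set $X_0=W-\omega(V)$, so that $\Lambda_R(W)\,\E X_0=\sum_{t<W}\Lambda_R(t)$.

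\emph{Bound on $\E V_j$.} Apply your block trick to the nondecreasing function $\Lambda_R$ in your own tail formula. This gives $j\sum_{k\ge1}\Lambda_R(W-jk)\le\sum_{t<W}\Lambda_R(t)$, that is, $\E V_j\le\E X_0/j$.

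\emph{Bound on $\E X_0$.} Sum your lower recurrence over $1\le s\le W$:
\[
\Lambda_R(W)\,\E\,\omega(V)\ \ge\ \sum_{j=1}^{R-1}\sum_{t\le W-j}\Lambda_R(t)\ \ge\ (R-1)\Lambda_R(W)\,\E X_0-\binom{R-1}{2}\Lambda_R(W).
\]
Since $\E\,\omega(V)=W-\E X_0$, this yields $R\,\E X_0\le W+\binom{R-1}2$.

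\emph{Conclusion.} Combining the two bounds,
\[
\E|V|\le H\bigl(W/R+R/2\bigr)\le am+RH/2\le am+m/(2H).
\]

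This is what the paper obtains with its slack variable. Conditional on the other coordinates, $V_j$ is uniform on $\{0,\ldots,\lfloor B_j/j\rfloor\}$ with $B_j=X_0+jV_j$, so $0\le\E[X_0-jV_j]\le j-1$. The paper then sums over $j$ against $\E X_0+\sum_j j\,\E V_j=W$.
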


We will also prove \Cref{part:growth} after completing the main argument.
To apply \Cref{part:growth} here, set $R=\left\lfloor\frac{n}{100\ell^2h^4}\right\rfloor$,
$r=R+\ell-1$, and $H=\ln(eR)$.
For all sufficiently large $n$, uniformly over $1\le\ell\le n^{1/10}$,
\[
 \frac{n}{200\ell^2h^4}\le R\le\frac{n}{100\ell^2h^4},
 \qquad \frac h2\le H\le h,\qquad r\le2R.
\]
In particular, $2\le R\le n$.
Set $m=\lceil RH^2\rceil$ and $a=2\zeta-\frac8H$; then
$a=4\alpha-2-O(1/h)>1$.
With $W=\left\lfloor\frac{aRm}{H}\right\rfloor$, define
\[
 \Tcal=\left\{v\in\Zbb_{\ge0}^{R-1}:
 \omega(v)\le W,\quad
 |v|\le\left\lceil\left(2\zeta-\frac4H\right)m\right\rceil\right\}.
\]
The retained-fraction bound in \Cref{part:growth}, with
$a_+=2\zeta-4/H$, gives
\begin{equation}
 |\Tcal|
 \ge\frac{4-1/2}{H(2\zeta-4/H)}\Lambda_R(W)
 \ge\frac{\Lambda_R(W)}H.
 \label{hendpoint:partition-ratio}
\end{equation}

We turn this tail set into a rectangular monomial space, following
the constructions in the proofs of \cite[Lemma~3.2]{bcpzz2026} and
\cite[Lemma~3.3]{jeronimo2026}. The degree bounds below follow directly
from the exponent ranges, and the dimension counts the allowed monomials.

\begin{lemma}[{\cite[Lemma~3.2]{bcpzz2026}} and {\cite[Lemma~3.3]{jeronimo2026}}]
\label{hfixed:box-lemma}
Let $K\ge2$, $r\ge\ell\ge1$ be integers, and let $s>0$ and $A\ge0$.
Let $\Tcal\subseteq\Zbb_{\ge0}^{r-\ell}$ be a finite set
with $|v|\le A$ for every $v\in\Tcal$.
Define $\Qcal$ to be the span of the monomials
\[
 X^{d_X}\prod_{j=0}^{\ell}Y_j^{e_j}Y^v,
 \qquad v\in\Tcal,\quad
 0\le e_j\le s,\quad 0\le d_X<(K-1)s,
\]
where all exponents are integers and
$Y^v=\prod_{j=\ell+1}^rY_j^{v_j}$.
This space has the following properties:
\begin{enumerate}
\item For $Q\in\Qcal$, the total $Y$-degree is at most
$A+(\ell+1)s$. Hence
\[
 \deg_{Y_j}Q\le A+(\ell+1)s\qquad(0\le j\le r).
\]
\item For $Q\in\Qcal$ and $P\in\Fbb_q[X]$ with $\deg P\le K-1$,
\[
 \deg_X Q\bigl(X,P(X),P^{(1)}(X),\ldots,P^{(r)}(X)\bigr)
 <(K-1)\bigl(A+(\ell+2)s\bigr).
\]
\item The dimension is
\[
 \dim\Qcal
 =\lceil(K-1)s\rceil(\lfloor s\rfloor+1)^{\ell+1}|\Tcal|
 \ge(K-1)s^{\ell+2}|\Tcal|.
\]
\end{enumerate}
\end{lemma}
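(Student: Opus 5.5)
The three properties all come from reading off exponent ranges of the spanning monomials. Since $\Qcal$ is spanned by monomials, a statement that holds for each monomial extends to all of $\Qcal$ by linearity, or by taking the maximum for degree bounds. I would verify each item on a single monomial $X^{d_X}\prod_{j=0}^{\ell}Y_j^{e_j}Y^v$.

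For item 1, the total $Y$-degree of such a monomial is $\sum_{j=0}^{\ell}e_j+|v|$. With $e_j\le s$ for each of the $\ell+1$ indices and $|v|\le A$ for $v\in\Tcal$, this is at most $A+(\ell+1)s$. The individual degree bound $\deg_{Y_j}Q\le A+(\ell+1)s$ follows because an individual degree never exceeds the total $Y$-degree.

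For item 2, substitute $Y_j=P^{(j)}(X)$. Each Hasse derivative satisfies $\deg P^{(j)}\le\deg P\le K-1$, and $P^{(j)}=0$ when $j>\deg P$, which only helps. The substituted monomial therefore has $X$-degree at most
\[
 d_X+(K-1)\Bigl(\sum_{j=0}^{\ell}e_j+|v|\Bigr)
 <(K-1)s+(K-1)\bigl((\ell+1)s+A\bigr)
 =(K-1)\bigl(A+(\ell+2)s\bigr).
\]
The strict inequality comes from $d_X<(K-1)s$. A sum of terms each of degree below this bound still has degree below it, which gives item 2.

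For item 3, the monomials are distinct for distinct exponent tuples, so they are linearly independent. The dimension is therefore the number of admissible tuples:
\begin{itemize}
\item The number of integers $d_X$ with $0\le d_X<(K-1)s$ is $\lceil(K-1)s\rceil$.
\item For each of the $\ell+1$ exponents $e_j$, the number of integers $0\le e_j\le s$ is $\lfloor s\rfloor+1$.
\item The choice of $v$ contributes $|\Tcal|$.
\end{itemize}
The inequality $\lceil(K-1)s\rceil(\lfloor s\rfloor+1)^{\ell+1}\ge(K-1)s\cdot s^{\ell+1}$ uses $\lceil x\rceil\ge x$ and $\lfloor s\rfloor+1>s$.

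No step here is a real obstacle. The points needing care are that substitution is into Hasse derivatives, whose degree is bounded by $\deg P$, and the strictness of the bound in item 2, which comes from $d_X$.
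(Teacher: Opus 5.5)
Your proposal is correct and matches the paper's approach. The paper likewise reads both degree bounds directly off the exponent ranges of the spanning monomials, using $\deg P^{(j)}\le K-1$, with strictness in item 2 coming from $d_X<(K-1)s$, and it obtains the dimension by counting distinct admissible exponent tuples. Your explicit counts $\lceil (K-1)s\rceil$ and $\lfloor s\rfloor+1$ for possibly non-integral $s$ fill in the details the paper leaves implicit.
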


Apply \Cref{hfixed:box-lemma} to the tail set and parameters above, with
$s=m/(\ell H)$ and $A=\lceil(2\zeta-4/H)m\rceil$, to define
$\Qcal$. Every defining monomial satisfies
\[
 \begin{aligned}
 \frac{d_X}{K-1}+\sum_{j=0}^{\ell}e_j+|v|
 &<A+(\ell+2)s
 \le2\zeta m-\frac mH+1<2\zeta m,
 \end{aligned}
\]
where the last inequality uses $m/H\ge RH>1$.
Set $D=\ell m$.
Thus every $Q\in\Qcal$ has total $Y$-degree at most $2m$ and
$\deg F_{Q,P}<2\zeta(K-1)m<Dt$ for every polynomial $P$ with $\deg P<K$.
Moreover,
\begin{equation}
 \dim\Qcal
 \ge(K-1)s^{\ell+2}|\Tcal|
 \ge\frac{Km^{\ell+2}}
           {2\ell^{\ell+2}H^{\ell+3}}\Lambda_R(W).
 \label{hendpoint:dimension}
\end{equation}

Every defining monomial of $\Qcal$ has tail weight at most $W$
by the choice of $\Tcal$. Together with its total $Y$-degree bound
$2m$, this verifies the hypotheses of \Cref{hfixed:constraints-lemma}
with $B=2m$ and $D=\ell m$. Each incidence therefore requires at most
\[
 M_{\mathrm{loc}}
 \le3\ell(3e)^\ell m^{\ell+2}R^{-\ell}
       \Lambda_R(W+\ell m)
\]
conditions, using
$\binom{\lceil\ell m/R\rceil+\ell}{\ell}\le(3em/R)^\ell$.
The growth bound in \Cref{part:growth}, with $b=\ell$, gives
$\Lambda_R(W+\ell m)\le(eR)^{\ell/a}\Lambda_R(W)$.
Combining these estimates with \eqref{hendpoint:dimension}, we obtain
\begin{equation}
 \frac{\dim\Qcal}{M_{\mathrm{loc}}}
 \ge\frac{nR^{\ell(1-1/a)}}
           {C^\ell\ell^{\ell+2}H^{\ell+3}}
 \ge\frac{nR^{\ell\kappa(\alpha)}}
           {C_0^\ell\ell^{\ell+2}H^{\ell+3}}
 \label{hendpoint:interpolation-ratio}
\end{equation}
for absolute constants $C,C_0$.
For the second inequality, $a=4\alpha-2-20/{\ln q}-8/H>1$ and $R\le n$ imply
\[
 0\le\left(\kappa(\alpha)-1+\frac1a\right)\ln R
 =\frac{(20/{\ln q}+8/H)\ln R}{(4\alpha-2)a}\le28.
\]

Since $0<\kappa(\alpha)\le1/2$, substituting our choice of $R$ in
\eqref{hendpoint:interpolation-ratio} gives
\[
 \frac{nR^{\ell\kappa(\alpha)}}
      {C_0^\ell\ell^{\ell+2}H^{\ell+3}}
 \ge\frac{q^{1+\ell\kappa(\alpha)}}
           {(C_1\ell\ln q)^{6\ell}}
 \ge2T>N_{\inc}
\]
for an absolute constant $C_1$ and sufficiently large $n$, by the
choice of $B_\alpha$ in Step~1.
This verifies the dimension inequality in \eqref{hfixed:proof-targets};
the specialization degree bound and local conditions were established
above. Applying \Cref{hfixed:interpolation-criterion} with $N=N_{\inc}$ and $M=M_{\mathrm{loc}}$ now supplies a single
nonzero $Q\in\Qcal$ for which $F_{Q,P}=0$ for every $P\in\Pcal(F)$.

\paragraph{Step 4: Bound the number of solutions to obtain $L$.}
It remains to bound the number of low-degree polynomial solutions of
the common equation. We use the cardinality conclusion of
\cite[Theorem~4.3]{kopparty2015}.
The individual-degree parameter is denoted by $B$ here, rather than
the source's $t$, to distinguish it from the agreement threshold.

\begin{theorem}[Differential-root bound {\cite[Theorem~4.3]{kopparty2015}}]
\label{hfixed:differential-roots}
Let $q$ be prime and let $d,r,B$ be integers with
$d\ge1$, $0\le r<\min\{d,q\}$, and $B>0$.
Suppose $0\ne Q\in\Fbb_q[X,Y_0,\ldots,Y_r]$ satisfies:
\begin{enumerate}
\item $\deg_{Y_j}Q\le B<q$ for each $0\le j\le r$;
\item the $(1,d,d-1,\ldots,d-r)$-weighted degree of $Q$ is less than $q^2$.
\end{enumerate}
Define its set of polynomial solutions of degree at most $d$ by
\[
 \Lcal=\left\{P\in\Fbb_q[X]:\deg P\le d,\quad
 Q\bigl(X,P(X),P^{(1)}(X),\ldots,P^{(r)}(X)\bigr)=0\right\}.
\]
Then
\[
 |\Lcal|\le B(r+1)q^{2r\lfloor d/q\rfloor+4r+4}.
\]
\end{theorem}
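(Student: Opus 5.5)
The plan is to turn the differential equation into a recursion that pins down the Taylor coefficients of $P$, and then to count the initial data the recursion needs. Since the statement is the cardinality part of \cite[Theorem~4.3]{kopparty2015}, we could simply invoke it; the sketch below is how we would reprove it.

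\textbf{Step 1: make the separant nonzero along each solution.} We may first replace $Q$ by an irreducible factor that contains a given solution. This loses at most a factor counted by the individual degrees, which is where the prefactor $B(r+1)$ comes from. Next, for each solution $P$, let $j\le r$ be the largest index such that $Q$ depends on $Y_j$. We replace $Q$ by a suitable partial derivative $\partial_{Y_j}^{c}Q$ with $c<B<q$, chosen so that $P$ is still a root of it while its separant $S=\partial Q/\partial Y_j$ satisfies $S(X,P,\ldots,P^{(r)})\not\equiv0$. Individual degrees $B<q$ ensure these ordinary partial derivatives behave as in characteristic zero.

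\textbf{Step 2: choose a good base point.} The weighted-degree hypothesis (less than $q^2$) shows that $S(X,P,\ldots,P^{(r)})$ is a nonzero polynomial of degree less than $q^2$. We cannot conclude it is nonzero at a point of $\Fbb_q$, but we can after splitting $X$-degree into base-$q$ blocks. So we will locate a point $a$, or equivalently a block position, where the separant is invertible.

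\textbf{Step 3: Taylor recursion within blocks of length $q$.} Write $P=\sum_{b=0}^{\lfloor d/q\rfloor}X^{qb}P_b$ with $\deg P_b<q$. For Hasse orders below $q$ we have $(X^{qb}P_b)^{(k)}=X^{qb}P_b^{(k)}$, so within one block the operator $d/dX$ is injective up to order $q-1$. Differentiating the identity $Q(X,P,\ldots,P^{(r)})=0$ once gives
\[
 S\cdot(j+1)P^{(j+1)}=(\text{terms in lower derivatives}),
\]
and $j+1<q$ is invertible. Because $S$ is invertible at $a$, this determines the Taylor coefficients of $P$ at $a$ of order $j+1,\ldots,q-1$ from the first $j+1\le r+1$ of them.

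\textbf{Step 4: count the initial data.} Each block $P_b$ therefore needs at most $O(r)$ new coefficients. We expect $2r$: the first $r$ coefficients of each block are coupled to the tail of the previous block through the terms $X^{qb}$ and must be re-specified. Over $\lfloor d/q\rfloor+1$ blocks, together with a constant number of extra jets of length $r+1$ for the base point and the factor choice, this gives at most $q^{2r\lfloor d/q\rfloor+4r+4}$ possibilities. Multiplying by $B(r+1)$ yields the bound.

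\textbf{Main obstacle.} The crux is Steps 2 and 4 together. We need to show that the weighted-degree bound $<q^2$ guarantees invertibility of the separant at the needed positions in every block, not just somewhere. We also need the per-block recursion to genuinely cost only $O(r)$ fresh field elements despite the characteristic-$q$ degeneracy at order $q$. We would try first to handle this by treating the pair (value block, $X^q$-shifted block) jointly as a two-variable system. The recursion would then run in $X$ and in $X^q$ separately, which explains the $2r$ per block.
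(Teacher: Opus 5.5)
The paper does not prove this statement; it imports the cardinality conclusion of Kopparty's Theorem~4.3 without proof. Your remark that one can simply invoke it is therefore exactly the paper's route. The reproof you sketch, however, does not go through.

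First, Step~1 fails as stated. Taking $j$ to be the largest index on which $Q$ depends, and differentiating only in $Y_j$, need not give a nonvanishing separant, even for irreducible $Q$. For $Q=(Y_0-X)Y_2+Y_1-1$ and $P=X$ we have $Q(P)=0$, $\partial_{Y_2}Q(P)=P-X=0$ and $\partial_{Y_2}^2Q=0$, so no $c$ works. The repair is to let the top variable drop. While the separant in the current top variable $Y_j$ vanishes along $P$, replace $G$ by $\partial_{Y_j}G$. After $\deg_{Y_j}G$ such steps you reach a nonzero multiple of the leading coefficient in $Y_j$, which no longer involves $Y_j$, and you continue with its top variable. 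Since a nonzero element of $\Fbb_q[X]$ cannot vanish along $P$, this stops at some $G$ with $G(P)=0$ and $\partial_{Y_j}G(P)\neq0$, where $j$ is the top variable of $G$. At every level above the last one the number of derivatives is forced, so $G$ is determined by the final pair $(j,c)$ with $0\le j\le r$ and $0\le c<B$. This, not an irreducible-factor step, is where $B(r+1)$ comes from. It also guarantees that $G$ involves only $Y_0,\dots,Y_j$, which your Step~3 recursion needs.

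Second, and more importantly, Steps~2--4 miss the idea that makes the count work; your ``main obstacle'' is a symptom of this. There may be no $a\in\Fbb_q$ with $S(P)(a)\neq0$: the separant along $P$ can vanish at every point of $\Fbb_q$, e.g.\ if it is divisible by $X^q-X$. Splitting into base-$q$ blocks does not produce such a point. The weighted-degree hypothesis is there because it makes $S(P)=\partial_{Y_j}G(X,P,\dots,P^{(j)})$ a nonzero polynomial of degree less than $q^2$, which therefore cannot vanish on all of $\Fbb_{q^2}$. Choose $\gamma\in\Fbb_{q^2}$ with $S(P)(\gamma)\neq0$ and write $P=\sum_n p_n(X-\gamma)^n$ with $p_n\in\Fbb_{q^2}$. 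For $m\ge1$, the coefficient of $(X-\gamma)^m$ in $G(X,P,\dots,P^{(j)})=0$ is $\binom{m+j}{j}S(P)(\gamma)\,p_{m+j}+f_m(p_0,\dots,p_{m+j-1})$. By Lucas' theorem, $\binom{n}{j}\not\equiv0\pmod q$ if and only if $n\bmod q\ge j$.

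So a single base point serves all degrees, not one per block. The only coefficients not forced by earlier ones are $p_0,\dots,p_j$, together with, in each of the $\lfloor d/q\rfloor$ later blocks of $q$ indices, the $j\le r$ indices with $n\bmod q<j$. Counting these and $\gamma$ as elements of $\Fbb_{q^2}$ gives at most $q^{2[(r+1)+r\lfloor d/q\rfloor+1]}\le q^{2r\lfloor d/q\rfloor+4r+4}$ solutions for each $G$. Multiplying by the $B(r+1)$ choices of $G$ gives the theorem. Thus the factor $2$ in the exponent records that the Taylor data live in $\Fbb_{q^2}$. It is not a loss of $2r$ coefficients per block from coupling with $X^{qb}$, and the two-variable system in $X$ and $X^q$ you propose is not needed.
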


We apply this theorem to the common nonzero $Q$ from Step~3, with
$d=K-1$ and $B=2m$.
Our parameter choice gives $r\le2R$, $m=O(n/(\ell^2h^2))$, and
$Dt=O(n^2/(\ell h^2))$. Hence, for all sufficiently large $n$,
\begin{equation}
 r<\min\{K-1,q\},\qquad
 2m<q,\qquad Dt<q^2.
 \label{hendpoint:root-conditions}
\end{equation}
Also, the normalized monomial bound in Step~3 gives
$\deg_{(1,K-1,K-2,\ldots,K-1-r)}Q<Dt$, since each weight
$K-1-j$ is at most $K-1$.
Together with the total $Y$-degree bound $2m$, these inequalities verify
all hypotheses of \Cref{hfixed:differential-roots}.
For this same $Q$ and $d=K-1$, every $P\in\Pcal(F)$ belongs to
the solution set $\Lcal$ in that theorem. Hence
\begin{equation}
 |\Pcal(F)|\le|\Lcal|
 \le 2m(r+1)q^{2r\lfloor(K-1)/q\rfloor+4r+4}.
 \label{hendpoint:candidate-count}
\end{equation}

Since $\lfloor(K-1)/q\rfloor=O(\ell)$ and
$r=O(n/(\ell^2h^4))$, the bound in \eqref{hendpoint:candidate-count} is at most
\[
 q^{O(\ell r)}
 =\exp\left(O\left(\frac{n}{\ell h^3}\right)\right)
 \le2^{C_2 n/h^3}
\]
for an absolute constant $C_2$.
Taking $L=\lceil2^{C_2 n/h^3}\rceil$ also covers the empty-list case
in Step~1 and gives
\[
 T=q^{1+\ell\kappa(\alpha)}(\ell\ln q)^{-O_\alpha(\ell)},
 \qquad
 \log L=O(n/h^3).
\]
Since the lists were arbitrary, $|\Pcal(F)|\le L$ establishes $(\zeta,T,L)$-list recovery.
\end{proof}

We now prove the two deferred lemmas, starting with
\Cref{hfixed:constraints-lemma}.

\begin{proof}[Proof of \Cref{hfixed:constraints-lemma}]
Fix an incidence $(u,y)$, with $y=(y_0,\ldots,y_{\ell-1})$.
Put $Z=X-u$ and $A_y(Z)=\sum_{j<\ell}y_jZ^j$.
Suppose $J_\ell P(u)=y$.
Apply \eqref{hfixed:backward-taylor} to $P^{(j)}$ at $X=u$, using the Hasse derivative
identity $(P^{(j)})^{(s)}=\binom{j+s}{j}P^{(j+s)}$. For $j<\ell$,
\[
 y_j=\sum_{k=j}^r(-1)^{k-j}\binom kj
       Z^{k-j}P^{(k)}(u+Z)\pmod{Z^{r+1-j}}.
\]
These equations are unitriangular in the first $\ell$ derivatives.
Eliminating them in decreasing order of $j$ yields integer coefficients
$c_{j,k}$, depending only on $\ell,r,j,k$, for which
\[
 P^{(j)}(u+Z)=A_y^{(j)}(Z)
       +\sum_{k=\ell}^r c_{j,k}Z^{k-j}P^{(k)}(u+Z)
       \pmod{Z^{r+1-j}}.
\]
Each substitution preserves the modulus: an error divisible by
$Z^{r+1-k}$ is multiplied by $Z^{k-j}$.
The prescribed-jet terms equal $A_y^{(j)}(Z)$, as is seen by applying
the same elimination to $P(X)=A_y(X-u)$.
All diagonal coefficients are $1$, so no division is used.

Introduce remainder variables $E_0,\ldots,E_{\ell-1}$.
In $Q$, substitute $X=u+Z$ and
\[
 Y_j=A_y^{(j)}(Z)+Z^{\ell-j}E_j
       +\sum_{k=\ell}^r c_{j,k}Z^{k-j}Y_k
 \qquad(j<\ell).
\]
Setting $Y_k=P^{(k)}(u+Z)$ makes each $E_j$ a polynomial divisible
by $Z^{(r+1-j)-(\ell-j)}=Z^R$.

Writing $E=(E_0,\ldots,E_{\ell-1})$, expand the transformed polynomial as
\[
 Q_{u,y}(Z,E,Y_\ell,\ldots,Y_r)
 =\sum_{\nu,k,v}q_{\nu,k,v}(Z)E^\nu Y_\ell^kY^v,
 \qquad E^\nu=\prod_{j<\ell}E_j^{\nu_j}.
\]
For every $\nu$ with $R|\nu|<D$, impose
\begin{equation}
 q_{\nu,k,v}(Z)\equiv0\pmod{Z^{D-R|\nu|}}
 \qquad\text{for all }k,v.
 \label{hfixed:coefficient-constraints}
\end{equation}
This is the jet version of the coefficient conditions in
\cite[Eq.~(15)]{bcpzz2026,jeronimo2026}.
After substituting the derivatives of $P$, the factor $E^\nu$ is
divisible by $Z^{R|\nu|}$. If $R|\nu|<D$, the coefficient in
\eqref{hfixed:coefficient-constraints} supplies the remaining power;
if $R|\nu|\ge D$, the remainder factor suffices.
Thus every summand vanishes modulo $Z^D$, proving~\eqref{hfixed:local-projection}.

It remains to count the scalar coefficient conditions. They set to zero
the coefficients of $Z^dE^\nu Y_\ell^kY^v$ with $d+R|\nu|<D$.
The substitution does not increase total degree in the $E$- and $Y$-variables.
Moreover, each newly introduced tail variable $Y_k$ is accompanied
by a power $Z^{k-j}$ whose exponent is at least its tail weight $k-\ell$.
Hence every recorded monomial satisfies
\[
 0\le d<D,\qquad k\le B,\qquad
 |\nu|<D/R,\qquad \omega(v)\le W+D.
\]
Counting these coefficient positions gives the claimed bound.
\end{proof}

We next prove \Cref{part:growth}, which gives the weighted-partition
estimates used in Step~3.

\begin{proof}[Proof of \Cref{part:growth}]
Let $p(s)$ count vectors of weight exactly $s$, setting $p(s)=0$
for $s<0$. Adding a weight-one part shows that $p(s)$ is nondecreasing
for $s\ge0$. Counting each vector with multiplicity equal to its weight gives
\[
 s\,p(s)=\sum_{j=1}^{R-1}j\sum_{k\ge1}p(s-jk)
 \le(R-1)\Lambda_R(s-1)\qquad(s\ge1).
\]
Indeed, for each $j$ and $1\le k\le\lfloor s/j\rfloor$,
\[
 j\,p(s-jk)
 \le\sum_{u=(k-1)j+1}^{kj}p(s-u),
\]
and these blocks are disjoint.
Since $\Lambda_R(s)=\Lambda_R(s-1)+p(s)$, for every integer $u\ge0$,
\[
 \frac{\Lambda_R(W+u)}{\Lambda_R(W)}
 \le\prod_{s=W+1}^{W+u}\left(1+\frac{R-1}{s}\right)
 \le\exp\left(\frac{(R-1)u}{W}\right).
\]
Also $am/H\ge1$, so
$W\ge a(R-1)m/H$.
Taking $u=bm$ proves~\eqref{part:ratio}. For the mean bound, add the slack variable
$X_0=W-\sum_{j=1}^{R-1}jV_j$.
Conditional on all $V_i$ with $i\ne j$, write $X_0+jV_j=B_j$.
Then $V_j$ is uniform on $\{0,\ldots,\lfloor B_j/j\rfloor\}$, and thus
\[
 0\le\E[X_0-jV_j]\le j-1.
\]
Summing over $j$ and using $\E X_0+\sum_jj\E V_j=W$ gives
\[
 \E X_0
 \le\frac{W+(R-1)(R-2)/2}{R}.
\]
Writing $H_{R-1}=\sum_{j=1}^{R-1}1/j\le H$, we conclude that
\[
 \E[|V|]
 \le H_{R-1}\E X_0
 \le am+\frac{RH}{2}
 \le am+\frac{m}{2H}.
\]
Finally, \eqref{part:retained-fraction} follows from \eqref{part:mean}
by Markov's inequality.
\end{proof}

\section{Classical streaming at every approximation}
\label{sec:classical-upper}
We use the notation of \Cref{sec:preliminaries}. In particular, $\kappa(\alpha)=(4\alpha-3)/(4\alpha-2)$. We set $n=q-1$ and prove \Cref{thm:classical-upper-summary}. Recall $K=\ell n/2$ and $\rho=(|\Sigma|-1)/(2|\Sigma|)$. For $P=f_u\in\Pcal_K$, write $\score(P)=\score(u)$.

\ClassicalStreamingUpper*

For fixed $\ell,\alpha$, choose $Q_{\ell,\alpha}$ so that the analysis
below applies whenever $q\ge Q_{\ell,\alpha}$. For
$3\le q<Q_{\ell,\alpha}$, store all positive addresses during the
single pass, then enumerate $\Pcal_K$ and return a polynomial of maximum
exact score. This uses $O(\ell q^{\ell+1}\log q)$ space,
$\poly(q,\ell)$ preprocessing-update time, and $q^{O(q\ell)}$
postprocessing time. Since the range of $q$ is finite for fixed
$\ell,\alpha$, its space usage is absorbed into the claimed
$O_{\ell,\alpha}(q^{1+\ell\kappa(\alpha)}\log q)$ bound.
Henceforth assume $q\ge Q_{\ell,\alpha}$.

The algorithm is given in \Cref{cu:algorithm}. It stores a random sample of positive addresses and then searches for a polynomial that matches sufficiently many of them. It uses two public parameters: a sampling probability $\eta$ and an acceptance threshold $h_0$. For all sufficiently large $n$, the parameters are chosen below
\begin{equation}
 h=\left\lceil\left(2\alpha-1+\frac1{\ln n}\right)n\right\rceil,
 \qquad d=\left\lceil\frac n{(\ln n)^4}\right\rceil,
 \qquad h_0=h-d,\qquad r=\frac n{(\ln n)^{10}}.
 \label{cu:parameters}
\end{equation}
Choose a dyadic probability $\eta$ with
$\eta_*= \left(\frac{\exp\bigl(n/[100(\ln n)^2]\bigr)} {q^K\binom nh\rho^h}\right)^{1/h}$ and $\eta/\eta_*=1+O(n^{-2})$. The dyadic probability can be represented with $O_{\ell,\alpha}(\log n)$ binary digits.

\begin{algorithm2e}[t]
\caption{One-pass classical algorithm for $3/4<\alpha<1$}
\label{cu:algorithm}
\normalfont
\DontPrintSemicolon

\KwIn{The membership stream for $(\alpha,q,\ell)$-HOPI.}
\KwOut{The coefficient vector of a polynomial in $\Pcal_K$.}

\BlankLine
\tcp{Initialize}
$W\gets\varnothing$, $B_W\gets\lceil n(|\Sigma|-1)\eta\rceil$\;
Reserve $B_W$ address slots\;
$\mathsf{failed}\gets\mathsf{false}$\;

\BlankLine
\tcp{Sample during the pass}
\ForEach{record $(i,z,c)$ in the stream}{
  \If{$c=1$ and $\mathsf{failed}=\mathsf{false}$}{
    \If{an independent Bernoulli$(\eta)$ trial succeeds}{
      \eIf{$|W|<B_W$}{
        $W\gets W\cup\{(i,z)\}$\;
      }{
        $\mathsf{failed}\gets\mathsf{true}$\;
      }
    }
  }
}

\BlankLine
\tcp{Search after the pass}
\If{$\mathsf{failed}=\mathsf{false}$}{
  \ForEach{$u\in\Fbb_q^K$ in a fixed order}{
    $P(X)\gets\sum_{r=0}^{K-1}u_rX^r$\;
    Compute the sampled score $v_W(P) = \bigl|\{i\in[n]:(i,J_\ell P(a_i))\in W\}\bigr|$\label{cu:accept}
    \If{$v_W(P)\ge h_0$}{
      \KwRet{$u$}\;
    }
  }
}

\BlankLine
\tcp{Fallback}
\KwRet{$\rv{0}\in\Fbb_q^K$}\;
\end{algorithm2e}

\subsection{Analysis of the algorithm}
\label{cu:analysis}
Throughout this argument, $W$ denotes the uncapped independent samples. The actual algorithm agrees with it unless the storage cap is exceeded;
we bound its probability after the two acceptance claims. The samples $W$ must satisfy two properties: no polynomial of score below $\alpha n$ passes the threshold, and at least one polynomial passes. For the first, a union bound suffices once we control the number of polynomials at each score. For the second, we count \emph{certificates}:
sets of $h$ positive addresses belonging to a high-score polynomial. A second-moment bound shows that some certificate survives in full with probability at least $e^{-O(r)}$. Concentration then shows that, with high probability, some certificate loses at most $d$ addresses, so its polynomial passes the lower threshold $h_0=h-d$. Since $\OPT\le n$, an output of score at least $\alpha n$ is an
$\alpha$-approximation.

\paragraph{Score estimates.}
We first state the score estimates used in both parts of the argument.
They follow from the Hermite list-recovery bound in \Cref{hendpoint:lr}
and the isolation and moment-transfer arguments in
\Cref{lem:nearopt-isolation} and \eqref{eq:nearopt-moment-identity}.
We defer their proof until after the two acceptance claims.

\begin{lemma}[Score estimates under prescribed jet constraints]
\label{cu:score-distribution}
Fix a compact interval $J\subset(1/2,1)$. There are constants
$\delta,C>0$ such that the following holds for all sufficiently large $n$.
Prescribe consistent jets at any $j\le\delta n$ rows, and choose a
polynomial uniformly among all polynomials in $\Pcal_K$ satisfying these
constraints. Let $\pi$ be its score distribution on the remaining
$N=n-j$ rows, and put
$\mu_N(w)=\Pr[\operatorname{Bin}(N,\rho)=w]$ and
$r=n/(\ln n)^{10}$. Uniformly over the input and the prescribed jets,
\begin{align}
 \pi(w)&\le F_n\mu_N(w)\quad(w/N\in J),
 &F_n&=C\sqrt n\exp\left(C\frac n{(\ln n)^{30}}\right),
 \label{cu:strong-envelope}\\
 \pi\{w:|w-zN|\le r\}&\ge e^{-O(r)}\mu_N(\lfloor zN\rfloor)
 \quad(z\in J).
 \label{cu:strong-interval}
\end{align}
The second bound also holds for any fixed positive multiple of $r$.
\end{lemma}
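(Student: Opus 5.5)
Reduce to unconstrained Hermite codes of smaller length, then compare the score distribution $\pi$ with the product-state (binomial) law via Fourier analysis. By \Cref{fact:hermite-interpolation}, a uniform $P\in\Pcal_K$ with prescribed jets at $j$ rows has the form $f_0+v\cdot g$ with $g$ uniform in $\Pcal_{K-\ell j}$. On the remaining $N=n-j$ rows, multiplying by the nonvanishing jet-multiplier of $v$ is an invertible block-diagonal map. So the jets form a uniform codeword of a shifted Hermite code $C'$ of length $N$ and dimension $K-\ell j$, with rate $\approx 1/2-\ell j/(\ell n)$. The lists $S_i$ are transformed by a block-wise affine bijection, so the score $w$ is again $\sum_i \1[c_i\in S'_i]$ for a uniform $c\in C'$. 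Since $j\le\delta n$, all properties of $C_\Hrm$ needed below persist for $C'$ with constants depending on $\delta$. These are the dual distance from \Cref{thm:hermite-compact-decoder} and \Cref{fct:hermite-duality}, and the list bounds from \Cref{hendpoint:lr} and \Cref{lem:nearopt-dual-lists}.

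\emph{Upper envelope \eqref{cu:strong-envelope}.} Write $\pi(w)=|\{c\in C':S'(c)=w\}|/|C'|$. For a threshold weight, expand the indicator $\1[S'(c)=w]$ via the generating function $\prod_i(1-x+x\1[c_i\in S'_i])$ and Fourier-expand each factor. Averaging over $c\in C'$ kills every Fourier character outside $C'^\perp$. The contribution of the zero character is exactly the product measure $\mu_N(w)$. The error comes from nonzero dual codewords, which have block weight at least about $N/2$, and each such term is suppressed exponentially in that weight.

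This alone does not give a pointwise multiplicative bound when $\mu_N(w)$ is itself exponentially small, as it is for $w/N\in J$ far from $\rho$. I would therefore tilt first. Choose $x$ (equivalently a product state $\ket\Psi$ with parameter $\theta$) so that the tilted binomial is centred at $w$, exactly as in \Cref{clm:truncation-score}. Then bound the tilted measure of codewords of score $w$ using the moment identity \eqref{eq:nearopt-moment-identity} together with the isolation of \Cref{lem:nearopt-isolation}. Isolation makes the syndrome map nearly an isometry, up to the list-size factor $2^{O(n/\log^3 n)}$ from \Cref{lem:nearopt-dual-lists}. With $r=n/(\ln n)^{10}$ extra coordinates, this loss refines to the stated $\exp(Cn/(\ln n)^{30})$. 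Untilting then yields $\pi(w)\le F_n\mu_N(w)$, where the $\sqrt n$ factor absorbs the local-limit normalisation.

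\emph{Lower bound \eqref{cu:strong-interval}.} Run the same tilted construction centred at $zN$. The second-moment argument of \Cref{clm:nearopt-score}, with variance $O(n)$ and a window of width $r\gg\sqrt n$, shows that the tilted output distribution puts mass $1-o(1)$ on $|w-zN|\le r$. Over that window, the tilt factor $c_+^{2w}c_-^{2(N-w)}$ varies by at most $e^{O(r)}$, so untilting gives $\pi\{|w-zN|\le r\}\ge e^{-O(r)}\mu_N(\lfloor zN\rfloor)$. The factor is actually $e^{-O(r)}$ times the window's binomial mass, which dominates $\mu_N(\lfloor zN\rfloor)$. Replacing $r$ by a constant multiple changes nothing.

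The main obstacle is the tilt range. $J$ is an arbitrary compact subset of $(1/2,1)$, so the required product parameter $\lambda$ approaches $1/2$, where the weight cutoff $\tau<1/2$ is needed. Uniqueness must therefore come from the list-decoding bounds rather than unique decoding, which is why the statement restricts to compact $J$. The delicate quantitative point is obtaining the sharp $\exp(O(n/(\ln n)^{30}))$ loss in the upper envelope from the augmented-syndrome isolation.
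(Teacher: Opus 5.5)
Your reduction to a shortened Hermite code $C'$ and your plan to compare $\pi$ with a binomial tilt match the paper. The gap is in the upper envelope \eqref{cu:strong-envelope}: the mechanism you propose cannot prove it.

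\textbf{The upper envelope.} Isolation and the moment identity \eqref{eq:nearopt-moment-identity} only locate the mass of the \emph{isolated} output; by themselves they give no upper bound on $\pi(w)$. For that you must bound the \emph{full} squared norm of the tilt restricted to $C'$, non-isolated Fourier indices included. Without truncation there is no usable fibre bound at high weight.

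The paper instead truncates $B_z=\prod_i(a+bg_i)$ to its Fourier terms on at most $t=\lfloor\tau N\rfloor$ blocks. The truncation $A_z$ is still a function of the score. Grouping Fourier coefficients by syndrome and applying Cauchy--Schwarz in each coset gives $\E_{c\in C'}A_z(S(c))^2\le L_n\,\E_{\mu_N}A_z^2\le L_n$. Here $L_n$ bounds the number of weight-$(\le t+2)$ words in a dual coset. At $z=w/N$, the estimates $\E_{\mu_N}(A_z-B_z)^2\le e^{-cN}$ and $\mu_N(w)B_z(w)^2=\Theta(N^{-1/2})$ give $A_z(w)^2\ge B_z(w)^2/4$. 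Hence $\pi(w)\le O(\sqrt N)\,L_n\,\mu_N(w)$.

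So $F_n\approx\sqrt n\,L_n$, and the factor $\exp(Cn/(\ln n)^{30})$ must come from the list bound itself. It cannot come from appending $r$ coordinates, for two reasons:
\begin{itemize}
\item Appended coordinates do not shrink the fibres of the original syndrome map.
\item Pulling a pointwise bound on the augmented score law back to $C'$ costs a factor exponential in the number of appended blocks, i.e.\ $e^{\Theta(r)}$.
\end{itemize}
With only \Cref{lem:nearopt-dual-lists} you would get $F_n=\exp(O(n/(\ln n)^3))$ for small $\ell$. That misses the stated envelope, and it also violates $\ln F_n=o(r)$, which \Cref{cu:analysis} uses.

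The missing ingredient is a strengthened bound $\ln L_n=O(n/(\ln n)^{30})$. The paper proves it by rerunning the interpolation argument of \Cref{app:classical-list-recovery} with $R=\lfloor n/(100\ell^2(\ln n)^{32})\rfloor$ and agreement $2\alpha_{\mathrm{LR}}-1=1/2+\gamma$, where $\delta$ is chosen so that $(1-\tau)(1-\delta)>1/2+2\gamma$. The dimension-to-rank ratio still exceeds $n$, so singleton incidences are covered, while the root count drops to $q^{O(\ell R)}$.

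\textbf{The window bound.} For \eqref{cu:strong-interval}, your untilting step is also incomplete. After projecting onto isolated indices, the output amplitude is no longer a function of the score. So you cannot divide the output law by $c_+^{2w}c_-^{2(N-w)}$ to recover $\pi$.

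The paper closes this in four moves:
\begin{enumerate}
\item Compare the isolated output with the score-symmetric truncated amplitude $A_z(S(H_R^\top y))$. By the same fibre bound, $\E_R\|U_R(\mathbf a-\Pi_R\Psi)\|^2\le L_n\Delta$, where $U_R$ is the augmented character map, $\mathbf a$ is the Fourier vector of $A_z$, and $\Delta$ is the average rejected mass.
\item Append only $m_0=\Theta(\ln L_n+\ln n)=o(r)$ blocks, just enough to make $L_n\Delta\le n^{-100}$.
\item Untilt using the pointwise bound $\mu_N(w)A_z(w)^2\le1$.
\item Delete the appended blocks. This shifts the score by at most $m_0=o(r)$ and leaves a uniform codeword of $C'$, so the result lands in the window $|w-zN|\le r$.
\end{enumerate}
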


Using the lemma, we prove the two acceptance claims.
Choose $J$ with $[2\alpha-1,\alpha]\subset\operatorname{int}(J)$.
We will use the elementary identity
\begin{equation}
 \mu_N(w)\binom wt
 =\binom Nt\rho^t\mu_{N-t}(w-t),\qquad 0\le t\le w\le N.
 \label{cu:binomialtilt}
\end{equation}

\paragraph{No low-score polynomial is accepted.}
A polynomial of score $w$ passes only if some $h_0$ of its positive
addresses are sampled. There are $q^K$ polynomials in total. Hence
\eqref{cu:strong-envelope} and \eqref{cu:binomialtilt} and a union bound give
\begin{align}
 &\Pr[\exists P:\score(P)<\alpha n,\ v_W(P)\ge h_0]\le F_n q^K\binom n{h_0}(\rho\eta)^{h_0}
 \Pr[\operatorname{Bin}(n-h_0,\rho)<\alpha n-h_0].
 \label{cu:badmass}
\end{align}
Here only $h_0\le w<\alpha n$ contribute, and our parameters ensure $w/n\in J$ for large $n$.  The choice of $h_0$ gives $h_0+(n-h_0)\rho\ge\alpha n+n/(3\ln n)$ for large $n$. Hence the binomial tail in \eqref{cu:badmass} is at most $\exp\bigl(-n/(10(\ln n)^2)\bigr)$.  By the choice of $\eta$ and $h-h_0=d$, the prefactor satisfies
\[
 F_n q^K\binom n{h_0}(\rho\eta)^{h_0}
 \le \exp\left(\frac n{100(\ln n)^2}
       +O(d\ln n)+o\left(\frac n{(\ln n)^2}\right)\right).
\]
Combining these bounds in \eqref{cu:badmass} and using
$d\ln n=o(n/(\ln n)^2)$ gives
\[
 \Pr[\exists P:\score(P)<\alpha n,\ v_W(P)\ge h_0]
 \le \exp\left(-\left(\frac9{100}-o(1)\right)
                    \frac n{(\ln n)^2}\right)
 =e^{-\Omega(n/(\ln n)^2)}.
\]
\paragraph{A high-score polynomial is accepted.}
Since $h>K/\ell$, we can use $h$ prescribed jets to identify a polynomial. Put $\bar w=h+(n-h)\rho $ and $I=\{w:|w-\bar w|\le r\}$. Our choice of $h$ gives $\bar w=\alpha n+n/(2\ln n)+O(1)$. Since $r=o(n/\ln n)$, the entire interval $I$ lies above $\alpha n$ and below $n$ for large $n$.
Let $\Hcal$ contain every $h$-subset of the positive addresses of
a polynomial whose score lies in $I$, and let $Z$ count those contained
in $W$. 

First, we can construct certificates for high-score polynomials. Since $h>K/\ell$, Hermite interpolation makes each certificate's polynomial unique.

Then we can bound the probability that a complete certificate survives. Applying \eqref{cu:strong-interval} at $\bar w/n$
and then \eqref{cu:binomialtilt} at $w_*=\lfloor\bar w\rfloor$ gives
\begin{equation}
 |\Hcal|\ge q^K\binom nh\rho^h e^{-O(r)},\qquad
 \E[Z]=|\Hcal|\eta^h
 \ge q^K\binom nh(\rho\eta)^h e^{-O(r)}.
 \label{cu:firstmoment}
\end{equation}
Indeed, throughout $I$ the certificate count per polynomial differs
from $\binom{w_*}h$ by at most $e^{O(r)}$, and
$\mu_{n-h}(w_*-h)=\Theta(n^{-1/2})$, absorbed into $e^{-O(r)}$. By the choice of the sampling probability,
\begin{equation}
 q^K\binom nh(\rho\eta)^h
 =\exp\left(\frac n{100(\ln n)^2}+o(1)\right).
 \label{cu:balance}
\end{equation}

For completeness, the second-moment calculation only needs the
following three overlap bounds. If $d_{\Hcal}(S)$ counts
certificates containing a fixed $j$-address set $S$, then
\[
 d_{\Hcal}(S)\le
 \begin{cases}
 F_n q^{K-\ell j}\binom{n-j}{h-j}\rho^{h-j},&0\le j\le\delta n,\\
 q^{K-\ell j}\binom{n-j}{h-j},&\delta n<j<K/\ell,\\
 \binom{n-j}{h-j},&K/\ell\le j\le h.
 \end{cases}
\]
The first follows from the conditional score estimate and
\eqref{cu:binomialtilt}; shrink $\delta$ so all residual scores from
$I$ remain in $J$. The other two are immediate from Hermite
interpolation. Inconsistent prescriptions contribute zero.
Expanding the overlap of two certificates yields
\begin{equation}
 \frac{\E[Z^2]}{(\E[Z])^2}
 \le\sum_{j=0}^h\eta^{-j}\binom hj
       \frac{\max_{|S|=j}d_{\Hcal}(S)}{|\Hcal|}.
 \label{cu:overlap}
\end{equation}
The three ranges are bounded, respectively, by
\[
 e^{O(r)}F_n\left(1+\frac1{|\Sigma|\rho\eta}\right)^h,
 \qquad e^{O(n)}\sum_{j>\delta n}(|\Sigma|\eta)^{-j},
 \qquad \frac{(1+\sqrt\eta)^{n+h}}{\E[Z]}.
\]
For the last expression, set $a=h-j$ and use
$\binom{n-h+a}{a}\le\binom na$ followed by the binomial theorem. To bound these terms, take logarithms in \eqref{cu:balance}.
Using $h=(2\alpha-1)n+O(n/\ln n)$ and $\ln\binom nh=O(n)$ gives
\begin{equation}
 \eta=\Theta_{\ell,\alpha}(n^{-\ell/(4\alpha-2)}),
 \qquad |\Sigma|\eta=\Theta_{\ell,\alpha}(n^{\ell\kappa(\alpha)}).
 \label{cu:size}
\end{equation}
In particular, $n/(|\Sigma|\eta)=o(r)$, $\ln F_n=o(r)$, and
$r+n\sqrt\eta=o(n/(\ln n)^2)$. Together with
\eqref{cu:firstmoment} and \eqref{cu:balance}, these estimates bound the three ranges by $e^{O(r)}$, $e^{-\Omega(n\ln n)}$, and $e^{-\Omega(n/(\ln n)^2)}$, respectively. Thus Cauchy--Schwarz gives $\Pr[Z>0]\ge e^{-O(r)}$.

We now show that a high-score polynomial is accepted with high
probability. Recall that $W$ denotes the uncapped independent sample. We identify $W$ with its vector of independent Bernoulli$(\eta)$ indicators, indexed by the positive addresses,
so its distribution is a product measure.
Let $Y(W)=\max_{E\in\Hcal}|E\cap W|$,$
 A=\{W:Y(W)\le h-d\}$,$
 B=\{W:Y(W)=h\}$. The second-moment bound above gives $\Pr[W\in B]\ge e^{-O(r)}$.
We may assume $\Pr[W\in A]>0$, since otherwise the desired failure bound is immediate. Fix $W\in B$ and choose a certificate $E\in\Hcal$ with $E\subseteq W$. Since $|E|=h$, every $W'\in A$ satisfies $|E\setminus W'|=h-|E\cap W'|
 \ge h-Y(W')\ge d$ Thus every $s\in U_A(W)$ satisfies $\sum_{e\in E}s_e\ge d$. This linear inequality remains valid on the convex hull
$V_A(W)$. Hence, by Cauchy--Schwarz, every $v\in V_A(W)$ satisfies $d\le\sum_{e\in E}v_e
 \le\sqrt{h}\,\|v\|_2$ Taking the infimum over $v\in V_A(W)$ yields
\[
 f_c(A,W)\ge\frac{d}{\sqrt h}
 \qquad\text{for every }W\in B.
\]
Applying the following theorem of
Talagrand~\cite[Theorem~4.1.1]{talagrand1995}
with this set $A$ and $t=d/\sqrt h$.
\begin{theorem}[Talagrand's convex-distance inequality
    {\cite[Theorem~4.1.1]{talagrand1995}}]
\label{thm:talagrand-convex-distance}
Let $\Omega$ be a finite set equipped with a probability measure
$\mu$, and let $\mathbb{P}=\mu^{\otimes M}$.
For $A\subseteq\Omega^M$ with $\mathbb{P}(A)>0$ and
$x\in\Omega^M$, define
\[
 U_A(x)
 =
 \left\{
 s\in\{0,1\}^M:
 \exists y\in A,\ 
 s_i=0\Rightarrow x_i=y_i
 \text{ for every }i\in[M]
 \right\},
\]
and set $V_A(x)=\operatorname{conv}U_A(x)$ and $
 f_c(A,x)=\inf_{v\in V_A(x)}\|v\|_2$. Then
\[
 \int_{\Omega^M}
 \exp\!\left(\frac{f_c(A,x)^2}{4}\right)
 \,d\mathbb{P}(x)
 \le \frac{1}{\mathbb{P}(A)}.
\]
Consequently, for every $t\ge0$, $\mathbb{P}(A)\,
 \mathbb{P}\bigl(\{x:f_c(A,x)\ge t\}\bigr)
 \le \exp(-t^2/4)$.
\end{theorem}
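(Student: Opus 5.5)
We will prove the integral bound by induction on $M$, following Talagrand's original argument; the tail consequence then follows from Markov's inequality. Indeed, $\mathbb{P}(f_c\ge t)\,e^{t^2/4}\le\int e^{f_c^2/4}\,d\mathbb{P}\le 1/\mathbb{P}(A)$, and rearranging gives the tail bound.

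\emph{Base case $M=1$.} If $x\in A$, take $y=x$ and $s=0$, so $f_c(A,x)=0$. If $x\notin A$, every $s\in U_A(x)$ must have $s_1=1$, so $f_c(A,x)=1$. Writing $u=\mathbb{P}(A)$, the integral equals $u+(1-u)e^{1/4}$. This is at most $u+(1-u)/u\cdot u^{0}\le 1/u$, because $e^{1/4}\le 2\le 1+1/u$. This settles $M=1$.

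\emph{Inductive step.} Write $x=(y,\omega)$ with $y\in\Omega^{M}$ and $\omega\in\Omega$. Define the slice $A(\omega)=\{y:(y,\omega)\in A\}$ and the projection $B=\bigcup_\omega A(\omega)$. The structural observation is twofold.
\begin{itemize}
\item If $s\in U_{A(\omega)}(y)$, then $(s,0)\in U_A(x)$.
\item If $t\in U_B(y)$, then $(t,1)\in U_A(x)$.
\end{itemize}
Taking convex combinations, $(\lambda v+(1-\lambda)w,\,1-\lambda)\in V_A(x)$ for $v\in V_{A(\omega)}(y)$, $w\in V_B(y)$ and $\lambda\in[0,1]$. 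Convexity of $\|\cdot\|_2^2$ then gives
\[
 f_c(A,x)^2\le(1-\lambda)^2+\lambda f_c(A(\omega),y)^2+(1-\lambda)f_c(B,y)^2 .
\]
Fix $\omega$ and integrate over $y$. Hölder's inequality with exponents $1/\lambda$ and $1/(1-\lambda)$, combined with the induction hypothesis for $A(\omega)$ and for $B$, yields
\[
 \int e^{f_c(A,(y,\omega))^2/4}\,d\mathbb{P}(y)\le e^{(1-\lambda)^2/4}\,\mathbb{P}(A(\omega))^{-\lambda}\,\mathbb{P}(B)^{-(1-\lambda)}
 =\frac{1}{\mathbb{P}(B)}\,e^{(1-\lambda)^2/4}\,r^{-\lambda},
\]
where $r=\mathbb{P}(A(\omega))/\mathbb{P}(B)\in[0,1]$.

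\emph{Main obstacle: the calculus lemma.} The hardest step is the elementary inequality
\[
 \inf_{\lambda\in[0,1]}e^{(1-\lambda)^2/4}\,r^{-\lambda}\le 2-r\qquad\text{for }r\in[0,1].
\]
For $r\ge e^{-1/2}$, take $\lambda=1+2\ln r$. For smaller $r$, take $\lambda=0$, which gives $e^{1/4}\le 2-r$. One then checks the resulting one-variable inequalities via logarithms and a Taylor expansion.

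\emph{Conclusion of the induction.} Granting the calculus lemma, integrate over $\omega$. The fibre integral is at most $\frac{1}{\mathbb{P}(B)}\bigl(2-\mathbb{P}(A(\omega))/\mathbb{P}(B)\bigr)$, and averaging over $\omega$ gives $\frac{1}{\mathbb{P}(B)}(2-u)$ with $u=\mathbb{P}(A)/\mathbb{P}(B)$. Since $u(2-u)\le1$, this is at most $1/\mathbb{P}(A)$, which closes the induction. Measurability is automatic because $\Omega$ is finite.
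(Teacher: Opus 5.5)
Your argument is correct, and it is exactly Talagrand's original induction on the number of coordinates: the slice $A(\omega)$ versus the projection $B$, H\"older with exponents $1/\lambda$ and $1/(1-\lambda)$, and the choice $\lambda=1+2\ln r$ or $\lambda=0$ in the calculus lemma; the paper gives no proof and simply cites \cite[Theorem~4.1.1]{talagrand1995}. Three small fixes: the base case should read $u+(1-u)e^{1/4}\le u+(1-u)(1+1/u)=1/u$ (your middle term needs $e^{1/4}\le 1/u$, which fails for $u$ near $1$); a null slice with $\mathbb{P}(A(\omega))=0$ must go to the $\lambda=0$ branch, since the induction hypothesis does not apply to it; and the remaining check $\ln(2-r)+\ln r+(\ln r)^2\ge0$ on $[e^{-1/2},1]$ holds because this function vanishes at $r=1$ and has nonpositive derivative on $(0,1]$.
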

We can obtain
\[
 \Pr[Y\le h-d]
 \le
 \frac{\exp(-d^2/(4h))}{\Pr[Y=h]}
 \le
 \exp\!\left(O(r)-\frac{d^2}{4h}\right).
\]

If $Y(W)\ge h_0=h-d$, some certificate $E\in\Hcal$ has at
least $h_0$ sampled addresses. Its associated polynomial $P$
has $\score(P)\in I$ and
$v_W(P)\ge |E\cap W|\ge h_0$.
Consequently,
\begin{equation}
 \begin{aligned}
 &\Pr[\nexists P:\score(P)\in I,\ v_W(P)\ge h_0]
\le \Pr[Y<h-d]
 \le \Pr[Y\le h-d]
 \le \exp\!\left(O(r)-\frac{d^2}{4h}\right).
 \end{aligned}
 \label{cu:existence}
\end{equation}
Since $h=\Theta(n)$ and
$d=\lceil n/(\ln n)^4\rceil$, we have
$d^2/h=\Theta(n/(\ln n)^8)$.
Moreover, $r=n/(\ln n)^{10}=o(d^2/h)$.
Thus the probability that no high-score polynomial is accepted
is at most $e^{-\Omega(n/(\ln n)^8)}$.

The uncapped sample size has distribution $\operatorname{Bin}(n(|\Sigma|-1)/2,\eta)$, and $B_W$ is at least twice its mean. A Chernoff bound gives overflow probability $e^{-\Omega(n|\Sigma|\eta)}$. Combining this with the two acceptance bounds, a union bound shows that, with probability $1-o(1)\ge0.99$, no overflow occurs, some candidate passes, and every passing candidate has score at least $\alpha n\ge\alpha\OPT$.

\paragraph{Space complexity:} The algorithm stores at most $B_W=\lceil n(|\Sigma|-1)\eta\rceil$ addresses,
each using $O(\log n+\ell\log q)$ bits. It enumerates candidates
one at a time: their coefficients and reusable jet-evaluation
workspace occupy $O(n\ell\log q)$ bits. Counters and the dyadic
sampler use $O_{\ell,\alpha}(\log n)$ additional bits. Thus, for
fixed $\ell,\alpha$, \eqref{cu:size} gives
the space bound $O_{\ell,\alpha}\left((n|\Sigma|\eta+n\ell)\log n\right)=O_{\ell,\alpha}\left(q^{1+\ell\kappa(\alpha)}\log q\right)$ since $n=q-1$.

\paragraph{Time complexity:}
Computing the public parameters to the required precision takes
polynomial time for fixed $\ell,\alpha$ and fits in the space bound
above. During the pass, each of the $nq^\ell$ records requires a
membership-bit check and, if positive, at most $O_{\ell,\alpha}(\log n)$ random bits and an address insertion.
Addresses can be appended to an array, since each occurs only once.
Thus preprocessing-update time is polynomial in $n$ and $q$ for fixed $\ell,\alpha$. 

After the pass, there are $q^K=q^{n\ell/2}$ coefficient vectors to inspect. For each, Horner evaluation of $P(a_i+T)\bmod T^\ell$
computes all $n$ jets in $O(nK\ell)$ field operations. Scanning the stored addresses against these jets then computes $v_W(P)$ in
$O(B_W\ell\log q)$ bit operations. Since $B_W$ is polynomial in $n$ for fixed $\ell$, each candidate takes polynomial time. Therefore, the postprocessing time is $q^{n\ell/2}\poly_{\ell,\alpha}(n,\log q) = q^{O(q\ell)}$.

Last, we prove the technical lemma \Cref{cu:score-distribution}.
\begin{proof}[Proof of \Cref{cu:score-distribution}]
We now derive the stated estimates. 

We first express the polynomials satisfying the prescribed jet constraints in terms of a shorter Hermite code. By Hermite interpolation, conditioning on $j$ jets leaves exactly
$q^{K-\ell j}$ polynomials, of the form $P_0+VQ$, where
$V(X)=\prod_{i\text{ prescribed}}(X-a_i)^\ell$ and
$\deg Q<K-\ell j$. Translation by $P_0$ and division by the nonzero jet of $V$ on each remaining row preserve the list density $\rho$.
Thus, after this change of coordinates, the remaining jet vectors form a linear Hermite code $C$ of length $N$.

We next use list recovery to bound the number of low-weight vectors in each dual coset. Choose a constant $\tau<1/2$ larger than $\max_{z\in J}(1/2-\sqrt{z(1-z)})$ and put $t=\lfloor\tau N\rfloor$. \Cref{fct:hermite-duality} identifies $C^\perp$
with an invertible block transformation of the degree-$(<K)$ Hermite code on the remaining rows. Its minimum block distance is $N-K/\ell+1=K/\ell-j+1$. As in \Cref{lem:nearopt-dual-lists}, a word of weight
at most $t+2$ in a fixed dual coset corresponds to a polynomial matching singleton prescriptions on at least $(1-\tau)N-2$ rows. Take $\delta,\gamma>0$ so that $(1-\tau)(1-\delta)>1/2+2\gamma$.

We need a stronger logarithmic saving than the stated list-size bound in \Cref{hendpoint:lr}. In its proof in \Cref{app:classical-list-recovery}, set
$2\alpha_{\rm LR}-1=1/2+\gamma$ and use
\[
 R=\left\lfloor\frac{n}{100\ell^2(\ln n)^{32}}\right\rfloor,
 \qquad m=\lceil R(\ln(eR))^2\rceil,
 \qquad r_{\rm der}=R+\ell-1.
\]
The dimension-to-local-rank ratio is still at least
$n^{1+\ell\kappa(\alpha_{\rm LR})}/(\ln n)^{O_{\ell,\gamma}(1)}>n$, so it covers all singleton incidences. The same degree checks and root count give at most $q^{(\ell+4)r_{\rm der}+6}$ solutions. Consequently every dual coset has at most $L_n$ words of weight at most $t+2$, with $\ln L_n=O(n/(\ln n)^{30})$.

We then prove the pointwise upper bound by truncating a product weight and controlling its squared norm on the code. Under the uniform product measure on the $N$ jet blocks, let $I_i$ indicate membership in row $i$'s list and set $g_i=(I_i-\rho)/\sqrt{\rho(1-\rho)}$. For $z\in J$, write
\[
 a=\sqrt{\rho z}+\sqrt{(1-\rho)(1-z)},\qquad
 b=\sqrt{(1-\rho)z}-\sqrt{\rho(1-z)}.
\]
Let $B_z=\prod_i(a+bg_i)$, and let $A_z$ retain only terms involving
at most $t$ blocks. Both depend only on the score $w$, and
\[
 \mu_N(w)B_z(w)^2=\Pr[\operatorname{Bin}(N,z)=w],\qquad
 \E_{\mu_N}A_z^2\le1,\qquad
 \E_{\mu_N}(A_z-B_z)^2\le e^{-cN}.
\]
The last inequality follows from the binomial tail with parameter $b^2<\tau$. Grouping Fourier coefficients by their syndrome and applying Cauchy--Schwarz gives $\E_C A_z^2\le L_n\E_{\mu_N}A_z^2\le L_n$. For $z=w/N$, the binomial mass at its mean is $\Theta(N^{-1/2})$;
the displayed error bound therefore implies
$A_z(w)^2\ge B_z(w)^2/4$. This proves
\eqref{cu:strong-envelope}.

To prove the window lower bound, we first isolate Fourier indices by augmenting the syndrome. Let $\Psi$ be the Fourier coefficient vector of
$B_z$ and $\rv{a}$ that of $A_z$. Choose a uniformly random set $R$ of
$m_0=\lceil C_0(\ln L_n+\ln n)\rceil=o(r)$ remaining coordinates, and append their blocks to the syndrome map. Let $\Pi_R$ retain the indices of weight at most $t$ that are unique in their augmented syndrome through weight $t+2$. The proof of \Cref{lem:nearopt-isolation}, using distance $K/\ell-j+1$ in place of $K/\ell+1$, gives
\[
 \Delta:=\E_R\|(I-\Pi_R)\Psi\|^2
 \le e^{-cN}+L_ne^{-c_0m_0}.
\]
Choose $C_0$ so that $L_n\Delta\le n^{-100}$.

We next transfer the second moment to show that the weighted mass is concentrated near the target score. Here $S(x)$ counts satisfied lists on the remaining rows.
For an augmented generator matrix $H_R$, use the character map $U_R\xi(y)=\sum_e\xi_e\omega^{\langle y,H_Re\rangle}$, with the \emph{uniform} measure on $y$. This map is an isometry on the isolated
support and has squared norm at most $L_n$ on weight-$\le t$ support. By the argument for \eqref{eq:nearopt-moment-identity} and the product-score variance
$Nz(1-z)\le N/4$, the isolated output has average squared mass $1-\Delta$ and score second moment at most $N/2+2N^2\Delta$ about $zN$. Since $N/r^2=o(1)$, Chebyshev's inequality puts $1-o(1)$ of this mass within $r/3$ of $zN$. Also $\E_R\|U_R(\rv{a}-\Pi_R\Psi)\|^2\le L_n\Delta=o(1)$. Using $U_R\rv{a}(y)=A_z(S(H_R^\top y))$, we obtain
\[
 \E_R\E_y\left[
 |A_z(S(H_R^\top y))|^2
 \1\{|S(H_R^\top y)-zN|\le r/3\}\right]\ge\frac12.
\]

Recover the window lower bound for the original code. Since $\mu_N(w)A_z(w)^2\le1$, the probability of this augmented window is at least $\tfrac12\min_{|w-zN|\le r/3}\mu_N(w)
 \ge e^{-O(r)}\mu_N(\lfloor zN\rfloor)$.
Removing the perturbations on the appended blocks changes the score by at most $m_0=o(r)$ and leaves a uniform codeword in $C$. This proves \eqref{cu:strong-interval}; the same argument works for a fixed multiple of $r$.
\end{proof}

\section{Efficient classical streaming at \texorpdfstring{$3/4$}{3/4} approximation}
\label{sec:c34}
We use the notation of \Cref{sec:preliminaries}. In this section, we set $n=q-1$ and prove \Cref{thm:classical-three-quarters-summary}. Recall $K=\ell n/2$. For $P=f_u\in\Pcal_K$, write $\score(P)=\score(u)$.

\ClassicalThreeQuartersUpper*

The algorithm is given in \Cref{alg:c34}. It combines interpolation with random completion, an approach related to the Prange baseline~\cite{prange1962information} discussed in~\cite{jordan2025dqi}. Interpolation guarantees satisfaction on nearly half of the rows, while random completion satisfies about half of the remaining rows.
The key streaming idea is to select interpolation constraints when their first positive records arrive, ensuring that the resulting candidates can be scored exactly without revisiting the stream.

We use parameters $r=4$ and $r\ell$ is the number of free polynomial coefficients, $R=512$ is the number of candidates, and $B=\left\lceil\log(400Rr\ell)\right\rceil$ is a sampling cutoff. Set $k_0=\frac n2-r$ to be the number of rows guaranteed to be satisfied by interpolation and $m=n-k_0=\frac n2+r$ is the number of rows not selected as interpolation constraints.

The procedure $\mathsf{Sample}_B$ attempts to sample a uniform field element using at most $B$ independent rejection trials. Each trial draws $\lceil\log q\rceil$ fair bits and accepts the result if it is less than $q$, interpreting accepted integers through a fixed bijection with $\Fbb_q$; if every trial is rejected, the procedure returns $\bot$.

\begin{algorithm2e}[t]
\small
\DontPrintSemicolon
\caption{One-pass classical algorithm for the $3/4$ approximation ratio}
\label{alg:c34}

\KwIn{The $(3/4,q,\ell)$-HOPI membership stream $(i,z,\beta)$;
      parameters $r,R,B$.}
\KwOut{The coefficient vector of a polynomial in $\Pcal_K$.}

$I\gets\varnothing$; $k_0\gets n/2-r$\;

\While{$|I|<k_0$}{
    Read the next record $(i,z,\beta)$\;
    \If{$\beta=1$ and $i\notin I$}{
        $I\gets I\cup\{i\}$; $y_i\gets z$\;
    }
}

Construct $P_0\in\Pcal_{\ell k_0}$ satisfying
$J_\ell P_0(a_i)=y_i$ for all $i\in I$\;

$V(X)\gets\prod_{i\in I}(X-a_i)^\ell$\;

\For{$j=1,\ldots,R$}{
    Sample the $r\ell$ coefficients of $Q_j$ independently
    using $\mathsf{Sample}_B$\;

    \If{a sampling call returns $\bot$}{
        Consume the remaining stream\;
        \KwRet{the coefficient vector of the zero polynomial}\;
    }

    Compute and store the full coefficient vector of
    $P_j=P_0+VQ_j$\;

    $s_j\gets k_0$\;
}

\ForEach{remaining record $(i,z,\beta)$ in the stream}{
    \If{$\beta=1$ and $i\notin I$}{
        \For{$j=1,\ldots,R$}{
            \If{$J_\ell P_j(a_i)=z$}{
                $s_j\gets s_j+1$\;
            }
        }
    }
}

$j^\star\gets
 \min\operatorname*{arg\,max}_{j\in[R]}s_j$\;

\KwRet{the stored coefficient vector of $P_{j^\star}$}\;
\end{algorithm2e}

\subsection{Analysis of the algorithm}
Fix the complete input and its order. All lists are nonempty, so the initial selection phase terminates. Let $t_0$ denote the arrival time of its last selected record. The candidates are fully expanded before the stream resumes; this moves all coefficient expansion out of postprocessing. The set $I$ and the selected jets $y_i$ are deterministic and are fixed before any candidate randomness is generated. By the selection rule,
\begin{equation}
 i\notin I
 \quad\Longrightarrow\quad
 \text{every positive record in row $i$ arrives after $t_0$}.
 \label{eq:c34:future}
\end{equation}
Indeed, an earlier positive record in an unselected row would have caused that row to be selected.

Every candidate satisfies the $k_0$ selected rows. On each
unselected row, its score is incremented exactly when its jet belongs to the corresponding list, because every address appears exactly once and \eqref{eq:c34:future} ensures that the relevant positive record has not been discarded. Thus the final counters are the exact values
$s_j=\score(P_j)$.

\paragraph{Success probability of a candidate.}
Let $\Gcal$ be the event that all $Rr\ell$ planned calls to $\mathsf{Sample}_B$ succeed, viewing their independent random tapes as sampled in advance. Each rejection trial accepts with probability $q/2^{\lceil\log q\rceil}>1/2$, so each call fails with probability at most $2^{-B}$. Conditional on $\Gcal$, the accepted field elements remain independent and uniform. We condition on $\Gcal$ throughout the following candidate analysis.

By \Cref{fact:hermite-interpolation}, the polynomials in $\Pcal_K$ satisfying the selected constraints are exactly
\[
 P=P_0+VQ,
 \qquad
 \deg Q<r\ell.
\]
Uniform $Q$ therefore gives a uniform polynomial conditioned on the
selected jets. Since $K-\ell k_0=r\ell$ and $r=4$, the conditional uniformity in \Cref{fact:hermite-interpolation} makes its jets on the
unselected rows four-wise independent and uniform. Thus, for $i \notin I$, $X_i=\1\bigl[J_\ell P(a_i)\in S_i\bigr]$ are four-wise independent Bernoulli variables with mean
$\rho=(|\Sigma|-1)/(2|\Sigma|)$.

For $S=k_0+\sum_{i\notin I}X_i$, we have $\mu=\E[S] =\frac{3n}{4}-\frac r2-\frac{n+2r}{4|\Sigma|}$ and $\sigma^2=\Var S=m\rho(1-\rho)\ge\frac n9$. The variance bound follows from $m\ge n/2$,
$1/3\le\rho<1/2$ and $|\Sigma|\ge n+1$. 

Put $Y=S-\mu$. Four-wise independence gives
\[
 \E[Y^4]
 =3\sigma^4+\sigma^2\bigl(1-6\rho(1-\rho)\bigr)
 \le4\sigma^4,
\]
where $1-6\rho(1-\rho)\le0$ on the stated range of $\rho$. Moment interpolation yields
\[
 \E|Y|
 \ge\frac{(\E Y^2)^{3/2}}{(\E Y^4)^{1/2}}
 \ge\frac\sigma2.
\]
Since $\E[Y]=0$, the variable $Y_+=\max\{Y,0\}$ satisfies $\E[Y_+]=\tfrac12\E|Y|\ge\sigma/4$ and $\E[Y_+^2]\le\sigma^2$. By the Paley--Zygmund inequality,
\[
 \Pr[Y\ge\sigma/8]
 \ge\Pr\bigl[Y_+\ge\tfrac12\E Y_+\bigr]
 \ge\frac{(\E Y_+)^2}{4\E Y_+^2}
 \ge\frac1{64}.
\]

Since $n\geq 3600$, we have $\frac{3n}{4}-\mu =\frac r2+\frac{n+2r}{4|\Sigma|} \le  \sqrt n/24 \leq \sigma/8$. Hence  
\[
 \Pr\left[S\ge \frac{3n}{4}\right] \geq \Pr[Y\ge\sigma/8]
 \ge\frac1{64}.
\]
This implies a candidate has a score of at least $3n/4$ with probability at least $1/64$. Conditional on $\Gcal$, the $R$ candidates are independent, so the probability that none reaches $3n/4$ is at most $(1-1/64)^R\le e^{-R/64} = e^{-8}$. Exact score selection returns a successful candidate whenever one exists. By the union bound and the choice of $B$,
\[
 \Pr[\Gcal^c]\le Rr\ell\,2^{-B}\le\frac1{400}.
\]
Consequently, the total failure probability is at most $e^{-8}+1/400<0.01$.

\paragraph{Space complexity}
The selected jets, $P_0$, and $V$ occupy $O(n\ell)$ field elements. Sequential Hermite interpolation uses the same workspace bound: when adding a constraint at $a_i$, correct the current interpolant by $V(X)G(X-a_i)$, where $G$ is obtained by dividing the desired jet correction by $V(a_i+T)$ modulo $T^\ell$, and then replace $V(X)$ by $V(X)(X-a_i)^\ell$. This truncated division is valid
because $V(a_i)\ne0$. The $R$ stored candidate vectors occupy $O(Rn\ell)$ field elements, and polynomial arithmetic and jet evaluation use $O(n\ell)$ field elements of reusable workspace. Row flags,
score counters, and the sampling counter require $O(n+R\log n+\log B)$ additional bits. Thus the space usage is $O\bigl(Rn\ell\log q+R\log n+\log B\bigr)$.   Since $r$ and $R$ are constants and $B=O(\log(n\ell))$, the space bound is $O(n\ell\log n)$.

\paragraph{Time complexity}
Selecting constraints, Hermite interpolation, bounded coefficient
sampling, and expanding the candidate polynomials all take polynomial
time in $n,\ell,R,B,\log q$. This also bounds the work on any arriving
record, including the record that completes the selection phase.
All scores and coefficient vectors are already stored. Selecting the largest counter costs $O(R\log n)$ bit operations, and explicitly outputting the winning vector costs $O(K\log q)$ bit operations. It requires neither further interpolation nor any additional access to the input.
Since $r$ and $R$ are constants and $B=O(\log(n\ell))$, the algorithm needs $\poly(n\ell,\log n)$ preprocessing-update-postprocessing time.

\end{document}